\newif\ifanonym
\providecommand{\algorithmname}{Algorithm}
\theoremstyle{plain}
\newtheorem{thm}{\protect\theoremname}[section]
  \newtheorem{lemma}[thm]{\protect\lemmaname}
  \newtheorem{prop}[thm]{\protect\propname}
  \newtheorem{fact}[thm]{\protect\factname}
  \newtheorem{claim}[thm]{\protect\claimname}
\providecommand{\claimname}{Claim}
\providecommand{\factname}{Fact}
\providecommand{\lemmaname}{Lemma}
\providecommand{\propname}{Proposition}
\providecommand{\theoremname}{Theorem}
\providecommand{\eqdef}{\stackrel{{\rm def}}{=}}
\def\compactify{\itemsep=0pt \topsep=0pt \partopsep=0pt \parsep=0pt}
\newcommand{\vecone}{\vec 1}
\newcommand{\allones}{\vecone}
\DeclareMathOperator*{\EX}{{\mathbb E}}
\DeclareMathOperator*{\E}{{\mathbb E}}
\DeclareMathOperator{\diag}{diag}
\DeclareMathOperator{\poly}{poly}
\DeclareMathOperator{\polylog}{polylog}
\DeclareMathOperator{\supp}{supp}
\DeclareMathOperator{\nnz}{nnz}
\DeclareMathOperator{\sgn}{sgn}
\DeclareMathOperator{\Reff}{R_{\textrm{eff}}}
\DeclareMathOperator{\hatReff}{\hat{R}_{\textrm{eff}}}
\DeclareMathOperator{\hatw}{{\hat w}}
\DeclareMathOperator{\barkappa}{{\bar\kappa}}
\newcommand{\transpose}{\mathsf{T}}
\newcommand{\tran}{\transpose}
\newcommand{\R}{\mathbb{R}}
\newcommand{\N}{\mathbb{N}}
\newcommand\norm[1]{\Vert {#1}\Vert}
\newcommand\abs[1]{\vert {#1}\vert}
\newcommand\card[1]{\vert {#1}\vert}
\newcommand{\maxx}[1]{\max\{{#1}\}}
\newcommand\tuple[1]{\langle {#1}\rangle}
\providecommand{\eqdef}{\coloneqq}
\providecommand{\set}[1]{{\{#1\}}}
\newcommand{\eps}{\epsilon}
\newcommand{\tO}{\tilde{O}}
\begin{document}

\title{On Solving Linear Systems in Sublinear Time}
\ifanonym
\author{}
\else
\author{Alexandr Andoni%
\thanks{Columbia University.
Email: \texttt{andoni@cs.columbia.edu}.}
\and Robert Krauthgamer%
\thanks{Weizmann Institute of Science.
This work was partially supported by the Israel Science Foundation grant \#897/13, by a Minerva Foundation grant, and a Google Faculty Research Award.
Part of this work was done while was visiting the Simons Institute for the Theory of Computing. 
Email: \texttt{\{yosef.pogrow,robert.krauthgamer\}@weizmann.ac.il}.
}
\and Yosef Pogrow\footnotemark[2]
}
\fi

\maketitle

\begin{abstract}
We study \emph{sublinear} algorithms that solve linear systems locally.  
In the classical version of this problem the input is a
matrix $S\in \R^{n\times n}$ and a vector $b\in\R^n$ in the range of
$S$, and the goal is to output $x\in \R^n$ satisfying $Sx=b$.  
For the case when the matrix $S$ is symmetric diagonally dominant (SDD), 
the breakthrough algorithm of Spielman and Teng [STOC 2004] approximately
solves this problem in near-linear time (in the input size which is
the number of non-zeros in $S$), and subsequent papers have further
simplified, improved, and generalized the algorithms for this setting.

Here we focus on computing one (or a few) coordinates of $x$, 
which potentially allows for sublinear algorithms.  Formally, given an
index $u\in [n]$ together with $S$ and $b$ as above, the goal is to
output an approximation $\hat{x}_u$ for $x^*_u$, where $x^*$ is a
fixed solution to $Sx=b$.

Our results show that there is a qualitative gap between SDD matrices 
and the more general class of positive semidefinite (PSD) matrices.
For SDD matrices, we develop an algorithm that
approximates a single coordinate $x_{u}$ in time that is
polylogarithmic in $n$, provided that $S$ is sparse and has a small
condition number (e.g., Laplacian of an expander graph).  The
approximation guarantee is additive $| \hat{x}_u-x^*_u | \le \epsilon
\| x^* \|_\infty$ for accuracy parameter $\epsilon>0$.  We further
prove that the condition-number assumption is necessary and tight.

In contrast to the SDD matrices, we prove that for certain PSD 
matrices $S$, the running time must be at least polynomial in $n$. 
This holds even when one wants to obtain the same additive approximation,
and $S$ has bounded sparsity and condition number.
\end{abstract}

\thispagestyle{empty}
\newpage
\setcounter{page}{1}

\section{Introduction}

Solving linear systems is a fundamental problem in many areas. 
A basic version of the problem has as input 
a matrix $A\in\mathbb{R}^{n\times n}$ and a vector $b\in\mathbb{R}^{n}$, 
and the goal is to find $x\in\mathbb{R}^{n}$ such that $Ax=b$. 
The fastest known algorithm
for general $A$ is by a reduction to matrix multiplication, and
takes $O(n^{\omega})$ time, where $\omega<2.372$ \cite{Gal14} is
the matrix multiplication exponent.
When $A$ is sparse, one can do better (by applying the conjugate gradient method to the equivalent positive semidefinite (PSD) system $A^\tran Ax=A^\tran b$, see
for example \cite{Spi10}), namely, $O(mn)$ time where $m=\nnz(A)$ is the number of non-zeros in $A$.
Note that this $O(mn)$ bound for exact solvers assumes exact arithmetic, and in practice, one seeks fast approximate solvers.

One interesting subclass of PSD matrices is that of symmetric
diagonally dominant (SDD) matrices.%
\footnote{A symmetric matrix $S\in\mathbb{R}^{n\times n}$ is called SDD 
if $S_{ii}\geq\sum_{j\neq  i}|S_{ij}|$ for all $i\in[n]$.
}
Many applications require solving
linear systems in SDD matrices, and most notably their subclass of
graph-Laplacian matrices, see e.g.~\cite{Spi10,Vishnoi13,CKMPPRX14}.
Solving SDD linear systems received a lot of attention in the past
decade after the breakthrough result by Spielman and Teng in 2004
\cite{ST04a}, showing that a linear system in SDD matrix $S$ can be
solved approximately in near-linear time
$O(m\mbox{log}^{O(1)}n\log\frac{1}{\epsilon})$, where $m=\nnz(S)$ and
$\epsilon>0$ is an accuracy parameter. A series of improvements led to
the state-of-the-art SDD solver of Cohen et al. \cite{CKMPPRX14} that
runs in near-linear time $O(m\sqrt{\log n}(\log\log
n)^{O(1)}\log\frac{1}{\epsilon})$. Recent improvements extend to
connection Laplacians \cite{kyng2016sparsified}. Obtaining similar
results for all PSD matrices remains a major open question.

Motivated by fast linear-system solvers in alternative models, 
here we study which linear systems can be solved in {\em sublinear time}. 
We can hope for such sublinear times if only one
(or a few) coordinates of the solution $x\in \R^n$ are sought.
Formally, given a matrix $S\in\mathbb{R}^{n\times n}$, a vector
$b\in\mathbb{R}^{n}$, and an index $u\in[n]$, 
we want to approximate the coordinate $x_{u}$ of 
a solution $x\in\mathbb{R}^n$ to the linear system $Sx=b$
(assume for now the solution is unique),
and we want the running time to be \emph{sublinear in $n$}. 

Our main contribution is a {\em qualitative separation} between the
class of SDD matrices and the larger class of PSD matrices, as
follows.  For well-conditioned SDD matrices $S$, we develop a
(randomized) algorithm that approximates a single coordinate $x_{u}$
fast --- in $\polylog(n)$ time.  In contrast, for some
well-conditioned PSD (but not SDD) matrices $S$,
we show that the same task requires $n^{\Omega(1)}$ time.  In addition, we
justify the dependence on the condition number. 

Our study is partly motivated by the advent of {\em quantum}
algorithms that solve linear systems in sublinear time, which were
introduced in \cite{HHL09}, and subsequently improved in
\cite{ambainis2012,CKS17}, and meanwhile used for a number
of (quantum) machine learning algorithms (see, e.g., the survey
\cite{DBLP:journals/corr/abs-1802-08227}). In particular, \cite{HHL09}
consider the system $Ax=b$ given: (1) oracle access to entries of $A$
(including fast access to the $j$-th non-zero entry in the $i$-th
row), and (2) a fast black-box procedure to prepare a quantum state
$|b\rangle=\sum_i \tfrac{b_i|i\rangle}{\|\sum_i
  b_i|i\rangle\|}$. Then, if the matrix $A$ has condition number
$\kappa$, at most $d$ non-zeros per row/column, and $\|A\|=1$, their
quantum algorithm runs in time $\poly(\kappa, d, 1/\epsilon)$, and
outputs a quantum state $|\hat x\rangle$ within $\ell_2$-distance
$\eps$ from $|x\rangle=\tfrac{\sum_i x_i|i\rangle}{\|\sum_i
  x_i|i\rangle\|}$.  The runtime was later improved in
\cite{CKS17} to depend logarithmically on $1/\epsilon$.
(The original goal of~\cite{HHL09} was different --- to output a
``classical'' value, a linear combination of $|x\rangle$ --- and for
this goal the improved dependence on $1/\epsilon$ is not possible
unless $BQP=PP$.)  
These quantum sublinear-time algorithms raise the question 
whether there are analogues classical algorithms for the same problems; 
for example, a very recent success story is 
a classical algorithm~\cite{tang18-notQuantum} 
for a certain variant of recommendation systems, 
inspired by an earlier quantum algorithm~\cite{kerenidis2016quantum}.  
Our lower bound precludes a classical analogue to the aforementioned linear-system solver, 
which works for all matrices $A$ and in particular for PSD ones.

\paragraph{Problem Formulation.}
To formalize the problem, we need to address a common issue for linear
systems --- they may be underdetermined and thus have many solutions
$x$, which is a nuissance when solving for a single coordinate.
We require that the algorithm approximates a single solution $x^*$,
in the sense that invoking the algorithm with different indices $u\in[n]$
will output coordinates that are all consistent with one ``global'' solution. 
This formulation follows the concept of Local Computation Algorithms, 
see Section~\ref{sec:related}. 

Formally, given a matrix $S\in\mathbb{R}^{n\times n}$, 
a vector $b\in\mathbb{R}^{n}$ in the range (column space) of $S$, 
and an accuracy parameter $\epsilon>0$, 
there exists $x^*\in\mathbb{R}^n$ satisfying $Sx^*=b$, 
such that upon query $u\in[n]$ the (randomized) algorithm outputs $\hat{x}_{u}$ 
that satisfies
\begin{equation} \label{eq:formulation}
  \forall u\in[n], \qquad
  \Pr\Big[|\hat{x}_{u}-x^*_{u}|\leq\epsilon||x^*||_{\infty}\Big]
  \geq
  \tfrac34.
\end{equation}
This guarantee corresponds (modulo amplification of the success probability) 
to finding a solution $\hat x\in\R^n$ 
with $\norm{\hat x-x^*}_\infty\le \epsilon||x^*||_{\infty}$.
We remark that the guarantee in~\cite{ST04a} is different, 
that $||\hat{x}-x^*||_{S}\leq\epsilon||x^*||_{S}$ 
where $||y||_{S}\eqdef \sqrt{y^\tran Sy}$.

\paragraph{Basic Notation.}
Given a (possibly edge-weighted) undirected graph $G=(V,E)$, we assume 
for convenience $V=[n]$.  Its Laplacian is the matrix $L_G\eqdef
D-A\in\R^{n\times n}$, where $A$ is the (weighted) adjacency matrix of
$G$, and $D$ is the diagonal matrix of (weighted) degrees in $G$.  It
is well-known that all Laplacians are SDD matrices, which in turn are
always PSD.

The \emph{sparsity} of a matrix is the maximum number of non-zero
entries in a single row/column.
The \emph{condition number} of a PSD matrix $S$, denoted $\kappa(S)$,
is the ratio between its largest and smallest non-zero eigenvalues.%
\footnote{Our definition is in line with the standard one, for a
  general matrix $A$, which uses singular values instead of
  eigenvalues.  If $A$ is singular, one could alternatively define
  $\kappa(A)=\infty$, which would only make the problem simpler (it
  becomes easier to be linear in $\kappa$), see e.g.~\cite{Spi10}.  }
For example, for the Laplacian $L_G$ of a $d$-regular graph $G$, let
$\mu_1\le\ldots\le\mu_n$ denote its eigenvalues, then the condition
number is $\kappa(L_G) =\Theta(\frac{d}{\mu_2})$.  This follows from
two well-known facts, that $\mu_n\in[d,2d]$, and that $\mu_2>\mu_1=0$
if $G$ is connected ($\mu_2$ is called the spectral gap).  Throughout,
$||A||$ denotes the spectral norm of a matrix $A$, and $A^+$ denotes
the Moore-Penrose pseudo-inverse of $A$.%
\footnote{For a PSD matrix $A\in\mathbb{R}^{n\times n}$, let its
  eigen-decomposition be $A=\sum_{i=1}^n \lambda_i u_iu_i^\tran$, then the
  Moore-Penrose pseudo-inverse of $A$ is $A^+=\sum_{i:\lambda_i>0}
  \tfrac{1}{\lambda_i} u_iu_i^\tran$.  
}

\subsection{Our Results}
\label{sec:results}

Below we describe our results, which include both algorithms and lower bounds. 
First, we present a polylogarithmic-time algorithm 
for the simpler case of Laplacian matrices, 
and then we generalize it to all SDD matrices.  
We further prove two lower bounds, which show that our algorithms cannot be
substantially improved to handle more general inputs or to run faster. 
The first lower bound shows that general PSD matrices require polynomial time,
thereby showing a strong separation from the SDD case.  
The second one shows that our SDD algorithm's dependence 
on the condition number is necessary and in fact near-tight.

\paragraph{Algorithm for Laplacian matrices.}
We first present our simpler algorithm for linear systems in Laplacians 
with a bounded condition number.

\begin{thm}[Laplacian Solver, see Section~\ref{sec:Laplacian}]
\label{thm:Laplacian_solver_intro}
There exists a randomized algorithm, 
that given input $\left\langle G,b,u,\epsilon,\barkappa\right\rangle$, 
where 
\begin{itemize} \compactify
\item $G=(V,E)$ is a connected $d$-regular graph given as an adjacency list,
\item $b\in\mathbb{R}^{n}$ is in the range of $L_{G}$ (equivalently, orthogonal
to the all-ones vector),
\item $u\in[n]$, $\epsilon>0$, and
\item $\barkappa\ge1$ is an upper bound on the condition number $\kappa(L_G)$,
\end{itemize}
the algorithm outputs $\hat{x}_{u}\in\mathbb{R}$ with the following guarantee.
If $x^*$ satisfies $L_{G}x^*=b$ then 
\[
  \forall u\in[n], \qquad 
  \Pr\Big[ | \hat{x}_{u}-x^*_{u} |
    \leq\epsilon\cdot\|x^*\|_\infty\Big]
  \geq
  1-\tfrac{1}{s} , 
\]
and the algorithm runs in time $O(d\epsilon^{-2}s^{3}\log s)$,
for suitable $s=O(\barkappa\log(\epsilon^{-1}\barkappa n))$. 
\end{thm}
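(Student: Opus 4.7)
The plan is to represent $x^*_u = (L_G^+ b)_u$ as a convergent matrix-power series, truncate it, and estimate each truncated term via short random walks from $u$. Since $G$ is $d$-regular we have $L_G = d(I - W)$ where $W = \tfrac{1}{d} A$ is the random-walk matrix. To handle possibly negative eigenvalues of $W$ (as in bipartite graphs), I pass to the lazy walk $M = \tfrac{1}{2}(I + W)$, whose eigenvalues restricted to $\vecone^\perp$ lie in $[0,\, 1 - \tfrac{1}{2\kappa(L_G)}]$. Since $b \in \range(L_G) = \vecone^\perp$, the Neumann identity
\[
  L_G^+ b \;=\; \tfrac{1}{2d}\sum_{k=0}^{\infty} M^k b
\]
holds and converges geometrically at rate $1-\tfrac{1}{2\barkappa}$. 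Truncating at $s = \Theta(\barkappa \log(\barkappa n/\epsilon))$ terms and using $\|b\|_2 \le \|L_G\|\,\|x^*\|_2 = O(d\sqrt{n}\,\|x^*\|_\infty)$ makes the tail contribute at most $\tfrac{\epsilon}{2}\|x^*\|_\infty$ to any coordinate, which fixes the parameter $s$ in the statement.

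The crucial observation is that $e_u^\tran M^k$ is exactly the distribution of the $k$-th vertex of a lazy random walk from $u$. Hence, simulating one walk $u = v_0, v_1, \ldots, v_{s-1}$ and forming
\[
  Y \;=\; \tfrac{1}{2d}\sum_{k=0}^{s-1} b_{v_k}
\]
yields an unbiased estimator of the truncated series. Each step of the walk inspects the length-$d$ adjacency list of the current vertex, so one sample costs $O(ds)$ time and a single query $b_{v_k}$. I then average $N$ independent samples $\bar Y$ and apply standard median-of-means amplification (an extra $O(\log s)$ factor) to reach failure probability $1/s$.

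The main technical step is the variance bound. I use the inequality $\|b\|_\infty = \|L_G x^*\|_\infty \le 2d\,\|x^*\|_\infty$, which gives the deterministic bound $|Y| \le \tfrac{s}{2d}\cdot 2d\|x^*\|_\infty = s\,\|x^*\|_\infty$, and in particular $\mathrm{Var}(Y) \le s^2\|x^*\|_\infty^2$. Choosing $N = \Theta(s^2/\epsilon^2)$ makes $\mathrm{Var}(\bar Y) \le \tfrac{1}{16}\epsilon^2\|x^*\|_\infty^2$, so by Chebyshev combined with amplification, $|\bar Y - \E[\bar Y]| \le \tfrac{\epsilon}{2}\|x^*\|_\infty$ with the required probability. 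Adding the truncation error gives the claimed accuracy, and the overall running time is $N \cdot O(ds) \cdot O(\log s) = O(d\epsilon^{-2} s^3 \log s)$. The part that I expect to require the most care is the interplay between the two error sources: the truncation bound is most naturally stated in $\ell_2$ (via the spectral-gap estimate on $M|_{\vecone^\perp}$), whereas the target guarantee is in $\ell_\infty$ relative to $\|x^*\|_\infty$; bridging this via the crude $\ell_2$-to-$\ell_\infty$ bound costs the $\sqrt n$ inside the logarithm, which is exactly what the chosen $s$ absorbs.
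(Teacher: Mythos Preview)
Your proposal is correct and follows essentially the same approach as the paper: expand $L_G^+ b$ as a Neumann series in the lazy-walk matrix, truncate after $s=O(\barkappa\log(\barkappa n/\epsilon))$ terms using the $\ell_2$ spectral-gap bound combined with a crude $\sqrt n$ factor, and estimate each term $e_u^\tran M^k b$ by random walks from $u$ using the bound $\|b\|_\infty\le 2d\|x^*\|_\infty$. The only cosmetic differences are that the paper applies Hoeffding separately at each level $k$ and takes a union bound (with $\ell=O(s^2\epsilon^{-2}\log s)$ walks reused across levels), whereas you sum along a single walk and use Chebyshev plus median-of-means; both yield the same $O(d\epsilon^{-2}s^3\log s)$ running time.
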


A few extensions of the theorem follow easily from our proof.
First, if the algorithm is given also an upper bound $B_{up}$ on $||b||_0$,  
then the expression for $s$ can be refined by replacing $n$ with $B_{up}\le n$.
Second, we can improve the runtime to $O(\epsilon^{-2}s^{3}\log
s)$ whenever the representation of $G$ allows to sample a uniformly
random neighbor of a vertex in constant time. 
Third, the algorithm has an (essentially) cubic dependence on the
condition number $\kappa(L_G)$,  
which can be improved to quadratic if we allow a preprocessing of $G$ 
(or, equivalently if we only count the number of probes into $b$). 
Later we show that this quadratic dependence is \emph{near-optimal}.

\paragraph{Algorithm for SDD matrices.}
We further design an algorithm for SDD matrices with bounded condition number. 
The formal statement appears in Theorem~\ref{thm:SDD_solver_intro}
and is a natural generalization of Theorem~\ref{thm:Laplacian_solver_intro}
with two differences.
One difference is that a natural solution to the system $Sx=b$ is $x=S^+b$, 
but our method requires $S$ to have normalized diagonal entries,
and thus we aim at another solution $x^*$, construed as follows. 
Define
\begin{equation} \label{eq:tildeS}
  D\eqdef \diag(S_{11},...,S_{nn})
  \quad \text{ and } \quad
  \tilde{S}\eqdef D^{-1/2}SD^{-1/2}, %
\end{equation}
then our linear system can be written as $\tilde{S}(D^{1/2}x)=D^{-1/2}b$,
which has a solution 
\begin{equation}
\label{eqn:normalizedSsol}
  x^* \eqdef D^{-1/2}\tilde{S}^{+}D^{-1/2}b ,
\end{equation}
that is expressed using the pseudo-inverse of $\tilde{S}$, 
rather than of $S$.

A second difference is that Theorem~\ref{thm:SDD_solver_intro} 
makes no assumptions about the multiplicity of 
the eigenvalue $0$ of $\tilde{S}$, 
e.g., if $S$ is a graph Laplacian, then the graph need not be connected.  
The assumptions needed to achieve a polylogarithmic time, 
beyond $\tilde S$ having a bounded condition number, 
are only that a random ``neighbor'' in the graph corresponding to $S$
can be sampled quickly, 
and that $\frac{\max_{i\in[n]}D_{ii}}{\min_{i\in[n]}D_{ii}}\leq\mbox{poly}(n)$,
which holds if $S$ has polynomially-bounded entries. 

\paragraph{Lower Bound for PSD matrices.}
Our first lower bound shows that the above guarantees 
cannot be obtained for a general PSD matrix, 
even if we are allowed to preprocess the matrix $S$, 
and only count probes into $b$.
The proof employs a PSD matrix $S$ that is invertible 
(i.e., positive definite), 
in which case the linear system $Sx=b$ has a unique solution $x=S^{-1}b$.

\begin{thm}[Lower Bound for PSD Systems, see Section~\ref{sec:lbPSD}]
\label{thm:lbPSD}
For every large enough $n$, 
there exists an invertible PSD matrix $S\in\R^{n\times n}$ 
with uniformly bounded sparsity $d=O(1)$ and condition number $\kappa(S)\le 3$,
and a distinguished index $u\in[n]$, with the following property.  
Every randomized algorithm that, given as input $b\in \R^n$,
outputs $\hat x_u$ satisfying
$$
  \Pr \Big[ |\hat x_u-x^*_u|\le \tfrac15 \|x^*\|_\infty \Big] 
  \ge 
  \tfrac67 ,
$$
where $x^*=S^{-1}b$,
must probe $n^{\Omega(1/d^2)}$ coordinates of $b$ (in the worst case).
\end{thm}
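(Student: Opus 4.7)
The plan is a Yao-style query-complexity lower bound. I will exhibit a single sparse PSD matrix $S$ with $\kappa(S)\le 3$, together with two input distributions $\mathcal D_+,\mathcal D_-$ on the vector $b$, such that any deterministic algorithm probing fewer than $q=n^{\Omega(1/d^2)}$ coordinates of $b$ cannot distinguish the two distributions with constant advantage, yet the resulting values $x^*_u=(S^{-1}b)_u$ differ by more than $\tfrac25\|x^*\|_\infty$; Yao's minimax then transfers the bound to randomized algorithms. The matrix $S$ will be built from a sparse combinatorial gadget --- concretely, a $d$-regular bipartite expander of girth $\Omega(\log_d n)$, such as a Ramanujan graph --- and the hardness will follow the indexing paradigm: the distributions will hide a single nonzero coordinate of $b$ inside a large set $T\subseteq[n]$ whose size governs the lower bound.

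\paragraph{Structure of $S^{-1}$ and the hiding set.}
Writing $S \eqdef \alpha I + \beta M$, where $M$ is the symmetric bipartite adjacency matrix of the gadget and $\alpha,\beta>0$ are calibrated so that $S \succ 0$ and $\kappa(S)\le 3$, the Neumann series
\[
  S^{-1} \;=\; \alpha^{-1}\sum_{k\ge 0}(-\beta/\alpha)^{k} M^k
\]
expresses each entry $(S^{-1})_{u,v}$ as a signed weighted walk-count in the gadget. Fix the queried index $u$ and a radius $r<g/2$ so that the ball $B_r(u)$ is a tree, and let $T$ be the set of vertices at graph-distance exactly $r$ from $u$. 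By the tree structure, each $v\in T$ is reached by a unique length-$r$ walk from $u$, and careful bookkeeping of non-backtracking walks of length $r,r+2,\dots$ yields a common value $\gamma$ for $(S^{-1})_{u,v}$ for every $v\in T$; the same tree argument, re-centered at $v$, is used to control $\|(S^{-1})_{\cdot,v}\|_\infty=O(|\gamma|)$, so that a single $\pm e_v$ perturbation of $b$ moves $x^*_u$ by a constant fraction of $\|x^*\|_\infty$. Choosing $r$ to saturate the decay rate dictated by $\kappa(S)\le 3$ yields $|T|=d^{\Theta(r)}=n^{\Omega(1/d^2)}$; the factor $1/d^2$ emerges from the interplay between the admissible decay rate $\beta/\alpha$ and the ball-growth rate $d$ of the gadget.

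\paragraph{Indistinguishability and the main obstacle.}
With this plateau, define two distributions $\mathcal D_\sigma$ for $\sigma\in\{+1,-1\}$: sample $v\in T$ uniformly and set $b=\sigma e_v$. Under $\mathcal D_\sigma$, $x^*_u=\sigma\gamma$ while $\|x^*\|_\infty=O(|\gamma|)$, so no single value of $\hat x_u$ can be $\tfrac15\|x^*\|_\infty$-close to both $+\gamma$ and $-\gamma$ (provided the hidden constants make the gap exceed $\tfrac25\|x^*\|_\infty$). A deterministic algorithm probing a fixed set $Q$ of $q$ coordinates of $b$ sees identical all-zero transcripts under $\mathcal D_+$ and $\mathcal D_-$ unless $v\in Q$, an event of probability at most $q/|T|$; thus its overall success probability is at most $\tfrac12+q/(2|T|)$, and forcing $\ge 6/7$ yields $q=\Omega(|T|)=n^{\Omega(1/d^2)}$. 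The main obstacle is the plateau analysis itself: simultaneously achieving (i) $|T|=n^{\Omega(1/d^2)}$, (ii) a common magnitude $|\gamma|$ for $(S^{-1})_{u,v}$ across $v\in T$, and (iii) $\|(S^{-1})_{\cdot,v}\|_\infty$ within a constant factor of $|\gamma|$, all while obeying $\kappa(S)\le 3$, which tightly caps the decay rate $\beta/\alpha$ of the Neumann terms and pulls against a large $|T|$. Balancing these three demands against $d$ and $g$ is the delicate calculation on which the $1/d^2$ exponent ultimately rests.
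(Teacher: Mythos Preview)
Your reduction has a genuine gap at the step ``$\|(S^{-1})_{\cdot,v}\|_\infty=O(|\gamma|)$'', and this step cannot be repaired within the single-spike framework $b=\sigma e_v$. For \emph{any} invertible PSD matrix $S$ with $\kappa(S)\le 3$ one has
\[
  (S^{-1})_{v,v}=e_v^\tran S^{-1}e_v\ \ge\ \lambda_{\min}(S^{-1})=\tfrac{1}{\lambda_{\max}(S)},
\]
whereas your $\gamma=(S^{-1})_{u,v}$ is governed by the Neumann tail from level $r$ onward: with $S=\alpha I+\beta M$ and $\beta d/\alpha\le\tfrac12$ (forced by $\kappa(S)\le 3$ when $M$ is a $d$-regular adjacency matrix), one gets $|\gamma|\le \alpha^{-1}\sum_{k\ge r}(\beta/\alpha)^k(M^k)_{u,v}\le \alpha^{-1}\cdot 2^{-r+1}$. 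Hence $(S^{-1})_{v,v}/|\gamma|\ge \Omega(2^{r})$, and for $b=\sigma e_v$ the solution satisfies $\|x^*\|_\infty\ge |x^*_v|=(S^{-1})_{v,v}\gg |x^*_u|=|\gamma|$. An algorithm that always outputs $\hat x_u=0$ then meets the guarantee $|\hat x_u-x^*_u|\le\tfrac15\|x^*\|_\infty$ without a single probe, so your two distributions are not separated by the task. The ``tree argument re-centered at $v$'' cannot help, because the $k=0$ term in the Neumann series already contributes $\alpha^{-1}$ to the diagonal.

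The paper circumvents exactly this obstacle, and does so by two coupled changes. First, instead of an unsigned adjacency matrix it uses a \emph{signed} one, $A_1-A_2$, whose spectral norm is $O(\sqrt d)$ rather than $d$; this is what permits a well-conditioned $S=\mu I-(A_1-A_2)$ with $\mu\ll d$. Second, and more importantly, $b$ is not a single spike but is $\pm 1$ on \emph{all} of $V_r$ with a small bias $\delta$ toward a hidden sign $\sigma$. Then each coordinate $x^*_w$ is a signed average of $\Theta(|V_r|)$ contributions, and concentration (Lemma on the variables $Z_u$) drives $\|x^*\|_\infty$ down to the same scale as $|x^*_{\hat w}|$; the paper stresses that the bound on $\|x^*\|_\infty$ is precisely where $\mu\ll d$ is needed. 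The query lower bound is then a biased-coin learning bound ($\Omega(\delta^{-2})$ probes), not an indexing bound. Your indexing template does not give access to this cancellation mechanism, which is the crux of the argument.
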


\paragraph{Dependence on Condition Number.} 
The second lower bound shows that our SDD algorithm has a {\em
  near-optimal} dependence on the condition number of $S$, even if we
are allowed to preprocess the matrix S, and only count probes into
$b$. The lower bound holds even for Laplacian matrices.

\begin{thm}[Lower Bound for Laplacian Systems, see Section~\ref{sec:lbK2}]
\label{thm:lbK2}
For every large enough $n$ and $k\le O(n^{1/2}/\log n)$,
there exist an unweighted graph $G=([n],E)$ with maximum degree $4$
and whose Laplacian $L_G$ has condition number $\kappa(L_G)=O(k)$, 
and an edge $(u,v)$ in $G$, which satisfy the following.  
Every randomized algorithm that, given input $b$ in the range of $L_G$,
succeeds with probability $2/3$ to approximate $x_u-x_v$
within additive error $\eps\|x^*\|$ for $\eps=\Theta(1/\log n)$, 
where $x^*\in\R^n$ is any solution to $L_G x=b$, 
must probe $\tilde\Omega(k^2)$ coordinates of $b$ (in the worst case).
\end{thm}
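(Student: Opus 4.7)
The plan is to prove the lower bound via Yao's minimax principle: I would exhibit a distribution on inputs $b\in\mathrm{range}(L_G)$ such that no deterministic algorithm making $o(k^2/\polylog n)$ probes into $b$ approximates $x_u-x_v$ within $\eps\|x^*\|$ with success probability above, say, $1/2$.

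For the graph I would take $G$ with a bottleneck of length $m=\Theta(\sqrt{k})$ so that $\kappa(L_G)=\Theta(k)$ and the maximum degree is at most $4$. The first candidate is a ``caterpillar'': a path $P$ of length $m$ with pendant leaves distributed along $P$ so as to reach $n$ vertices. Picking $(u,v)$ as the middle edge of $P$ makes it a bridge, so Kirchhoff's current law yields the clean identity
\[
  x_u - x_v \;=\; \sum_{i\in A} b_i ,
\]
where $A$ is one side of the $(u,v)$-cut. This reduces the task to estimating a linear functional of $b$ from coordinate queries.

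The hard distribution would place the support of $b$ on both sides of the cut with signed values drawn from a hidden two-hypothesis distribution (for example, signs aligned versus cancelling on a random subset of coordinates), tuned so that the signal $\sum_{i\in A} b_i$ differs across the two hypotheses by more than $2\eps\|x^*\|$ with constant probability. The probe lower bound would then follow from a standard round-by-round KL (or $\chi^2$) divergence argument for adaptive queries: with $q$ probes, the total information gained about the hidden hypothesis is at most a $q$-fold product of small per-query divergences, and this remains $o(1)$ unless $q=\tilde\Omega(k^2)$.

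The main obstacle is calibrating the distribution so that the signal and the threshold $\eps\|x^*\|$ land in the right regime. In a ``pure'' bottleneck construction $\|x^*\|_\infty$ tends to scale as a power of $m$, which can easily swamp the tiny signal $x_u-x_v$; overcoming this likely requires either (i) restricting $b$ to the spectral components of $L_G$ above $\lambda\gtrsim 1/m$, so that $\|L_G^+ b\|$ is controlled, or (ii) attaching expander-like lobes to the caterpillar that short-circuit long-range potential build-up while preserving the bottleneck. A secondary subtlety is that the theorem permits algorithm preprocessing of $G$ and only counts probes to $b$, so all hardness must come from uncertainty in $b$ alone; this is why the reduction is phrased as a coordinate-oracle estimation problem with $G$ hard-coded.
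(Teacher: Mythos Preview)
Your proposal has a real gap at the very first step: the caterpillar cannot meet the parameter constraints. With maximum degree $4$, an internal path vertex has at most two pendant leaves, so a caterpillar on a path of length $m$ has at most $O(m)$ vertices in total. To reach $n$ vertices you need $m=\Theta(n)$, and then $\kappa(L_G)=\Theta(m^2)=\Theta(n^2)$, not $O(k)$ for $k\le O(n^{1/2}/\log n)$. Conversely, taking $m=\Theta(\sqrt{k})$ yields a graph on only $O(\sqrt{k})$ vertices. So the construction as stated is inconsistent.

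Your fix (ii), attaching expander lobes, is the right instinct but does not rescue the single-bridge picture. Two expanders of size $\Theta(n)$ joined by one bridge edge have conductance $\Theta(1/n)$ and hence $\kappa=\Theta(n)$; this is precisely the easy $\Omega(\kappa)$ lower bound the paper mentions, not $\Omega(\kappa^2)$. To force $\kappa=\Theta(k)$ with $k\ll n$ you must connect the two expanders by roughly $n/k$ matching edges, as the paper does. But once there are many bridge edges, the identity $x_u-x_v=\sum_{i\in A}b_i$ fails for any single edge $(u,v)$; only the \emph{sum} of $x_u-x_v$ over all bridge edges equals $\sum_{i\in A}b_i$. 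The paper's actual work is then to show that for a \emph{dense} random $b\in\{\pm1\}^n$ with bias $p\approx\sqrt{\log n}/k$, one can control $\|x^*\|_\infty$ and $\max_{(u,v)}|x_u-x_v|$ tightly enough (Lemma~\ref{lem:maxDiff}, via a random-walk coupling between $G$ and the underlying expander) that a \emph{single} bridge edge already separates the biased and balanced cases with probability $\Omega(1/\log n)$. That $\|x^*\|_\infty$ bound is the heart of the argument, and nothing in your outline supplies it once the clean bridge identity is gone. Your fix (i), spectrally restricting $b$, would not help either: the paper needs $b$ to look coordinatewise like independent biased coins so that the $\Omega(1/p^2)$ distinguishing bound applies, and projecting onto a spectral subspace destroys that product structure.
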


\paragraph{Applications.}
An example application of our algorithmic results 
is computing the effective resistance between a pair of vertices $u,v$ 
in a graph $G$ (given $u$,$v$ and $G$ as input). 
It is well known that the effective resistance, denoted $\Reff(u,v)$,
can be expressed as $x_{u}-x_{v}$, where $x$ solves $L_{G}x=e_{u}-e_{v}$.
The spectral-sparsification algorithm of Spielman and Srivastava~\cite{SS11} 
relies on a near-linear time algorithm (that they devise)
for approximating the effective resistances of all edges in $G$. 
For unweighted graphs, there is also a faster algorithm~\cite{Lee14} 
that runs in time $\tO(n)$, which is sublinear in the number of edges, 
and approximates effective resistances within a larger factor $\polylog(n)$. 
In a $d$-regular expander $G$, 
the effective resistance between every two vertices is $\Theta(1/d)$,
and our algorithm in Theorem~\ref{thm:Laplacian_solver_intro} 
can quickly compute an arbitrarily good approximation (factor $1+\epsilon$). 
Indeed, observe that we can use $B_{up}=2$, 
hence the runtime is $O(\frac{1}{\epsilon^2}\polylog\frac{1}{\epsilon})$, 
independently of $n$.
The additive accuracy is $\epsilon\cdot\max_{ij\in E(G)} |x_i-x_j|$;
in fact, each $x_i-x_j$ is the potential difference between $i$ and $j$
when a potential difference of $\Reff(u,v)$ is imposed between $u$ and $v$,
thus 
$\max_{ij\in E(G)} \abs{x_i-x_j} \leq {x_u-x_v} = \Reff(u,v)$,
and hence with high probability the output actually achieves 
a multiplicative guarantee $\hatReff(u,v)\in (1\pm\epsilon)\Reff(u,v)$.

\subsection{Technical Outline}
\label{subsec:techniques}

\paragraph{Algorithms.}
Our basic technique relies on a classic idea of von Neumann and
Ulam~\cite{FL50,Was52} for estimating a matrix inverse by a power series; 
see Section~\ref{sec:related} for a discussion of related work.
Our starting point is the identity
\[
  \forall X\in\R^{n\times n}, \norm{X}<1, 
  \qquad 
  (I-X)^{-1}=\sum_{t=0}^{\infty}X^{t}. 
\]
(Recall that $||X||$ denotes the spectral norm of a matrix $X$.)
Now given a Laplacian $L=L_{G}$ of a $d$-regular graph $G$, 
observe that $\frac{1}{d}L=I-\frac{1}{d}A$, 
where $A$ is the adjacency matrix of $G$.  
Assume for a moment that $||\frac{1}{d}A||<1$; 
then by the above identity, 
$(\frac{1}{d}L)^{-1}=(I-\frac{1}{d}A)^{-1}=\sum_{t=0}^{\infty}(\frac{1}{d}A)^{t}$,
and the solution of the linear system $Lx=b$ would be
$x^*=L^{-1}b=\frac{1}{d}\sum_{t=0}^{\infty}(\frac{1}{d}A)^{t}b$.  
The point is that the summands decay exponentially because
$||(\frac{1}{d}A)^{t}b||_2\leq||(\frac{1}{d}A)^{t}||\cdot||b||_2
\leq||(\frac{1}{d}A)||^{t}\cdot||b||_2$.  
Therefore, we can estimate $x^*_{u}$ using the first $t_{0}$ terms, i.e.,
$\hat{x}_{u}=e_{u}^\tran \frac{1}{d}\sum_{t=0}^{t_{0}}(\frac{1}{d}A)^{t}b$, 
where $t_{0}$ is logarithmic (with base $\norm{\frac{1}{d}A}^{-1}>1$).
In order to compute each term $e_{u}^\tran \frac{1}{d}(\frac{1}{d}A)^{t}b$, 
observe that $e_{u}^\tran (\frac{1}{d}A)^{t}e_{w}$ is exactly the probability 
that a random walk of length $t$ starting at $u$ will end at vertex $w$. 
Thus, if we perform a random walk of length $t$ starting at $u$,
and let $z$ be its (random) end vertex, then 
\[
  \EX_z[b_{z}]
  =\sum_{w\in V} e_{u}^\tran (\tfrac{1}{d}A)^{t}e_{w}b_{w}
  =e_{u}^\tran (\tfrac{1}{d}A)^{t}b.
\]
If we perform several random walks (specifically,
$\poly(t_{0},\frac{1}{\epsilon})$ walk suffice), average the
resulting $b_{z}$'s, and then multiply by $\frac{1}{d}$, then with
high probability, we will obtain a good approximation to
$e_{u}^\tran \frac{1}{d}(\frac{1}{d}A)^{t}b$.

As a matter of fact, we have a non-strict inequality $||\frac{1}{d}A|| \le 1$,
because of the all-ones vector $\allones\in\R^n$.
Nevertheless, we can still get a meaningful result 
if all eigenvalues of $A$ except for the largest one are smaller than $d$ 
(equivalently, the graph $G$ is connected).
First, we get rid of any negative eigenvalues by the standard trick of
considering $(dI+A)/2$ instead of $A$, which is equivalent to adding
$d$ self-loops at every vertex.  
We may assume $b$ is orthogonal to $\allones$ 
(otherwise the linear system has no solution), 
hence the linear system $Lx=b$ has infinitely many solutions, 
and since $L$ is PSD, we can aim to estimate the specific solution 
$x^* \eqdef L^{+}b$ by $\frac{1}{d}\sum_{t=0}^{t_{0}}(\frac{1}{d}A)^{t}b$.  
Indeed, the idealized analysis above still applies by restricting all our
calculations to the subspace orthogonal to $\allones$.
This is carried out in Theorem~\ref{thm:Laplacian_solver_intro}. 

To generalize the above approach to SDD matrices, we face three issues.  
First, due to the irregularity of general SDD matrices, 
it is harder to properly define the equivalent random walk matrix.  
We resolve this by normalizing the SDD matrix $S$ 
into $\tilde{S}$ defined in~\eqref{eq:tildeS}, 
and solving the equivalent (normalized) system $\tilde{S}(D^{1/2}x)=D^{-1/2}b$.  
Second, general SDD matrices can have {\em positive} off-diagonal elements, 
in constrast to the Laplacians. 
To address this, we interpret such entries as negative-weight edges, 
and employ random walks that ``remember'' the signs of the traversed edges. 
Third, diagonal elements may strictly dominate their row, which we address by
terminating the random walk early with some positive probability.

\paragraph{Lower Bound: Polynomial Time for PSD Matrices.}
We first discuss our lower bound for PSD matrices, which is one of the
main contributions of our work. The goal is to exhibit a matrix $S$
for which estimating a coordinate $x^*_u$ of the solution $x^*=S^{-1}b$ 
requires $n^{\Omega(1)}$ probes into the input $b$. 

Without the sparsity constraint on $S$, one can deduce such a lower bound
via a reduction to the communication complexity of the \emph{Vector in
  Subspace Problem} (VSP), in which Alice has an $n/2$-dimensional
subspace $H\subset \R^n$, Bob has a vector $b\in\R^n$, and their goal
is to determine whether $b\in H$ or $b\in H^\perp$.  The randomized
communication complexity of this promise problem is between
$\Omega(n^{1/3})$ \cite{KR11b} and $O(\sqrt{n})$ \cite{Raz99} (while
for quantum communication it is $O(\log n)$).  To reduce this problem
to linear-system solvers, let $P_H\in R^{n\times n}$ be the projection
operator onto the subspace $H$, and set $S=I + P_H$.  Consider the
system $Sx=b$, and note that Alice knows $S$ and Bob has $b$.
It is easy to see that the unique solution $x^*$ is either $b$ or
$\tfrac12 b$, depending on whether $b\in H^\perp$ or $b\in H$.  Alice
and Bob could use a solver that makes few probes to $b$, as follows.
Bob would pick an index $u\in[n]$ that maximizes $\abs{b_u}$
(and thus also $\abs{x_u}$), and send it to Alice.  
She would then apply the solver, 
receiving from Bob only a few entries of $b$, 
to estimate $x_u$ within 
additive error $\tfrac12 \norm{x}_\infty$, %
which suffices to distinguish the two cases.  This matrix $S$ is PSD
with condition number $\kappa(S)\le 2$.  However it is dense.

We thus revert to a different approach of proving it from basic
principles.  Our high-level idea is to take a $2d$-regular expander
and assign its edges with signs ($\pm 1$) that are random but balanced everywhere (namely, at every vertex, the incident edges are split evenly between positive and negative).  The signed adjacency matrix
$A\in\set{-1,0,+1}^{n\times n}$ should have spectral norm $\mu\eqdef \norm{A}
=O(\sqrt{d})$, and then instead of the (signed) Laplacian $L=(2d)I-A$, 
we consider $S=2\mu I - A$, 
which is PSD with condition number $\kappa(S)\le 3$, as
well as invertible and sparse.  Now following similar arguments as in
our algorithm, we can write $S^{-1}$ as a power series of the matrix
$A$, and express coordinate $x^*_u$ of the solution $x^*=S^{-1}b$ via
$\EX_z[b_z]$ where $z$ is the (random) end vertex of a random walk 
that starts at $u$ and its length is bounded by some $t_0$
(performed in the ``signed'' graph corresponding to $A$).  
Now if the graph around $u$ looks
like a tree (e.g., it has high girth), then not-too-long walks are
highly symmetric and easy to count.  We now let $b_v$ be non-zero
only at vertices $v$ at distance exactly $t_0$ from $u$, and for these
vertices $b_v$ is set to $+1$ or $-1$ at random but with a small bias $\delta$ towards one of the values.  Some calculations show that $\sgn(\EX_z[b_z])$, and
consequently $\sgn(x^*_u)$, will be according to our bias (with high
probability), however discovering this $\sgn(x^*_u)$ via probes to $b$ 
is essentially the problem of learning a biased coin, which requires
$\Omega(\delta^{-2})$ coin observations. An additional technical
obstacle is to prove that the solution $x^*$ has a small $\ell_\infty$-norm, 
so that we can argue that an $\tfrac15 \norm{x^*}_\infty$-additive error 
to $x^*_u$ will not change its sign. Overall, we show we can set $t_0=\Omega(\log n)$ and $\delta \approx ((2d-1)^{t_0})^{-1/2}$, 
thus concluding that the algorithm must observe 
$\Omega(\delta^{-2})=n^{\Omega(1)}$ entries of $b$.

It is instructive to ask where in the above argument is it crucial to
have $\mu=O(\sqrt{d})$, because if it were also valid for $\mu=d$, in
which case the matrix $S=2\mu I - A$ is SDD, then it would contradict
our own algorithm for SDD matrices.  The answer is that $\mu\ll 2d$ is
required to bound $\norm{x^*}_\infty$, specifically in the analysis
that follows immediately after Eqn.~\eqref{eq:alpha}.

\paragraph{Lower Bound: Quadratic Dependence on Condition Number.}
We now outline the ideas to prove the $\tilde\Omega(\kappa^2)$ lower
bound even for Laplacian systems with condition number $\kappa$. First
we note that it is relatively straight-forward to prove that a {\em
  linear} dependence on the condition number is necessary. Namely,
consider a dumbbell graph (two 3-regular expanders connected by a
bridge edge $(u,v)$), for which we need to estimate $x^*_u-x^*_v$. For
$b$ defined as $b=e_i-e_j$, the value of $x^*_u-x^*_v$ will be
non-zero iff vertices $i,j$ are on the opposite sides of the
bridge. To determine the latter, one requires $\Omega(n)$ queries into
$b$. Since this graph has a condition number of $O(n)$, we obtain an
$\Omega(\kappa)$ lower bound.

The quadratic lower bound requires both a different graph and a
different distribution over $b$. We use the following graph $G$ with
condition number $O(k)$: take two 3-regular expanders and connect
them with $n/k$ edges (``bridge edges''). The vector $b\in
\{-1,+1\}^n$ will be {\em dense} and in particular it is either: 1)
balanced, i.e., $\sum b_i$ on each expander is zero, or 2) unbalanced,
i.e., each $b_i$ is chosen $\pm1$ with a bias $p\approx 1/k$ towards
$+1$ on the first expander, and towards $-1$ on the second one. Now,
as above, it is simple to prove that {\em on average over the bridge
  edges $(u,v)$}: 1) in the balanced case, the average of
$x^*_u-x^*_v$ must be zero, and 2) in the unbalanced case, the average
must be $\Omega(1)$. However, the main challenge is that the actual
values may differ from the average --- e.g., even in the balanced
case, each bridge edge $(u,v)$ will likely have non-zero value of
$x^*_u-x^*_v$. Nonetheless, we manage to prove an upper bound on the
maximum value of $|x^*_u-x^*_v|$ over all edges $(u,v)$ (as in the
previous lower bound, we need to bound $\|x^*\|_\infty$ as well). For
the latter, we need to again analyze $\E_z[b_z]$ where $z$ is the
endpoint of a random walk of some fixed length $i\ge 1$ starting from
$u$ in the graph $G$. Since the vector $b$ is not symmetric over the
graph $G$, a direct analysis seems hard --- instead we estimate
$\E_z[b_z]$ via a coupling of such walks in $G$ with random walks in
an expander, which is amenable to a direct analysis.

\subsection{Related Work}
\label{sec:related} 

The idea of approximating the inverse $(I-X)^{-1} = \sum_{t=0}^\infty
X^t$ (for $||X||<1$) by random walks dates back to von Neumann and
Ulam~\cite{FL50,Was52}.  While we approximate each power $X^t$ by
separate random walks of length $t$ and truncate the tail (powers
above some $t_0$), their method employs random walks whose length is
random and whose expectation gives exactly the infinite sum, 
achieved by assigning some probability to terminate the walk at each step,
and weighting the contributions of the walks accordingly (to correct the
expectation).

The idea of approximating a generalized inverse $L^{*}$ of $L=dI-A$ 
by the truncated series
$\frac{1}{d}\sum_{t=0}^{t_{0}}(\frac{1}{d}A)^{t}$ on directions that
are orthogonal to the all-ones vector was recently used by Doron, Le
Gall, and Ta-Shma \cite{DGT17} to show that $L^{*}$ can be
approximated in probabilistic log-space.  However, since they wanted
to output $L^{*}$ explicitly, they could not ignore the all-ones
direction and they needed to relate $L^{*}$ to
$\frac{1}{d}\sum_{t=0}^{\infty}(\frac{1}{d}A)^{t}$ by ``peeling off''
the all-ones direction, inverting using the infinite sum formula, and
then adding back the all-ones direction.

The idea of estimating powers of a normalized adjacency matrix
$\frac{1}{d}A$ (or more generally, a stochastic matrix) by performing
random walks is well known, and was used also in~\cite{DGT17} mentioned above,
and in~\cite{DST17}.  Chung and Simpson~\cite{CS15} used it in a context
that is related to ours, of solving a Laplacian system $L_Gx=b$ but
with a boundary condition, namely, a constraint that $x_i=b_i$ for all
$i$ in the support of $b$.  
Their algorithm solves for a subset of the coordinates $W\subseteq V$, 
i.e., it approximates $x|_{W}$ (the restriction of $x$ to coordinates in $W$) where $x$ solves $Lx=b$ under the boundary condition.  
They relate the solution $x$ to the Dirichlet heat-kernel
PageRank vector, which in turn is related to an infinite power series
of a transition matrix (specifically, to $ f^\tran e^{-t(I-P_W)} = e^{-t}
f^\tran \sum_{k=0}^{\infty}\frac{t^k}{k!}P_W^k $ where $P_W$ is the
transition matrix of the graph induced by $W$, $t\in\R$, and
$f\in\R^{|W|}$), and their algorithm uses random walks to approximate
the not-too-large powers of the transition matrix, proving that the
remainder of the infinite sum is small enough.

Recently, Shyamkumar, Banerjee and Lofgren \cite{SBL16} considered 
a related matrix-power problem,
where the input is a matrix $A\in \R^{n\times n}$, a power $\ell\in\N$, 
a vector $z\in\R^n$, and an index $u\in [n]$, 
and the goal is to compute coordinate $u$ of $A^\ell z$.
They devised for this problem a sublinear (in $\nnz(A)$) algorithm, 
under some bounded-norm conditions and assuming $u\in[n]$ is uniformly random.
Their algorithm relies, in part, on von Neumann and Ulam's technique 
of computing matrix powers using random walks, but of prescribed length.
It can be shown that approximately solving positive definite systems
for a particular coordinate is reducible to the matrix-power problem.%
\footnote{Let $Ax=b$ be a linear system where $A$ is positive definite.
Let $\lambda$ be the largest eigenvalue of $A$. Let $A'\eqdef\frac{1}{2\lambda}A$ and
$b'\eqdef\frac{1}{2\lambda}b$. Consider the equivalent system $(I-(I-A'))x=b'$.
As the eigenvalues of $A'$ are in $(0,1/2]$, the eigenvalues of
$I-A'$ are in $[1/2,1)$. Thus, the solution to the linear system is given by
$x=(I-(I-A'))^{-1}b'=\sum_{t=0}^\infty (I-A')^tb$. Therefore, we can
approximate $x_u$ by truncating the infinite sum at some $t_0$ and approximating
each power $t<t_0$ by the algorithm for the matrix-power problem. 
}
However, in contrast to our results, their expected runtime 
is polynomial in the input size, namely $\nnz(A)^{2/3}$,
and holds only for a random $u\in [n]$.

\paragraph{Comparison with PageRank.} 

An example application of our results is computing quickly the PageRank (defined in~\cite{BP98}) of a single node in an undirected $d$-regular graph.
Recall that the PageRank vector of an $n$-vertex graph with associated transition matrix $P$ is the solution to the linear system
$x=\frac{1-\alpha}{n}\allones + \alpha Px$,
where $0<\alpha<1$ is a given parameter. 
In personalized PageRank, one replaces $\frac{1}{n}\allones$ 
(the uniform distribution) with some $b\in\R^n$, e.g., a standard basis vector. 
Equivalently, $x$ solves the system $Sx=\frac{1-\alpha}{n}\allones$ 
where $S=I-\alpha P$ is an SDD matrix with $1$'s on the diagonal. 
As all eigenvalues of $P$ are of magnitude at most 1 
(recall $P$ is a transition matrix), all eigenvalues
of $I-\tilde{S}=I-S=\alpha P$ are of magnitude at most $\alpha$,
and the running time guaranteed by Theorem~\ref{thm:SDD_solver_intro} is
logarithmic (with base $\frac{2}{\alpha+1}$). 

Algorithms for the PageRank model were studied extensively.
In particular, the sublinear algorithms of~\cite{BPP18} approximate
the PageRank of a vertex using $\tO(n^{2/3})$ queries to an arbitrary graph,
or using $\tO((n\Delta)^{1/2})$ queries when the maximum degree is $\Delta$. 
Another example is the heavy-hitters algorithm of~\cite{BBCT14},
which reports all vertices whose approximate PageRank exceeds a threshold $T$
in sublinear time $\tO(1/\Delta)$, 
when PageRanks are viewed as probabilities and sum to $1$. 
Other work explores connections to other graph problems,
including for instances using PageRank algorithms 
to approximate effective resistances~\cite{CZ10},
the PageRank vector itself, and computing sparse cuts~\cite{ACL07}.

\paragraph{Local Algorithms.}
Our algorithms in Theorems~\ref{thm:Laplacian_solver_intro}
and~\ref{thm:SDD_solver_intro} are \emph{local}
in the sense that they query a small portion of their input,
usually around the input vertex, when viewed as graph algorithms.
Local algorithms for graph problems were studied in several contexts,
like graph partitioning~\cite{ST04a,AP09}, Web analysis~\cite{CGS04,ABCHMT08},
and distributed computing~\cite{Suo13}. 
Rubinfeld, Tamir, Vardi, and Xie~\cite{RTVX11} 
introduced a formal concept of \emph{Local Computation Algorithms}
that requires consistency between the local outputs of multiple executions
(namely, these local outputs must all agree with a single global solution). 
As explained earlier, our problem formulation~\eqref{eq:formulation} 
follows this consistency requirement. 

\subsection{Future Work} 
One may study alternative ways of defining the problem of 
solving a linear system in sublinear time, 
in particular if the algorithm can access $b$ in a different way. 
For example, 
similarly to assumptions and guarantees in \cite{tang18-notQuantum}, 
the goal may be to produce an $\ell_2$-sample from the solution $x$
(i.e., report a random index in $[n]$ such that 
the probability of each coordinate $i\in[n]$ is proportional to $x_i^2$) 
assuming oracle access to an $\ell_2$-sampler from $b\in\R^n$,
i.e., use an $\ell_2$-sampler for $b$ to construct an $\ell_2$-sampler for $x$. 
Another version of the problem may ask to produce heavy hitters in $x$, 
assuming, say,%
\footnote{This kind of oracle seems necessary even when $S=I$.} 
heavy hitters in $b$ (which may be
useful for the PageRank application). We leave these extensions as
interesting open questions, focusing here on the classical access
mode to $b$, via queries to its coordinates.

\section{Laplacian Solver (for Regular Graphs)}
\label{sec:Laplacian}

In this section we prove Theorem~\ref{thm:Laplacian_solver_intro}. 
The ensuing description deals mostly with a slightly simplified scenario, 
where the algorithm is given not one but two vertices $u,v\in [n]$,
and returns an approximation $\hat{\delta}_{u,v}$ to $x_{u}-x_{v}$
with a slightly different error bound, 
see Theorem~\ref{thm:Laplacian_solver} for the precise statement. 
We will then explain the modifications required
to prove Theorem~\ref{thm:Laplacian_solver_intro}
(which actually follows also from 
our more general Theorem~\ref{thm:SDD_solver_intro}).

Let $G=(V=[n],E)$ be a connected $d$-regular graph
with adjacency matrix $A\in\mathbb{R}^{n\times n}$.
Let the eigenvalues of $A$ be 
$d = \lambda_{1} > \lambda_{2} \geq\cdots\geq \lambda_{n}$,
and let their associated orthonormal eigenvectors be $u_{1},\ldots,u_{n}$.
Then $u_{1}=\frac{1}{\sqrt{n}}\cdot\allones\in\mathbb{R}^{n}$,
and we can write $A=U\Lambda U^\tran $ 
where $U=[u_{1}\, u_{2} \ldots u_{n}]$ is unitary and $\Lambda=\diag(\lambda_{1},...,\lambda_{n})$.
For $u,v\in[n]$, let $\chi_{u,v}\eqdef e_{u}-e_{v}$
where $e_{i}$ is the $i$-th standard basis vector. 
Then the Laplacian of $G$ is given by 
\[
  L
  \eqdef\sum_{uv\in E}\chi_{u,v}\chi_{u,v}^\tran  
  = dI-A
  = U(d I-\Lambda)U^\tran .
\]
Observe that $L$ does not depend on the orientation of each edge $uv$,
and that $\mu_2\eqdef d-\lambda_2$ is the smallest non-zero eigenvalue of $L$.
The Moore-Penrose pseudo-inverse of $L$ is 
\[
  L^{+}
  \eqdef 
  U\cdot \diag(0,(d-\lambda_{2})^{-1},\ldots,(d-\lambda_{n})^{-1})\cdot U^\tran .
\]
We assume henceforth that all eigenvalues of $A$ are non-negative. 
At the end of the proof, we will remove this assumption (by adding self-loops).

\medskip
The idea behind the next fact
is that $L=d(I-\frac{1}{d}A)$,
and $\frac{1}{d}A$ has norm strictly smaller than one when operating
on the subspace that is orthogonal to the all-ones vector, and hence,
the formula $(I-X)^{-1}=\sum_{t=0}^{\infty}X^{t}$ for $||X||<1$
is applicable for the span of $\{u_{2},...,u_{n}\}$.

\begin{fact}
\label{fact:inverse_power_series}For every $x\in\mathbb{R}^{n}$
that is orthogonal to the all-ones vector, $L^{+}x=\frac{1}{d}\sum_{t=0}^{\infty}(\frac{1}{d}A)^{t}x$.
\end{fact}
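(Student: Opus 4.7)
The plan is to prove the identity by decomposing $x$ in the orthonormal eigenbasis $u_1,\ldots,u_n$ of $A$ (equivalently, of $L$) and verifying the claim coordinate-by-coordinate in this basis. Since $u_1=\frac{1}{\sqrt n}\allones$ and $x$ is orthogonal to $\allones$, I can write $x=\sum_{i=2}^n \alpha_i u_i$ for scalars $\alpha_i = u_i^\tran x$, so the $u_1$-coefficient drops out entirely. Applying the spectral definitions from the preceding paragraphs gives $L^+ x = \sum_{i=2}^n \frac{\alpha_i}{d-\lambda_i}\, u_i$, which is the target side.

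Next I would expand the right-hand side. Using $A u_i=\lambda_i u_i$, for each $t\ge 0$ we have $(\tfrac{1}{d}A)^t x = \sum_{i=2}^n \alpha_i (\lambda_i/d)^t u_i$. The key step is to swap the order of summation to get
\[
  \tfrac{1}{d}\sum_{t=0}^{\infty}\bigl(\tfrac{1}{d}A\bigr)^{t} x
  \;=\; \tfrac{1}{d}\sum_{i=2}^{n}\alpha_i\,u_i \sum_{t=0}^{\infty}\bigl(\lambda_i/d\bigr)^{t}
  \;=\; \tfrac{1}{d}\sum_{i=2}^{n}\frac{\alpha_i}{1-\lambda_i/d}\,u_i
  \;=\; \sum_{i=2}^{n}\frac{\alpha_i}{d-\lambda_i}\,u_i,
\]
which matches the spectral expression for $L^+ x$.

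The only thing to check is that every geometric series above actually converges, and that the swap of sums is legitimate. Since we assumed all eigenvalues of $A$ are non-negative, $\lambda_i\ge 0$ for all $i$; since $G$ is connected and $d$-regular, the eigenvalue $d=\lambda_1$ is simple, so $\lambda_i<d$ for $i\ge 2$. Hence $\lambda_i/d\in[0,1)$ for $i\ge 2$, the geometric series $\sum_t (\lambda_i/d)^t = (1-\lambda_i/d)^{-1}$ converges, and because we have only a finite outer sum ($i=2,\ldots,n$) the interchange of summations is immediate (no Fubini-type subtlety is needed). This disposes of the only potential obstacle and completes the proof.
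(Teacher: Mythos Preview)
Your proof is correct and takes essentially the same approach as the paper: both decompose in the eigenbasis $u_2,\ldots,u_n$, reduce to the scalar geometric series $\sum_t(\lambda_i/d)^t$, and use $|\lambda_i/d|<1$ for $i\ge2$. The paper phrases it as ``verify on each $u_i$ and extend by linearity,'' while you carry the coefficients $\alpha_i$ through explicitly, but the argument is the same.
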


\begin{proof}
It suffices to prove the claim for each of $u_{2},\ldots,u_{n}$ as the
fact will then follow by linearity. Fix $i\in\{2,\ldots,n\}$. 
Then since $|\frac{\lambda_{i}}{d}|<1$, 
\[
  \sum_{t=0}^{\infty}\Big(\frac{1}{d}A\Big)^{t}u_{i}
  = \sum_{t=0}^{\infty}\Big(\frac{\lambda_{i}}{d}\Big)^{t}u_{i}
  = \frac{1}{1-\frac{\lambda_{i}}{d}}u_{i}=\frac{d}{d-\lambda_{i}}u_{i}
  = dL^{+}u_{i}.
\qedhere
\]
\end{proof}

We now describe an algorithm that on input $b\in\mathbb{R}^{n}$ that
is orthogonal to the all-ones vector, and two vertices $u\neq v\in[n]$,
returns an approximation $\hat{\delta}_{u,v}$ to $x_{u}-x_{v}$,
where $x$ solves $Lx=b$. As $G$ is
connected, the null space of $L$ is equal to span$\set{\allones}$
and hence $x_{u}-x_{v}$ is uniquely defined, and can be written as
$x_{u}-x_{v}=\chi_{u,v}^\tran L^{+}b$.

\begin{algorithm}
\caption{$\hat{\delta}_{u,v}=\mbox{SolveLinearLaplacian}(G,b,||b||_{0},u,v,\epsilon,d,\mu_2)$}
\label{alg:LaplacianSolver}

\begin{enumerate} \compactify
\item Set
\[
s=\frac{\log(2\sqrt{2}\epsilon^{-1}\frac{d}{\mu_2}\sqrt{||b||_{0}})}{\log(\frac{d}{d-\mu_2})},
\]
and $\ell=O((\frac{\epsilon}{4s})^{-2}\log s)$.
\item For $t=0,1,\ldots,s-1$ do

\begin{enumerate} \compactify
\item Perform $\ell$ independent random walks of length $t$ starting at
$u$, and let $u_{1}^{(t)},\ldots,u_{\ell}^{(t)}$ be the vertices at
which the random walks ended. Independently, perform $\ell$ independent
random walks of length $t$ starting at $v$, and let $v_{1}^{(t)},\ldots,v_{\ell}^{(t)}$
be the vertices at which the random walks ended.
\item Set $\hat{\delta}_{u,v}^{(t)}=\frac{1}{\ell}\sum_{i\in[\ell]}(b_{u_{i}^{(t)}}-b_{v_{i}^{(t)}})$.
\end{enumerate}
\item Return $\hat{\delta}_{u,v}=\frac{1}{d}\sum_{t=0}^{s-1}\hat{\delta}_{u,v}^{(t)}$.\end{enumerate}
\end{algorithm}

\begin{claim}
\label{claim:Laplacian_tail_bound}For $b$ that is orthogonal to
the all-ones vector, $|\chi_{u,v}^\tran L^{+}b-\chi_{u,v}^\tran \frac{1}{d}\sum_{t=0}^{s-1}(\frac{1}{d}A)^{t}b|\leq\frac{\epsilon}{2d}||b||_{\infty}$.
\end{claim}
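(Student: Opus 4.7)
The plan is to combine the closed-form for $L^+b$ from Fact~\ref{fact:inverse_power_series} with a standard geometric tail estimate. Since $b\perp\allones$, that fact gives $L^+b=\frac{1}{d}\sum_{t=0}^{\infty}(\frac{1}{d}A)^t b$, so the quantity to bound is exactly the tail
\[
  \chi_{u,v}^\tran L^+b-\chi_{u,v}^\tran\frac{1}{d}\sum_{t=0}^{s-1}\Big(\frac{1}{d}A\Big)^t b
  = \frac{1}{d}\sum_{t=s}^{\infty}\chi_{u,v}^\tran\Big(\frac{1}{d}A\Big)^t b.
\]

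Next I would bound each term by Cauchy--Schwarz, $|\chi_{u,v}^\tran(\frac{1}{d}A)^t b|\le\|\chi_{u,v}\|_2\cdot\|(\frac{1}{d}A)^t b\|_2 =\sqrt{2}\,\|(\frac{1}{d}A)^t b\|_2$. The key observation is that the subspace $\allones^\perp$ is invariant under $A$ (indeed under $U\Lambda U^\tran $ since $u_1=\frac{1}{\sqrt{n}}\allones$), so iterating preserves orthogonality to $\allones$, and on this subspace the operator norm of $\frac{1}{d}A$ is exactly $\frac{\lambda_2}{d}=\frac{d-\mu_2}{d}<1$ (using the assumption that all eigenvalues of $A$ are non-negative). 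Therefore $\|(\frac{1}{d}A)^t b\|_2\le(\frac{\lambda_2}{d})^t\|b\|_2$, and using $\|b\|_2\le\sqrt{\|b\|_0}\,\|b\|_\infty$ I get
\[
  \Big|\chi_{u,v}^\tran\Big(\frac{1}{d}A\Big)^t b\Big|
  \le \sqrt{2}\Big(\frac{d-\mu_2}{d}\Big)^t\sqrt{\|b\|_0}\,\|b\|_\infty .
\]

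Summing the geometric series from $t=s$ to $\infty$ and multiplying by $\frac{1}{d}$ bounds the tail by
\[
  \frac{1}{d}\cdot\sqrt{2}\,\sqrt{\|b\|_0}\,\|b\|_\infty\cdot\frac{(1-\mu_2/d)^s}{\mu_2/d}
  = \frac{\sqrt{2}\,\sqrt{\|b\|_0}}{\mu_2}\Big(\frac{d-\mu_2}{d}\Big)^s\|b\|_\infty .
\]
It remains to verify that the choice of $s$ in Algorithm~\ref{alg:LaplacianSolver} forces this quantity to be at most $\frac{\epsilon}{2d}\|b\|_\infty$. This is a direct calculation: the condition is $(\frac{d}{d-\mu_2})^s\ge 2\sqrt{2}\,\epsilon^{-1}\frac{d}{\mu_2}\sqrt{\|b\|_0}$, which upon taking logarithms is precisely the definition of $s$ in the algorithm.

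No step here is a real obstacle; the only substantive point is the invariance of $\allones^\perp$ under $A$, which allows replacing the (non-strict) operator norm of $\frac{1}{d}A$ on $\R^n$ by the contracting $\frac{\lambda_2}{d}=\frac{d-\mu_2}{d}$ on the relevant subspace. Everything else is Cauchy--Schwarz plus a geometric sum. The temporary non-negativity assumption on the eigenvalues of $A$ is used only to identify the operator norm on $\allones^\perp$ with $\lambda_2/d$, and will be discharged later (via self-loops) as indicated in the text.
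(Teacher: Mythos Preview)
Your proof is correct and follows essentially the same route as the paper's: express the difference as the tail via Fact~\ref{fact:inverse_power_series}, apply Cauchy--Schwarz with $\|\chi_{u,v}\|_2=\sqrt{2}$, use that $\tfrac{1}{d}A$ contracts by $\lambda_2/d=1-\mu_2/d$ on $\allones^\perp$, and bound $\|b\|_2\le\sqrt{\|b\|_0}\,\|b\|_\infty$. The only cosmetic difference is that the paper carries out an explicit eigenvector decomposition $b=\sum_{i\ge2}c_iu_i$ and sums the geometric series per eigenvalue before bounding by the worst one, whereas you invoke the subspace operator norm directly and sum afterwards; both arrive at the identical bound $\sqrt{2}\,(1-\mu_2/d)^s\mu_2^{-1}\sqrt{\|b\|_0}\,\|b\|_\infty$.
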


\begin{proof}
Using Fact \ref{fact:inverse_power_series}, 
\[
\chi_{u,v}^\tran L^{+}b-\chi_{u,v}^\tran \frac{1}{d}\sum_{t=0}^{s-1}(\frac{1}{d}A)^{t}b=\chi_{u,v}^\tran \frac{1}{d}\sum_{t=s}^{\infty}(\frac{1}{d}A)^{t}b ,
\]
and thus 
\[
|\chi_{u,v}^\tran L^{+}b-\chi_{u,v}^\tran \frac{1}{d}\sum_{t=0}^{s-1}(\frac{1}{d}A)^{t}b|\leq||\chi_{u,v}^\tran ||_{2}\cdot||\frac{1}{d}\sum_{t=s}^{\infty}(\frac{1}{d}A)^{t}b||_{2}.
\]
 We know that $||\chi_{u,v}^\tran ||_{2}=\sqrt{2}$, so it remains to
bound $||\frac{1}{d}\sum_{t=s}^{\infty}(\frac{1}{d}A)^{t}b||_{2}$.
Decomposing $b=\sum_{i=2}^{n}c_{i}u_{i}$ we get that $\sum_{i=2}^{n}c_{i}^{2}=||b||_{2}^{2}$
and 
\[
\sum_{t=s}^{\infty}(\frac{1}{d}A)^{t}b=\sum_{i=2}^{n}c_{i}u_{i}\sum_{t=s}^{\infty}(\frac{\lambda_{i}}{d})^{t}=\sum_{i=2}^{n}\frac{(\frac{\lambda_{i}}{d})^{s}}{1-\frac{\lambda_{i}}{d}}c_{i}u_{i}=d\sum_{i=2}^{n}\frac{(\frac{\lambda_{i}}{d})^{s}}{d-\lambda_{i}}c_{i}u_{i}.
\]
Hence, 
\[
  ||\frac{1}{d}\sum_{t=s}^{\infty}(\frac{1}{d}A)^{t}b||_{2}^{2}
  =\sum_{i=2}^{n}\left(\frac{(\frac{\lambda_{i}}{d})^{s}}{d-\lambda_{i}}\right)^{2}c_{i}^{2}
    ||u_{i}||_{2}^{2}
    \leq\left(\frac{(\frac{\lambda_2}{d})^{s}}{d-\lambda_2}\right)^{2}\sum_{i=2}^{n}c_{i}^{2}
    =\left(\frac{(1-\frac{\mu_2}{d})^{s}}{\mu_2}\right)^{2}||b||_{2}^{2},
\]
where the first equality is because the $u_{i}$'s are orthogonal.
Altogether, 
\[
  |\chi_{u,v}^\tran L^{+}b-\chi_{u,v}^\tran \frac{1}{d}\sum_{t=0}^{s-1}(\frac{1}{d}A)^{t}b|
  \leq\sqrt{2}\frac{(1-\frac{\mu_2}{d})^{s}}{\mu_2}||b||_{2}
  \leq\sqrt{2}\frac{(1-\frac{\mu_2}{d})^{s}}{\mu_2}\sqrt{||b||_{0}}\cdot||b||_{\infty}
  =\frac{\epsilon}{2d}||b||_{\infty},
\]
as claimed.
\end{proof}

\begin{claim}
\label{claim:Laplacian_power_approximation}$\Pr\Big[|\hat{\delta}_{u,v}-\chi_{u,v}^\tran \frac{1}{d}\sum_{t=0}^{s-1}(\frac{1}{d}A)^{t}b|>\frac{\epsilon}{2d}||b||_{\infty}\Big]\leq\frac{1}{s}$.
\end{claim}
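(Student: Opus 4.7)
The plan is a standard expectation-plus-concentration argument: first identify the random variable $\hat\delta_{u,v}^{(t)}$ as an unbiased estimator of $\chi_{u,v}^\tran (\tfrac1d A)^t b$, then apply Hoeffding's inequality to each of the $s$ terms, and finally take a union bound and invoke the triangle inequality.

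For the first step, recall that $e_u^\tran(\tfrac1d A)^t e_w$ is exactly the probability that a random walk of length $t$ starting at $u$ ends at $w$, so if $u_i^{(t)}$ is the endpoint of one such walk,
\[
  \EX\bigl[b_{u_i^{(t)}}\bigr]
  = \sum_{w\in V} e_u^\tran \bigl(\tfrac1d A\bigr)^t e_w\, b_w
  = e_u^\tran \bigl(\tfrac1d A\bigr)^t b,
\]
and analogously for $v_i^{(t)}$. By linearity,
\[
  \EX\bigl[\hat\delta_{u,v}^{(t)}\bigr]
  = \chi_{u,v}^\tran \bigl(\tfrac1d A\bigr)^t b,
  \qquad
  \EX\bigl[\hat\delta_{u,v}\bigr]
  = \chi_{u,v}^\tran \tfrac1d\sum_{t=0}^{s-1}\bigl(\tfrac1d A\bigr)^t b.
\]

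Next, for each fixed $t\in\{0,1,\dots,s-1\}$, the random variable $\hat\delta_{u,v}^{(t)}$ is an average of $\ell$ independent terms $b_{u_i^{(t)}} - b_{v_i^{(t)}}$, each lying in $[-2\|b\|_\infty, 2\|b\|_\infty]$. Hoeffding's inequality therefore yields
\[
  \Pr\Bigl[\bigl|\hat\delta_{u,v}^{(t)} - \chi_{u,v}^\tran\bigl(\tfrac1d A\bigr)^t b\bigr|
    > \tfrac{\epsilon}{2s}\|b\|_\infty\Bigr]
  \le 2\exp\!\Bigl(-c\,\ell\bigl(\tfrac{\epsilon}{s}\bigr)^2\Bigr)
  \le \tfrac{1}{s^2},
\]
where the final inequality uses the choice $\ell = O((\tfrac{\epsilon}{4s})^{-2}\log s)$ with a large enough hidden constant.

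Finally, a union bound over the $s$ values of $t$ shows that with probability at least $1-\tfrac1s$, every deviation $|\hat\delta_{u,v}^{(t)}-\chi_{u,v}^\tran(\tfrac1d A)^t b|$ is at most $\tfrac{\epsilon}{2s}\|b\|_\infty$. Conditioned on this event, the triangle inequality gives
\[
  \Bigl|\hat\delta_{u,v} - \chi_{u,v}^\tran \tfrac1d\sum_{t=0}^{s-1}\bigl(\tfrac1d A\bigr)^t b\Bigr|
  \le \tfrac1d \sum_{t=0}^{s-1} \bigl|\hat\delta_{u,v}^{(t)} - \chi_{u,v}^\tran\bigl(\tfrac1d A\bigr)^t b\bigr|
  \le \tfrac1d \cdot s \cdot \tfrac{\epsilon}{2s}\|b\|_\infty
  = \tfrac{\epsilon}{2d}\|b\|_\infty,
\]
which is the desired bound. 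There is no real obstacle here; the only thing to double-check is the bookkeeping of constants, namely that the per-step accuracy $\tfrac{\epsilon}{2s}$ combined with the $\tfrac1d$ prefactor and the sum over $s$ terms yields exactly the target error $\tfrac{\epsilon}{2d}\|b\|_\infty$, and that the constant inside the $O(\cdot)$ in the definition of $\ell$ is tuned to push the per-$t$ failure probability below $1/s^2$ so the union bound delivers $1/s$.
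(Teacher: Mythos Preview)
Your proof is correct and follows essentially the same approach as the paper: compute the expectation via the random-walk interpretation of $e_u^\tran(\tfrac1d A)^t$, apply Hoeffding per level $t$, and union-bound over the $s$ levels. The only cosmetic difference is that the paper applies Hoeffding separately to the $u$-walks and $v$-walks (each with target error $\tfrac{\epsilon}{4s}\|b\|_\infty$) rather than to the combined difference $b_{u_i^{(t)}}-b_{v_i^{(t)}}$ as you do.
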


\begin{proof}
Observe that $e_{u}^\tran (\frac{1}{d}A)^{t}$ is a probability vector
over $V$, and $e_{u}^\tran (\frac{1}{d}A)^{t}e_{w}$ is exactly the
probability that a random walk of length $t$ starting at $u$ will
end at $w$. Thus, for every $t\in\{0,1,\ldots,s-1\}$ and $i\in[\ell]$,
we have
\[
  \mathbb{E}[b_{u_{i}^{(t)}}]
  =\sum_{w\in [n]} e_{u}^\tran (\frac{1}{d}A)^{t}e_{w}b_{w}
  =e_{u}^\tran (\frac{1}{d}A)^{t}b,
\]
and similarly $\mathbb{E}[b_{v_{i}^{(t)}}]=e_{v}^\tran (\frac{1}{d}A)^{t}b$.
By a union bound over Hoeffding bounds, with probability at least
$1-\frac{1}{s}$, for every $t\in\{0,1,\ldots,s-1\}$, we have $|\frac{1}{\ell}\sum_{i\in[\ell]}b_{u_{i}^{(t)}}-e_{u}^\tran (\frac{1}{d}A)^{t}b|\leq\frac{\epsilon}{4s}||b||_{\infty}$
and $|\frac{1}{\ell}\sum_{i\in[\ell]}b_{v_{i}^{(t)}}-e_{v}^\tran (\frac{1}{d}A)^{t}b|\leq\frac{\epsilon}{4s}||b||_{\infty}$.
Recalling that $\hat{\delta}_{u,v}=\frac{1}{d}\sum_{t=0}^{s-1}\frac{1}{\ell}\sum_{i\in[\ell]}(b_{u_{i}^{(t)}}-b_{v_{i}^{(t)}})$,
with probability at least $1-\frac{1}{s}$ we have $|\hat{\delta}_{u,v}-\chi_{u,v}^\tran \frac{1}{d}\sum_{t=0}^{s-1}(\frac{1}{d}A)^{t}b|\leq\frac{\epsilon}{2d}||b||_{\infty}$,
as claimed.
\end{proof}

Combining Claim \ref{claim:Laplacian_tail_bound} and Claim \ref{claim:Laplacian_power_approximation}
we get that (with probability $1-\frac{1}{s}$)
$|\hat{\delta}_{u,v}-\chi_{u,v}^\tran L^{+}b|\leq\frac{\epsilon}{d}||b||_{\infty}$.
Now, as $x$ solves $Lx=b$, 
for every $i\in[n]$ we have $\sum_{j\in N(i)}(x_{i}-x_{j})=b_{i}$
where $N(i)$ is the set of neighbors $\set{j:ij\in E}$, 
which implies that for some neighbor $j$ of $i$, 
it holds that $|x_{i}-x_{j}|\geq\frac{|b_{i}|}{d}$.
Therefore, $\max_{ij\in E}|x_{i}-x_{j}|\geq\frac{1}{d}||b||_{\infty}$.
We conclude that $|\hat{\delta}_{u,v}-\chi_{u,v}L^{+}b|\leq\epsilon\cdot\max_{ij\in E}|x_{i}-x_{j}|$.
We now turn to the running time of Algorithm~\ref{alg:LaplacianSolver}, 
which is dominated by the time it takes to perform the random walks.
There are $2s\cdot\ell$ random walks in total.
The random walks do not need to be independent for different values of $t$ 
(as we applied a union bound over the different $t$), 
we can extend, at each iteration $t$, the $2\ell$ respective
random walks constructed at iteration $t-1$ by an extra step in time
$O(d)$ (recall we assume $G$ is given as an adjacency list), 
obtaining a total runtime $O(s\cdot\ell\cdot d)=O(d\epsilon^{-2}s^{3}\log s)$.
To simplify the expression for $s$, we need the following bound.

\begin{fact}
\label{fact:inequality_e}
For all $\delta\in(0,1)$,
$\frac{1}{\ln(1-\delta)^{-1}}\leq \frac{1}{\delta}$.
\end{fact}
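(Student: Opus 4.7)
The plan is to prove the equivalent reformulation $\ln(1-\delta)^{-1} \geq \delta$, i.e., $-\ln(1-\delta) \geq \delta$, from which the stated inequality follows by taking reciprocals (both sides are positive for $\delta \in (0,1)$).

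The cleanest route I would take is the Taylor expansion of the logarithm. For $\delta \in (0,1)$ we have the absolutely convergent series
\[
  -\ln(1-\delta) = \sum_{k=1}^{\infty} \frac{\delta^k}{k} = \delta + \frac{\delta^2}{2} + \frac{\delta^3}{3} + \cdots,
\]
and since every term $\delta^k/k$ with $k \geq 2$ is strictly positive, the right-hand side is at least $\delta$. This gives $-\ln(1-\delta) \geq \delta$, hence $\ln(1-\delta)^{-1} \geq \delta > 0$, and taking reciprocals yields $\tfrac{1}{\ln(1-\delta)^{-1}} \leq \tfrac{1}{\delta}$ as required.

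An alternative one-line derivation, if one prefers to avoid power series, is to invoke the standard inequality $\ln(1+y) \leq y$ valid for all $y > -1$ (which itself follows from the concavity of $\ln$ and its tangent at $y=0$), and apply it to $y = -\delta$: this gives $\ln(1-\delta) \leq -\delta$, which after negation and reciprocation is exactly the claim. Either approach is essentially immediate; there is no real obstacle, and the fact is stated only to simplify the expression for $s$ in the surrounding argument by replacing the transcendental denominator $\ln(d/(d-\mu_2)) = \ln(1-\mu_2/d)^{-1}$ with the more convenient lower bound $\mu_2/d$.
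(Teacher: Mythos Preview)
Your proof is correct and essentially matches the paper's. The paper reduces to the equivalent inequality $e^{-\delta} \geq 1-\delta$ and cites it as well known; your alternative derivation via $\ln(1+y) \leq y$ at $y=-\delta$ is exactly this same inequality in logarithmic form, and your primary Taylor-series argument is just another one-line justification of the same elementary fact.
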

\begin{proof}
We need to show that $\delta \leq \ln(1-\delta)^{-1}$,
or equivalently, $e^{-\delta} \geq 1-\delta$,
which is well known. 
\end{proof}

Applying Fact~\ref{fact:inequality_e} to $\delta=\frac{\mu_2}{d}$, we have $s\leq\frac{d}{\mu_2}\log(2\sqrt{2}\epsilon^{-1}\frac{d}{\mu_2}\sqrt{||b||_0})$,
and conclude the following.

\begin{thm}
\label{thm:Laplacian_solver}
Given an adjacency list of a $d$-regular $n$-vertex graph $G$, 
a vector $b\in\mathbb{R}^{n}$ that is orthogonal to the all-ones vector, 
vertices $u,v\in[n]$, and scalars $||b||_{0}$, $\epsilon>0$, 
and $\mu_2=d-\lambda_2>0$, 
Algorithm~\ref{alg:LaplacianSolver} outputs $\hat{\delta}_{u,v}\in\mathbb{R}$ 
satisfying 
\[
  \Pr\Big[ |\hat{\delta}_{u,v}-\chi_{u,v}^\tran L^{+}b|\leq\epsilon\cdot\max_{ij\in E}|x_{i}-x_{j}| \Big] 
  \ge 
  1-\tfrac{1}{s} ,
\]
in time $O(d\epsilon^{-2}s^{3}\log s)$ for $s=O(\frac{d}{\mu_2}\log(\epsilon^{-1}\frac{d}{\mu_2}||b||_{0}))$.
\end{thm}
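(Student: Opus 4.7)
}
The plan is to assemble the three ingredients already developed above (the power-series identity in Fact~\ref{fact:inverse_power_series}, the deterministic tail bound in Claim~\ref{claim:Laplacian_tail_bound}, and the sampling concentration in Claim~\ref{claim:Laplacian_power_approximation}) into a statement about $\hat\delta_{u,v}$, then convert the resulting $\frac{1}{d}\|b\|_\infty$ error into the edge-difference form $\max_{ij\in E}|x_i-x_j|$, and finally translate the runtime and simplify $s$.

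First, under the temporary assumption that all eigenvalues of $A$ are non-negative, a direct triangle inequality combining Claims~\ref{claim:Laplacian_tail_bound} and~\ref{claim:Laplacian_power_approximation} yields, with probability at least $1-\tfrac{1}{s}$,
\[
  \bigl|\hat\delta_{u,v}-\chi_{u,v}^\tran L^+b\bigr|
  \le \tfrac{\epsilon}{2d}\|b\|_\infty + \tfrac{\epsilon}{2d}\|b\|_\infty
  = \tfrac{\epsilon}{d}\|b\|_\infty.
\]
To replace $\tfrac{1}{d}\|b\|_\infty$ by $\max_{ij\in E}|x_i-x_j|$, I would use the row-wise form of $Lx=b$: for the index $i^\star$ achieving $\|b\|_\infty$, we have $b_{i^\star}=\sum_{j\in N(i^\star)}(x_{i^\star}-x_j)$, so by averaging, some neighbor $j$ satisfies $|x_{i^\star}-x_j|\ge |b_{i^\star}|/d=\|b\|_\infty/d$, giving exactly the needed lower bound on $\max_{ij\in E}|x_i-x_j|$.

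Next I would remove the non-negativity assumption. The fix is the standard self-loop trick: replace $A$ by $A'\eqdef \tfrac12(dI+A)$, which corresponds to adding $d$ self-loops at every vertex (the graph becomes $2d$-regular), leaves $L=dI-A=(2d)I-2A'$ unchanged up to the factor~$2$ absorbed into the algorithm's normalization, and forces all eigenvalues of the new transition operator into $[0,1]$ so that the tail bound of Claim~\ref{claim:Laplacian_tail_bound} applies with the second-largest eigenvalue still controlled by $\mu_2/d$. The random walk of Algorithm~\ref{alg:LaplacianSolver} is then the lazy walk on $G$; each step still takes $O(d)$ time on the adjacency-list representation.

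Finally, the running time is dominated by the $2s\ell$ random walks. Since Claim~\ref{claim:Laplacian_power_approximation} only uses a union bound over $t\in\{0,\ldots,s-1\}$, the $2\ell$ walks at iteration $t$ can be obtained by extending those from iteration $t-1$ by one step in $O(d)$ time, giving total time $O(s\cdot\ell\cdot d)=O(d\,\epsilon^{-2}s^{3}\log s)$. To bound $s$, apply Fact~\ref{fact:inequality_e} with $\delta=\mu_2/d$ to replace $1/\log(\tfrac{d}{d-\mu_2})$ by $d/\mu_2$ in the denominator of the expression for $s$ given in Algorithm~\ref{alg:LaplacianSolver}, yielding $s=O(\tfrac{d}{\mu_2}\log(\epsilon^{-1}\tfrac{d}{\mu_2}\|b\|_0))$ as claimed. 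The main conceptual step here is the self-loop reduction; the remaining arguments are bookkeeping. No single step is a serious obstacle, but care is needed in the self-loop step to verify that the tail bound's spectral estimate stays valid (which holds precisely because after the shift the only positive eigenvalue that can be close to $1$ corresponds to the all-ones direction on which $b$ has zero component).
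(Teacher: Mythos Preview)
Your proposal is correct and follows essentially the same approach as the paper: combine Claims~\ref{claim:Laplacian_tail_bound} and~\ref{claim:Laplacian_power_approximation} via the triangle inequality, convert $\tfrac{1}{d}\|b\|_\infty$ into $\max_{ij\in E}|x_i-x_j|$ using the row identity $b_i=\sum_{j\in N(i)}(x_i-x_j)$, remove the non-negativity assumption via the self-loop (lazy-walk) trick, and simplify $s$ using Fact~\ref{fact:inequality_e}. One cosmetic slip: the adjacency matrix of the self-looped $2d$-regular graph is $A'=dI+A$, not $\tfrac12(dI+A)$ (your $A'$ is $d$ times the lazy transition matrix), but since you correctly identify the walk as the lazy walk with spectrum in $[0,1]$ this does not affect the argument.
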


\paragraph{Remark.} 
If we allow preprocessing of $G$, the runtime of Algorithm~\ref{alg:LaplacianSolver} can be reduced to $O(\epsilon^{-2}s^2)$, as follows. At the preprocessing phase, compute $(\frac{1}{d}A)^t$ for all powers $t\leq s$. Then, instead of approximating $e_u^\tran (\frac{1}{d}A)^tb$ for all powers $t\leq s$, sample a uniform $t\in \set{0,1,...,s}$, and then, in $O(1)$ time (because the probability vector is precomputed, see~\cite{Wal77}),
sample $z\in [n]$ based on the probability vector $e_u^\tran (\frac{1}{d}A)^t$, and finally, output
$\frac{s+1}{d}b_z$. The expectation of the output is $\frac{1}{d}\sum_{t=0}^s (\frac{1}{d}A)^tb$. As for concentration, since the output is in $[-\frac{s+1}{d}\cdot ||b||_\infty,\frac{s+1}{d}\cdot ||b||_\infty]$, by the Hoeffding bound, $O(\epsilon^{-2}s^2)$ many repetitions suffice to obtain (with constant probability) an approximation with additive error $\frac{\epsilon}{2d} ||b||_\infty$
(as in Claim~\ref{claim:Laplacian_power_approximation}).

\medskip
We still need to show how to remove the assumption that $A$ 
has no negative eigenvalues.
Given an adjacency matrix $A$ which might have negative eigenvalues, 
consider the PSD matrix $A'=A+dI$, 
which is the adjacency matrix of the $2d$-regular graph $G'$ 
obtained from $G$ by adding $d$ self-loops to each vertex.
Observe that $A'=U(\Lambda+dI)U^\tran $ and we can write $L=dI-A=(2dI-A')$, 
and thus, similarly to Fact~\ref{fact:inverse_power_series},
$L^+x=\frac{1}{2d}\sum_{t=0}^\infty (\frac{1}{2d}A')^t$,
for $x\in\R^n$ that is orthogonal to the all-ones vector.
Therefore, if we use $A'$ (which is PSD) to guide Algorithm~\ref{alg:LaplacianSolver}'s random walks
(i.e., at each step of a walk, with probability $1/2$ the walk stays put 
and with probability $1/2$ it moves to a uniform neighbor in $G$)
and apply Claims~\ref{claim:Laplacian_tail_bound}
and~\ref{claim:Laplacian_power_approximation} 
(which apply even when $A$ has self-loops), 
an estimate $\hat\delta_{u,v}$ satisfying with high probability
$|\frac{1}{2}\hat\delta_{u,v}-\chi_{u,v}^\tran L^+b| \leq \epsilon\max_{ij\in E} |x_i-x_j|$ is obtained.
When running Algorithm~\ref{alg:LaplacianSolver} on $G'$, the term $s$ evaluates to
$O(\frac{2d}{2d-(\lambda_2+d)}\log(\epsilon^{-1}\frac{2d}{2d-(\lambda_2+d)}||b||_{0}))
=O(\frac{d}{\mu_2}\log(\epsilon^{-1}\frac{d}{\mu_2}||b||_{0}))$,
thus, leaving the guarantee of Theorem~\ref{thm:Laplacian_solver} intact (up to constant factors).

\begin{proof}[Proof of Theorem \ref{thm:Laplacian_solver_intro}]
The theorem follows by a simple modifications to the analysis above.
Observe that the analysis in Claims~\ref{claim:Laplacian_tail_bound} 
and~\ref{claim:Laplacian_power_approximation} holds also when 
replacing $\mu_2$ by a lower bound on $\mu_2$,
which in turn is easy to derive from the upper bound $\barkappa$ 
given in the input and $d$ given as part of input $G$. 
Similarly, $||b||_{0}$ can be replaced by an upper bound $B_{up}\geq||b||_{0}$. 

To handle one vertex $u\in[n]$ instead of two vertices $u,v\in[n]$,
ignore the part dealing with $v$ in Algorithm~\ref{alg:LaplacianSolver},
and modify the analysis in the two aforementioned claims 
to use $e_u$ instead of $\chi_{u,v}$. 
The error bound obtained from combining these lemmas is 
$\frac{\epsilon}{d}||b||_{\infty}$, 
but since each 
$
  \abs{b_i} 
  = \abs{\sum_j L_{ij} x_j} 
  \le \sum_j \abs{L_{ij}}\cdot \norm{x}_\infty
  = 2d \norm{x}_\infty 
$,
we can bound the error by
$\frac{\epsilon}{d}||b||_{\infty} \le 2\epsilon \norm{x}_\infty$. 
\end{proof}

\newcommand{\rtree}{{r_{\mathrm{tree}}}}

\section{Lower Bound for PSD Matrices}
\label{sec:lbPSD}

In this section we prove Theorem~\ref{thm:lbPSD}. 
The entire proof relies on a $d$-regular $n$-vertex graph $G_1$, such that
(i) its girth is $\Omega(\log_d n)$; and 
(ii) its adjacency matrix $A_1$ has 
eigenvalues $\lambda_1\ge \ldots \ge \lambda_n$ that satisfy 
$\maxx{\abs{\lambda_2},\abs{\lambda_n}} \leq \tfrac14 d^{2/3}$
(this bound is somewhat arbitrary, chosen to simplify the exposition). 
We actually need such a graph to exist for infinitely many $n$,
with $d$ bounded uniformly (as $n$ grows).
Such graphs are indeed known, 
for example the Ramanujan graphs constructed 
by Lubotzky, Philips and Sarnak~\cite{LPS88} and by Margulis~\cite{Margulis88} 
for the case where $d-1$ is a prime, 
have eigenvalue upper bound $2\sqrt{d-1}$ 
and girth lower bound $(4/3-o(1))\log_{d-1} n$
(see e.g.~\cite{HLW06}). 

In what follows, let $G_2$ be a certain isomorphic copy of $G_1$
(i.e., obtained from $G_1$ by permuting the vertices, as explained below).
It will be convenient to assume that $G_1$ and $G_2$ 
have the \emph{same} vertex set, which we denote by $V$,
as then we can consider the multi-graph obtained by their edge union,
denoted $G_1\cup G_2$. 
Denoting the adjacency matrix of each $G_i$ by $A_i$,
the adjacency matrix of their edge union $G_1\cup G_2$ is simply $A_1+A_2$. 
We can similarly view $A_1-A_2$ as the adjacency matrix of the same graph,
except that now the edges are signed --- 
those from $G_1$ are positive, and those from $G_2$ are negative. 

The proof of the theorem will follow easily from the three propositions below.
Proposition~\ref{prop:treelike} provides combinatorial, girth-like, 
information about $G_1\cup G_2$.
Proposition~\ref{prop:condition} provides spectral information, 
like the condition number, about $A_1-A_2$.
These two propositions are proved by straightforward arguments,
and the heart of the argument is in Proposition~\ref{prop:queryb}, 
that constructs a PSD linear system based on $A_1-A_2$,
in which the coordinates of the solution $x$ can be analyzed,
showing that recovering a specific coordinate, even approximately,
requires many probes to $b$.

\begin{prop} [Proved in Section~\ref{sec:treelike}]
\label{prop:treelike}
Let $G_1$ be as above and fix a vertex $\hatw\in V$.
Then there exists an isomorphic copy $G_2$ of $G_1$ (on the same vertex set), 
such that in their edge-union $G_1\cup G_2$,
the neighborhood of $\hatw$ of radius $\rtree\eqdef 0.2\log_{4d} n$ 
is a $2d$-regular tree. 
\end{prop}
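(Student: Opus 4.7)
The plan is to take $G_2$ as a uniformly random isomorphic copy of $G_1$ on $V$, and show that with probability $1-o(1)$ the radius-$\rtree$ ball around $\hatw$ in $G_1\cup G_2$ is a $2d$-regular tree; existence of a good $G_2$ then follows by the probabilistic method. Concretely, let $\pi:V\to V$ be a uniformly random bijection and define $E(G_2):=\{\{\pi(u),\pi(v)\}:\{u,v\}\in E(G_1)\}$, so that $G_2\cong G_1$.

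First I would record a size bound. Since $G_1\cup G_2$ has maximum degree $2d$, any radius-$\rtree$ ball has at most $B:=\sum_{t=0}^{\rtree}(2d)^{t}\leq 2(2d)^{\rtree}$ vertices, and since $\rtree=0.2\log_{4d}n$ gives $(2d)^{\rtree}\leq(4d)^{\rtree}=n^{0.2}$, we have $B=O(n^{0.2})$. The hypothesis on the girth of $G_1$ ensures that $2\rtree$ is well below the girth, so the radius-$\rtree$ ball in $G_1$ alone contains no $G_1$-cycle. Consequently the only ways the combined ball can fail to be a $2d$-regular tree are either (i) some $G_2$-edge coincides with a $G_1$-edge inside the ball, or (ii) some pair of vertices in the ball is joined by a $G_2$-edge that closes a cycle.

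The key step is to bound (i) and (ii) by a first-moment argument that uses the symmetry of $\pi$. I would grow the ball by BFS in $G_1\cup G_2$ from $\hatw$ while revealing $\pi$ lazily (deferred decisions): whenever the BFS needs the $G_2$-neighbors of a newly visited vertex $v$, reveal $\pi^{-1}(v)$ uniformly from the unrevealed preimages, and then reveal $\pi(u')$ for each $u'\in N_1(\pi^{-1}(v))$ not yet revealed, uniformly from the unrevealed images. At each such revelation the conditional probability that the revealed endpoint lands on any already-discovered vertex is at most $B/(n-B)=O(B/n)=O(n^{-0.8})$. The total number of $G_2$-edge endpoints exposed is at most $dB$, so a union bound gives
\[
  \Pr[\text{ball is not a $2d$-regular tree}]
  \;\leq\; O\!\Big(\frac{dB^{2}}{n}\Big)
  \;=\; O\!\Big(\frac{d^{3}\,n^{0.4}}{n}\Big)
  \;=\; o(1).
\]
Hence for $n$ large enough there exists a $\pi$ for which the ball is a $2d$-regular tree, proving the proposition.

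The main obstacle in fully writing this up is the bookkeeping for the BFS with deferred decisions --- one must verify that the conditional distribution at each revelation really is uniform over the remaining pool, and that every potential bad event (multi-edge or cycle closure) is indeed caught by one of these revelations. A cleaner alternative that sidesteps this bookkeeping is a direct first-moment calculation using the symmetry identity $\Pr[\{x,y\}\in E(G_2)]=2|E(G_1)|/(n(n-1))=d/(n-1)$ for every fixed pair $\{x,y\}$: the expected number of ``extra'' $G_2$-edges with both endpoints inside the radius-$\rtree$ ball is at most $\binom{B}{2}\cdot d/(n-1)=O(dB^{2}/n)=o(1)$, yielding the same conclusion.
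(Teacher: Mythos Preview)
Your primary argument (random $\pi$, lazy revelation of $G_2$-edges during a BFS from $\hatw$, union bound giving $O(dB^{2}/n)=o(1)$) is essentially the paper's proof. The paper also uses the probabilistic method with deferred decisions, exposing the random bijection edge by edge; the only difference is bookkeeping. The paper parameterizes bad events as pairs of abstract walks $a\in[2d]^{l'}$, $b\in[2d]^{l''}$ from $\hatw$ (with $l',l''\le 2\rtree$) whose endpoints are matched to each other, using the high girth of each $G_i$ separately to localize the failure to a single matching edge where a short cycle switches between the two graphs; you instead parameterize by the BFS step at which a collision first occurs. The walk-pair enumeration cleanly handles the case you flag as the ``main obstacle'' (a deterministic $G_1$-edge landing on a vertex previously revealed as some $\pi(u')$), but the resulting bounds coincide.

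Your ``cleaner alternative,'' however, has a genuine gap. The radius-$\rtree$ ball $S$ in $G_1\cup G_2$ is a \emph{random} set depending on $\pi$, and the events $\{\{x,y\}\in E(G_2)\}$ and $\{x,y\in S\}$ are positively correlated---indeed every $G_2$ tree-edge $\{x,y\}$ of the BFS has both endpoints in $S$ \emph{because} it is a $G_2$-edge. Hence the bound $\EX[\#\{G_2\text{-edges inside }S\}]\le\binom{B}{2}\cdot d/(n-1)$ is unjustified: the symmetry identity $\Pr[\{x,y\}\in E(G_2)]=d/(n-1)$ applies only to \emph{fixed} pairs, and there is no fixed set of $\le B$ vertices that deterministically contains $S$. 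To salvage this route you would have to index the vertices of $S$ not by their (random) names in $V$ but abstractly by their access paths in $[2d]^{\le\rtree}$---at which point you are back to the paper's walk enumeration and to your first approach.
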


\begin{prop} \label{prop:condition}
Let $A_1,A_2$ be the adjacency matrices described above, 
and let $\mu \eqdef 2\norm{A_1-A_2}$. 
Then $\mu \leq \tfrac12 d^{2/3}$,
and the matrix $M \eqdef \mu I + A_1-A_2 \in\R^{n\times n}$ is PSD 
with all its eigenvalues in the range $[\tfrac12\mu,\tfrac32\mu]$.
Thus, $M$ is invertible and has condition number $\kappa(M) \leq 3$.
\end{prop}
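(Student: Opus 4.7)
The plan is to exploit the shared $d$-regularity of $G_1$ and $G_2$ to isolate the trivial eigenvector of both $A_1$ and $A_2$, and then bound the remaining spectrum using the given eigenvalue hypothesis together with the triangle inequality. First, I would observe that $\mathbf{1}\in\R^n$ lies in the kernel of $A_1-A_2$: since both $G_1$ and $G_2$ are $d$-regular, $A_i\mathbf{1}=d\mathbf{1}$ for $i=1,2$, so $(A_1-A_2)\mathbf{1}=0$. As $A_1-A_2$ is symmetric, it preserves the orthogonal decomposition $\R^n=\mathrm{span}(\mathbf{1})\oplus\mathbf{1}^\perp$, so $\|A_1-A_2\|$ equals the norm of its restriction to $\mathbf{1}^\perp$.

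Next, I would bound this restricted norm via the eigenvalue hypothesis. Because $G_2$ is an isomorphic copy of $G_1$, we can write $A_2=P A_1 P^\tran$ for a permutation matrix $P$ (with $P\mathbf{1}=\mathbf{1}$), so $A_2$ has the same spectrum as $A_1$ and the assumption $\max\{|\lambda_2|,|\lambda_n|\}\le\tfrac14 d^{2/3}$ applies to both matrices. Consequently, for every unit vector $v\perp\mathbf{1}$ we have $\|A_i v\|\le\tfrac14 d^{2/3}$ for $i=1,2$, and the triangle inequality yields $\|(A_1-A_2)v\|\le\tfrac12 d^{2/3}$. Thus $\|A_1-A_2\|\le\tfrac12 d^{2/3}$, which gives the claimed upper bound on $\mu=2\|A_1-A_2\|$ up to the constant stated in the proposition.

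Finally, the spectral statements about $M$ follow immediately. By the definition of $\mu$, the eigenvalues of the symmetric matrix $A_1-A_2$ all lie in $[-\mu/2,\mu/2]$, so those of $M=\mu I+(A_1-A_2)$ lie in $[\mu/2,3\mu/2]$. Hence $M$ is PSD and invertible, with condition number $\kappa(M)=(3\mu/2)/(\mu/2)=3$. I do not anticipate any real obstacle: the entire argument reduces to elementary linear algebra, and the only delicate point is invoking the improved eigenvalue bound on $\mathbf{1}^\perp$ (rather than the trivial $\|A_i\|\le d$), which is exactly what the $d$-regularity of the two graphs makes available.
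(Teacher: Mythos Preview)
Your argument is correct and in spirit the same as the paper's: bound $\|A_1-A_2\|$ via the triangle inequality and the nontrivial-eigenvalue hypothesis $\max\{|\lambda_2|,|\lambda_n|\}\le\tfrac14 d^{2/3}$, then read off the spectrum of $M=\mu I+(A_1-A_2)$. The paper condenses the first step to
\[
\tfrac{\mu}{2}=\|A_1-A_2\|\le\|A_1-dI\|+\|A_2-dI\|\le 2\max\{|\lambda_2|,|\lambda_n|\},
\]
which is the same idea but sloppily written (taken literally, $\|A_i-dI\|=d-\lambda_n$, not $\max\{|\lambda_2|,|\lambda_n|\}$); your explicit observation that $(A_1-A_2)\mathbf{1}=0$ and the restriction to $\mathbf{1}^\perp$ is exactly what makes the bound go through, and is the cleaner formulation. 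You are also right about the constant: both the paper's line and your computation yield $\|A_1-A_2\|\le\tfrac12 d^{2/3}$ and hence only $\mu\le d^{2/3}$, off by a factor of two from the stated $\mu\le\tfrac12 d^{2/3}$; this is harmless for the downstream use of the bound.
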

\begin{proof}
By the triangle inequality, 
$ \mu/2
  =   \norm{A_1-A_2} 
  \le \norm{A_1-dI} + \norm{-(A_2-dI)}
  \le 2\maxx{\abs{\lambda_2},\abs{\lambda_n}} 
  \le \tfrac12 d^{2/3}
$. 
The eigenvalues of $A_1-A_2$ are in the range $[-\tfrac12\mu,\tfrac12\mu]$,
and thus those of $M$ are in the range $[\tfrac12\mu,\tfrac32\mu]$. 
\end{proof}

\begin{prop} [Proved in Section~\ref{sec:queryb}]
\label{prop:queryb}
Let the graphs $G_1,G_2$ be according to Proposition~\ref{prop:treelike},
let $M \eqdef \mu I + A_1-A_2 \in\R^{n\times n}$ as above,
and fix $r\leq \rtree/d^2$. 
Then every randomized algorithm that, with probability at least $6/7$, 
approximates coordinate $x_{\hatw}$ of $x=M^{-1}b$ within additive error 
at most $\tfrac15 \norm{x}_\infty$, %
must probe $d^{\Omega(r)}$ entries from $b\in\R^n$,
even when restricted to $b\in\set{-1,0,+1}^n$ that are supported only on 
vertices at distance $r$ from $\hatw$ (in $G_1\cup G_2$). 
\end{prop}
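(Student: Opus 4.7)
The plan is to expand $M^{-1}$ as a Neumann series, identify the dominant contribution to $x_{\hatw}$ with a signed sum over the random signs in $b$, and then reduce the approximation problem to distinguishing a biased coin. Writing $B\eqdef A_1-A_2$, Proposition~\ref{prop:condition} gives $\|B/\mu\|\le\tfrac12$, so $M^{-1}=\mu^{-1}\sum_{t\ge 0}(-B/\mu)^t$ and hence $x_{\hatw}=\mu^{-1}\sum_{t\ge 0}(-\mu)^{-t}\sum_{v}(B^t)_{\hatw,v}\,b_v$. When $b$ is supported on the sphere $S_r\eqdef\set{v:d_{G_1\cup G_2}(\hatw,v)=r}$, every term with $t<r$ vanishes. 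For $r\le t\le\rtree$, Proposition~\ref{prop:treelike} guarantees that every length-$t$ walk from $\hatw$ to $S_r$ stays inside the $2d$-regular tree around $\hatw$, hence $(B^t)_{\hatw,v}=s_v\cdot N(r,t)$, where $s_v\in\set{-1,+1}$ is the signed product along the unique $\hatw$-to-$v$ tree path and $N(r,t)$ counts walks of length $t$ between two vertices at distance $r$ in the $2d$-regular tree (in particular $N(r,r)=1$ and $N(r,r+2k)\le O((2d-1)^k(k+1))$).

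I would choose two hard distributions $\mathcal D_+,\mathcal D_-$ over $b\in\set{-1,0,+1}^n$: for $v\in S_r$ set $b_v=s_v\eta_v$, with i.i.d.\ $\eta_v\in\set{\pm 1}$ satisfying $\EX[\eta_v]=\pm\delta$, and $b_v=0$ otherwise. Substituting into the expansion and using $s_v^2=1$ yields in-tree contributions $\pm(-1)^t\mu^{-t-1}\delta|S_r|N(r,t)$ for $r\le t\le\rtree$, while the escape contribution from $t>\rtree$ is controlled, via Cauchy--Schwarz and $\|B^t\|\le(\mu/2)^t$, by $O(\mu^{-1}\sqrt{|S_r|}\,\delta\,2^{-\rtree})$. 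A Frobenius-norm lower bound on $B$ yields $\mu\ge 2\sqrt{2d}$, so $(2d-1)/\mu^2\le\tfrac14$, the tree-sum $\sigma\eqdef\sum_{k\ge 0}\mu^{-2k}N(r,r+2k)$ is $\Theta(1)$, and since $r\le\rtree/d^2$ the escape term is negligible relative to the main term. This produces $\EX_{\mathcal D_\pm}[x_{\hatw}]=\pm(-1)^r\sigma\delta|S_r|\mu^{-r-1}(1+o(1))$, whose sign identifies the underlying distribution.

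The main obstacle is bounding $\|x\|_\infty$ uniformly with high probability, so that an additive $\tfrac15\|x\|_\infty$ error cannot swing $\sgn(x_{\hatw})$. The naive spectral estimate $|\EX[x_u]|\le 2\delta\mu^{-1}\sqrt{|S_r|}$ obtained from $\|B^t\|\le(\mu/2)^t$ is too weak for large $d$; instead, for $u$ inside the tree neighborhood one would exploit tree symmetry to write $\EX[x_u]$ as a signed walk sum scaled by $\mu^{-d(u,S_r)}$, so that the contribution decays with $d(u,\hatw)$, while for $u$ outside the tree an escape-time estimate shows that typical walks need $\sim\rtree$ steps to even reach $S_r$, keeping $|\EX[x_u]|$ inside the same envelope. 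Hoeffding concentration then bounds the fluctuation $|x_u-\EX[x_u]|=O(\mu^{-1}\sqrt{\log n})$ uniformly in $u$, using $(B^{2t})_{u,u}\le(\mu/2)^{2t}$ and a union bound. Setting $\delta=\Theta(\mu^r\sqrt{\log n}/|S_r|)$ balances mean against fluctuation and gives $\|x\|_\infty\le C\delta|S_r|\mu^{-r-1}$ w.h.p.\ with $C<5/2$, whence $|x_{\hatw}|>\tfrac25\|x\|_\infty$.

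The lower bound then follows by Yao's principle. Any algorithm meeting the theorem's guarantee outputs $\hat x_{\hatw}$ with $|\hat x_{\hatw}-x_{\hatw}|\le\tfrac15\|x\|_\infty$, so $\sgn(\hat x_{\hatw})=\sgn(x_{\hatw})=\pm(-1)^r$ reveals which of $\mathcal D_\pm$ was drawn. A probe to $b_v$ with $v\notin S_r$ returns $0$ and carries no information; a probe with $v\in S_r$ returns $s_v\eta_v$, and since the fixed graph (and hence $s_v$) is known to the algorithm, this is one sample of a $\pm 1$ coin with bias $\pm\delta$. Distinguishing such coins with success probability $\tfrac67$ requires $\Omega(\delta^{-2})$ samples by a standard total-variation argument, and $\delta^{-2}=\Theta(|S_r|^2/(\mu^{2r}\log n))\ge ((2d-1)^2/\mu^2)^r/\log n\ge(\Omega(d^{2/3}))^r=d^{\Omega(r)}$ completes the proof.
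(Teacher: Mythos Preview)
Your outline matches the paper's: Neumann expansion of $M^{-1}$, a sign-biased $b$ supported on the sphere $V_r$ around $\hatw$, and a reduction to biased-coin distinguishing via Lemma~\ref{lem:BinomProbes}. You also correctly isolate the $\norm{x}_\infty$ bound as the crux. Two issues.

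First, a minor but real error: the claim $N(r,r+2k)\le O\bigl((2d-1)^k(k+1)\bigr)$ is false already at $k=1$, where a direct count gives $N(r,r+2)=(r+1)(2d-1)$ (the single ``away'' step can sit in any of $r+1$ positions along the path and has $2d-1$ directions). Consequently $\sigma$ is \emph{not} $\Theta(1)$; it grows with $r$. This is harmless for your argument, since only $\sigma\ge N(r,r)=1$ is actually used and your choice of $\delta$ then makes the mean dominate the fluctuation, but the $\Theta(1)$ claim and the stated bound on $N$ should be dropped.

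Second, and more substantively: your sketch of the bound on $\abs{\EX[x_u]}$ for $u$ in the tree is not an argument. Saying the contribution is ``a signed walk sum scaled by $\mu^{-d(u,S_r)}$'' and ``decays with $d(u,\hatw)$'' does not produce the needed inequality; for $u\in V_k$ the leading term is $\mu^{-|r-k|}$ times a walk count of order $(2d-1)^{\max(r-k,0)}$, and one must compare this to the analogous quantity for $\hatw$. The paper supplies precisely the missing combinatorial step (Eqn.~\eqref{eq:Ui}): prefixing any length-$i$ walk from $u\in V_k$ to $V_r$ by the unique length-$k$ path $\hatw\to u'$ for each $u'\in V_k$ yields distinct length-$(i{+}k)$ walks from $\hatw$ to $V_r$, hence $\card{U_i}\le\card{W_{i+k}}/\card{V_k}$ and therefore $\sum_i\mu^{-i}\card{U_i}\le(\mu/(2d-1))^k\,Q$. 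Since $\mu=O(d^{2/3})\ll 2d-1$, this gives $\abs{\EX[x_u]}\le\abs{\EX[x_{\hatw}]}$, after which your global Hoeffding fluctuation bound $\abs{x_u-\EX[x_u]}=O(\mu^{-1}\sqrt{\log n})$ (via $\norm{M^{-1}e_u}_2\le 2/\mu$) does suffice. The paper instead bounds $\abs{x_u}$ directly via a per-ancestor decomposition $U_i=\bigcup_j U_{i,j}$ and separate Hoeffding bounds on the partial sums $Z_{u_j}$ (Lemmas~\ref{lem:ismall} and~\ref{lem:xinfty}); your mean-plus-fluctuation split is a legitimate and somewhat cleaner alternative that even permits a smaller $\delta$, but it still needs the walk-comparison step~\eqref{eq:Ui}, which your proposal does not provide.
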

We can now prove Theorem~\ref{thm:lbPSD} using the above 3 propositions.
Let $G_1$,$G_2$,$A_1$,$A_2$ and $M$ be as required for these propositions,
and fix $r= \rtree/d^2$. 
Let $S\eqdef M$ and observe that it has the sparsity and condition number
required for Theorem~\ref{thm:lbPSD}, 
and let the distinguished index be $u \eqdef \hatw$. 
Now consider a randomized algorithm that, given an input $b\in\R^n$,
estimates coordinate $x^*_u$ of $x^*=S^{-1}b$,
or in other words, coordinate $x_{\hatw}$ of $x=M^{-1}b$.
We can then apply Proposition~\ref{prop:queryb} 
and deduce that this algorithm must probe $b\in\R^n$ in
\[
  d^{\Omega(r)}
  \ge d^{\Omega((\log_{4d} n)/d^2)}
  \ge n^{\Omega(1/d^2)}
\] 
entries, which proves Theorem~\ref{thm:lbPSD}. 

\subsection{Proof of Proposition~\ref{prop:queryb}}
\label{sec:queryb}

Let $V_k\subset V$ be the set of vertices at distance exactly $k$ from $\hatw$
in the edge-union graph $G_1\cup G_2$. 
By the Proposition~\ref{prop:treelike}, 
we can view the radius-$\rtree$ neighborhood of $\hatw$ 
as a tree rooted at $\hatw$.
In particular, for all $k\le \rtree$ 
we have $\card{V_k} = 2d(2d-1)^{k-1} \simeq (2d-1)^k$.
For each vertex $v\in V_k$, 
let $s_v\in\set{\pm1}$ be the value of entry $(\hatw,v)$ in $(A_2-A_1)^k$, 
i.e., the product of the signs along the unique length-$k$ walk 
from $\hatw$ to $v$ in $A_2-A_1$ (i.e., the shortest path in $G_1\cup G_2$). 

Now generate a random $b\in\set{-1,0,+1}^n$ as follows.
First pick an unknown (or random) ``signal'' $\sigma\in\set{\pm1}$;
then use it to choose for each $v\in V_{r}$, 
a random $b_v\in \set{\pm 1}$ with a small bias $\delta>0$ (determined below)
towards $\sigma s_v\in\set{\pm1}$, i.e., 
\[
  \EX[b_v|\sigma] 
  = (\tfrac12 + \tfrac{\delta}{2})\sigma s_v
  + (\tfrac12 - \tfrac{\delta}{2})(-\sigma s_v)
  = \delta\sigma s_v.
\]
Observe that $\EX[s_v b_v | \sigma] = s_v (\delta\sigma s_v) = \delta\sigma$,
which means that $s_v b_v$ has a small bias towards the signal $\sigma$.
Finally, let all other entries be $0$, i.e., $b_v=0$ for $v\notin V_{r}$. 
Observe that $\norm{b}_2^2 = \card{\supp(b)} = \card{V_{r}}$
and $\EX[\sigma \sum_{v\in V_r} s_v b_v \mid \sigma] = \delta \card{V_r}$. 
We set the bias to be
\begin{equation} \label{eq:bias} 
  \delta \eqdef C (r^2\log d)\ \card{V_{r}}^{-1/3} 
\end{equation}
for a sufficiently large constant $C>0$.
Notice that $\set{s_v}_{v\in V_r}$ have fixed values known to the algorithm,
hence observing $b_v$ (by probing this entry of $b$) 
is information-theoretically equivalent to observing $s_v b_v$. 

The next lemma is standard and follows easily from Yao's minimax principle,
together with a bound on the total-variation distance between two Binomial
distributions, with biases $\tfrac12+\delta$ and $\tfrac12-\delta$),
see e.g. \cite[Fact D.1.3]{Cannone15} or \cite[Eqn.~(2.15)]{Adell2006}.

\begin{lemma} \label{lem:BinomProbes}
Every randomized algorithm that, with probability at least $1/2+\gamma$
for $\gamma\in (0,1/2)$, recovers an unknown signal $\sigma\in\set{\pm 1}$ 
from $b_1,b_2,\ldots \in\set{\pm 1}$,
each set independently to $\sigma$ or $-\sigma$ with bias $\delta>0$, 
must probe at least $\Omega(\delta^{-2}\gamma^2)$ entries of $b$. 
\end{lemma}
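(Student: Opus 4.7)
The plan is to prove this as a standard hypothesis-testing lower bound via Yao's minimax principle, reducing the problem to bounding the total variation distance between the two product distributions corresponding to $\sigma=+1$ and $\sigma=-1$. First I would observe that by symmetry (each $b_i$ is i.i.d.\ conditional on $\sigma$) we may restrict attention, without loss of generality, to \emph{non-adaptive} deterministic algorithms that probe a fixed prefix of $k$ coordinates $b_1,\ldots,b_k$ and then output a $\pm 1$ decision based on them; adaptivity gives nothing because the coordinates are exchangeable, and randomization gives nothing by Yao's minimax principle applied to the uniform prior $\sigma\in\set{\pm 1}$.

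Next I would reduce to a binomial hypothesis test. Under $\sigma=+1$, the sample $(b_1,\ldots,b_k)$ has distribution $P_+ = \text{Bin}(k,\tfrac12+\delta)$ (after relabeling $+1$s) and under $\sigma=-1$ it has distribution $P_- = \text{Bin}(k,\tfrac12-\delta)$. A standard fact from hypothesis testing says that the optimal success probability of distinguishing $P_+$ from $P_-$ (with uniform prior) equals $\tfrac12 + \tfrac12 \, d_{\mathrm{TV}}(P_+,P_-)$. Therefore achieving success probability $\tfrac12+\gamma$ forces
\[
  d_{\mathrm{TV}}\big(\text{Bin}(k,\tfrac12+\delta), \text{Bin}(k,\tfrac12-\delta)\big) \;\ge\; 2\gamma.
\]

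The key quantitative step is to upper-bound this TV distance by $O(\delta\sqrt{k})$. For this I would invoke the bound cited in the lemma statement (from~\cite{Cannone15} or~\cite{Adell2006}), or, alternatively, apply Pinsker's inequality together with the standard KL bound $D_{\mathrm{KL}}\big(\text{Ber}(\tfrac12+\delta) \,\|\, \text{Ber}(\tfrac12-\delta)\big) = O(\delta^2)$ and tensorization across the $k$ independent coordinates, giving $d_{\mathrm{TV}}(P_+,P_-) \le \sqrt{\tfrac12 k \cdot O(\delta^2)} = O(\delta\sqrt{k})$. Combining the two displayed inequalities yields $\delta\sqrt{k} \ge \Omega(\gamma)$, i.e.\ $k \ge \Omega(\gamma^2/\delta^2)$, which is the claimed bound on the number of probes.

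The only substantive obstacle is handling the fact that a general algorithm may be adaptive and randomized, rather than a fixed $k$-sample test; this is dispatched cleanly by the exchangeability of the coordinates together with Yao's principle, so the proof reduces in a routine way to the binomial TV bound. Everything else is a direct application of well-known inequalities, which is why the paper simply cites the quantitative TV estimate rather than proving it from scratch.
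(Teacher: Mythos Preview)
Your proposal is correct and matches the paper's approach exactly: the paper states that the lemma ``follows easily from Yao's minimax principle, together with a bound on the total-variation distance between two Binomial distributions'' and cites the same references you invoke. Your additional remarks (the exchangeability argument disposing of adaptivity, and the alternative Pinsker/KL route to the $O(\delta\sqrt{k})$ bound) just flesh out what the paper leaves implicit.
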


We proceed to analyze $x_{\hatw}$, 
aiming to show that it can be used to recover $\sigma$, 
namely, that with high probability $\sgn(x_{\hatw})=\sigma$. 
Later we will bound $\norm{x}_\infty$ 
aiming to show a similar conclusion for $x_{\hatw} \pm \tfrac15 \norm{x}_\infty$. 
For convenience, denote $B \eqdef \frac{A_2-A_1}{\mu}$,
hence $\norm{B} = \tfrac{\mu/2}{\mu}=\tfrac12$ and 
\[
  M^{-1} 
  = (\mu (I-\tfrac{A_2-A_1}{\mu}))^{-1}
  = \mu^{-1} \sum_{i\ge0} B^i,
\]
and since $B$ is symmetric, for every vertex $u\in V$ (including $\hatw$), 
\begin{equation} \label{eq:xu}
  x_u
  = \tuple{e_u, M^{-1}b}
  = \mu^{-1} \sum_{i\ge 0} \tuple{e_u,B^i b}
  = \mu^{-1} \sum_{i\ge 0} b^\tran B^i e_u . 
\end{equation}
Each summand $b^\tran B^i e_u$ can be viewed as the summation, 
over all length-$i$ walks from vertex $u$,
of the coordinate $b_v$ corresponding to the walk's end-vertex $v$, 
multiplied by $\mu^{-i}$ 
and by the product of the signs of $A_2-A_1$ along the walk. 
We can restrict the summation to walks ending at vertices $v\in V_{r}$, 
as otherwise $b_v=0$. 

\begin{lemma} \label{lem:ilarge}
For every vertex $u\in V$ (including $\hatw$), 
\[
  \sum_{i\ge 2r\log\mu} \Big| b^\tran B^i e_u \Big|
  \le 
  \tfrac14 \mu^{-2r}  \cdot\delta\card{V_{r}} .
\]
\end{lemma}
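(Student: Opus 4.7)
The plan is to bound each summand $|b^\tran B^i e_u|$ using the operator norm of $B$, and then sum the resulting geometric tail. By Proposition~\ref{prop:condition}, $\norm{A_1-A_2}=\mu/2$, so $\norm{B}=1/2$; since $B$ is symmetric, $\norm{B^i e_u}_2\le \norm{B}^i\cdot \norm{e_u}_2 = 2^{-i}$. Combining Cauchy--Schwarz with the fact that $b\in\set{-1,0,+1}^n$ is supported on $V_r$ (so $\norm{b}_2=\sqrt{\card{V_r}}$) then gives
\[
  \abs{b^\tran B^i e_u}
  \le \norm{b}_2\cdot \norm{B^i e_u}_2
  \le 2^{-i}\sqrt{\card{V_r}}.
\]

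Next I would sum this bound over $i\ge 2r\log\mu$, reading $\log\mu$ as $\log_2\mu$ (as forced by the $\mu^{-2r}$ appearing on the right-hand side of the lemma):
\[
  \sum_{i\ge 2r\log\mu}\abs{b^\tran B^i e_u}
  \le \sqrt{\card{V_r}}\sum_{i\ge 2r\log_2\mu} 2^{-i}
  \le 2\mu^{-2r}\sqrt{\card{V_r}}.
\]
If $2r\log_2\mu\le 0$ the full geometric series sums to $2$, and since $\mu^{-2r}\ge 1$ in that regime the displayed bound is still valid, so this single expression covers all cases.

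It remains to show that the right-hand side is at most $\tfrac14\mu^{-2r}\cdot\delta\card{V_r}$, which after cancelling $\mu^{-2r}$ reduces to verifying $\delta\sqrt{\card{V_r}}\ge 8$. Substituting $\delta = Cr^2\log d\cdot \card{V_r}^{-1/3}$ from~\eqref{eq:bias}, this becomes $Cr^2\log d\cdot \card{V_r}^{1/6}\ge 8$. Since $\card{V_r}=2d(2d-1)^{r-1}\ge 2d$ for $d\ge 2$ and $r\ge 1$, this holds whenever the constant $C$ in the definition of $\delta$ is chosen sufficiently large, which is exactly the kind of universal constant that~\eqref{eq:bias} permits us to select.

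I do not anticipate a genuine obstacle here: the proof is essentially a single Cauchy--Schwarz together with $\norm{B}=1/2$, and the only quantitative check is that the $r^2\log d$-prefactor baked into $\delta$ by~\eqref{eq:bias} is large enough to absorb the $\sqrt{\card{V_r}}$ loss incurred by Cauchy--Schwarz. The only mild subtlety is checking the edge case $2r\log_2\mu\le 0$, handled by the parenthetical remark above.
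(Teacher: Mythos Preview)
Your proposal is correct and follows essentially the same approach as the paper: Cauchy--Schwarz together with $\norm{B}=\tfrac12$ to bound each term by $2^{-i}\sqrt{\card{V_r}}$, then sum the geometric tail and absorb the resulting $\sqrt{\card{V_r}}$ using the choice of $\delta$ in~\eqref{eq:bias}. Your explicit treatment of the edge case $2r\log_2\mu\le 0$ is a small extra detail the paper omits.
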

\begin{proof}[Proof of Lemma~\ref{lem:ilarge}]
For each $i$, we have by Cauchy-Schwartz 
$
  \abs{b^\tran B^i e_u}
  \le \norm{b}_2 \cdot \norm{B}_2^i
  \le \card{V_{r}}^{1/2}\cdot 2^{-i} 
$, 
and then by our choice of the bias $\delta$ in \eqref{eq:bias},  
\[
  \sum_{i\ge 2r\log\mu} \abs{b^\tran B^i e_u}
  \le \card{V_{r}}^{1/2} \sum_{i\ge 2r\log\mu} 2^{-i} 
  \le (\card{V_{r}} \cdot \delta/8) \cdot 2\mu^{-2r} .
\]
\end{proof}

Recall that by Proposition~\ref{prop:treelike}, 
the neighborhood of $\hatw$ of radius $\rtree$ is a tree,
and view it as a tree rooted at $\hatw$. 
For a vertex $u$ in this tree, let $S_u$ 
be the set of all vertices $v\in V_r$ that are descendants of $u$;
for example, $S_{\hatw} = V_r$, 
and if the distance of $u$ from $\hatw$ is greater than $r$ then $S_u=\emptyset$.
Define a random variable $Z_u \eqdef \sum_{v\in S_u} s_v b_v$,
whose expectation is 
\[
  \EX[Z_u] 
  = \sum_{v\in S_u} \EX[s_v b_v \mid \sigma] 
  = \card{S_u}\cdot \delta\sigma.
\]

\begin{lemma} \label{lem:ismall}
With probability at least $6/7$, 
\begin{equation} \label{eq:ismalldeviation}
  \forall 0\le k\le r,\ 
  \forall u\in V_k, 
  \qquad
  \big| Z_u -\EX[Z_u] \big| 
  \leq O\left( \sqrt{\card{S_u}} \cdot \ln (3\card{V_k}) \right) .
\end{equation}
\end{lemma}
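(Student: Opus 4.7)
My plan is to apply Hoeffding's inequality to each $Z_u$ separately and then take a union bound over all $u$ at distance $k\le r$ from $\hatw$. First I would condition on the signal $\sigma\in\set{\pm1}$; then $\set{b_v}_{v\in V_r}$ are mutually independent, each taking values in $\set{-1,+1}$ with mean $\delta\sigma$. Because the signs $\set{s_v}_{v\in V_r}$ are deterministic functions of $G_1,G_2,\hatw$, the products $\set{s_v b_v}_{v\in V_r}$ are likewise independent, each bounded in $[-1,1]$ with mean $\delta\sigma$.

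Next, fix $0\le k\le r$ and $u\in V_k$. Since $k\le r\le\rtree$, the descendant-set $S_u\subseteq V_r$ is well-defined via the tree guaranteed by Proposition~\ref{prop:treelike}, and
\[
  Z_u-\EX[Z_u]
  = \sum_{v\in S_u}\bigl(s_v b_v - \delta\sigma\bigr)
\]
is a sum of $|S_u|$ independent, centered random variables, each bounded by $2$ in absolute value. Hoeffding's inequality then gives
\[
  \Pr\bigl[|Z_u-\EX[Z_u]| > t\bigr]
  \le 2\exp\!\Bigl(-\tfrac{t^2}{2|S_u|}\Bigr) ,
\]
and I would set $t = C\sqrt{|S_u|}\,\ln(3|V_k|)$ for a sufficiently large absolute constant $C$, which makes the right-hand side at most $2(3|V_k|)^{-C^2(\ln 3)/2}$.

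The final step is a union bound over $u\in V_k$ and $k\in\set{0,1,\ldots,r}$: the total failure probability is at most
\[
  \sum_{k=0}^{r} 2|V_k|\cdot(3|V_k|)^{-C^2(\ln 3)/2},
\]
which is bounded by $1/7$ for $C$ large enough --- each summand is polynomially small in $|V_k|\ge 1$, and there are only $r+1 = O(\log_d n)$ terms to sum. I do not foresee any serious obstacle: the computation is a routine Hoeffding-plus-union-bound argument, and the only care required is to pick $C$ large enough so that the tree-sized union bound is absorbed while the resulting deviation estimate retains the claimed form $O(\sqrt{|S_u|}\,\ln(3|V_k|))$.
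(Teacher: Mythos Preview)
Your proposal is correct and follows essentially the same Hoeffding-plus-union-bound route as the paper. One small point to tighten: in the final summation the bound $\le 1/7$ follows from the geometric growth $|V_k|\ge (2d-1)^k$ (so the terms $|V_k|\cdot(3|V_k|)^{-C^2(\ln 3)/2}$ decay geometrically and the series is dominated by its first term), not from there being only $r+1=O(\log_d n)$ summands --- the $k=0$ term is already a fixed constant independent of $n$, so a term-count argument alone would not give an absolute constant $C$.
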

We remark that the constant $3$ is somewhat arbitrary but needed to make sure 
the righthand-side is positive even for $k=0$ (as $\card{V_0}=1$). 
In addition, applying \eqref{eq:ismalldeviation} to $\hatw\in V_0$ yields, 
by our choice of the bias $\delta$ in \eqref{eq:bias},  
\begin{equation} \label{eq:ismalldeviation_w}
  \big| Z_{\hatw} -\EX[Z_{\hatw}] \big| 
  \le O\left( \sqrt{\card{V_r}} \cdot \ln (3\card{V_r}) \right) 
  \le \tfrac14 \delta \card{V_r}.
\end{equation}

\begin{proof}[Proof of Lemma~\ref{lem:ismall}]
Fix $0\le k\le r$ and $u\in V_k$. 
By Hoeffding's inequality, for every $c>0$, 
\[
  \Pr\Big[ \abs{Z_u - \EX[Z_u]} \ge c \sqrt{\card{S_u}\ln (3\card{V_k})} \Big] 
  \le e^{- 2 c^2 \card{S_u}\ln (3\card{V_k}) / (4\card{S_u}) }
  \le e^{- (c^2/2) \ln (3\card{V_k}) } 
  = (3\card{V_k})^{- c^2/2}. 
\]
By a union bound over all $u\in V_0\cup\cdots\cup V_r$, 
\[
  \Pr\Big[ \exists u,\ \abs{Z_u - \EX[Z_u]} 
              \ge c \sqrt{\card{S_u}\ln (3\card{V_k})} 
     \Big] 
  \le \sum_{k=0}^r \card{V_k} \cdot (3\card{V_k})^{- c^2/2 }
  = \tfrac13 \sum_{k=0}^r (3\card{V_k})^{1 - c^2/2 }. 
\]
For all $c\ge 2$ this series is decreasing geometrically, 
because $\card{V_k}$ grows at least by a factor of $2d-1 \ge 5$,
and thus the sum is dominated by its first term. 
By choosing $c$ to be an appropriate constant,
the first term (and the entire sum) can be made arbitrarily small.
\end{proof}

We assume henceforth that the event described in Lemma~\ref{lem:ismall} occurs.
Let $W_i$ be the set of all walks of length $i$ that start at $\hatw$ 
and end (at some vertex) in $V_r$, 
i.e., at distance exactly $r$ from $\hatw$.
Define 
\[
  Q\eqdef \sum_{i=r}^{5r\log \mu} \mu^{-i}\card{W_i}. 
\]
We make two remarks.
First, we can equivalently start the summation from $i=0$,
because $W_i=\emptyset$ for all $i<r$.
Second, the leading constant $5$ here is bigger than 
the $2$ used in Lemma~\ref{lem:ilarge},
this is intentional and the slack be needed at the very end of the proof.

\begin{lemma} \label{lem:xw}
If the event in Lemma~\ref{lem:ismall} occurs, then 
\[
  x_{\hatw} \in (\sigma \pm\tfrac12) \delta\cdot \mu^{-1}Q, 
\]
and thus $\sgn(x_{\hatw})=\sigma$ (i.e., recovers the signal). 
\end{lemma}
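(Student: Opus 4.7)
The plan is to expand $x_{\hatw}$ through the Neumann series \eqref{eq:xu} and split the resulting sum at the cutoff $i_0\eqdef 5r\log\mu$. Using the hypotheses $r\le\rtree/d^2$ and $\mu\le d^{2/3}/2$, a quick computation gives $i_0<\rtree$, so every walk of length $i\le i_0$ starting at $\hatw$ stays inside the $2d$-regular tree provided by Proposition~\ref{prop:treelike}.

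Inside the tree I would exploit the following structural identity: for any length-$i$ walk $w$ from $\hatw$ ending at some $v\in V_r$, the sign of $w$ (the product of $A_2-A_1$ entries along $w$) equals $s_v$, because each backtracked edge contributes $(\pm1)^2=1$ and only the unique shortest-path sign $s_v$ survives. Combined with tree-regularity, the number of length-$i$ walks from $\hatw$ to any fixed $v\in V_r$ is a single quantity $n_{i,r}$, so $\card{W_i}=n_{i,r}\card{V_r}$ and
\[
  b^\tran B^i e_{\hatw} \;=\; \mu^{-i} n_{i,r}\sum_{v\in V_r} s_v b_v \;=\; \mu^{-i}n_{i,r}Z_{\hatw}.
\]
Summing over $0\le i\le i_0$ (using $n_{i,r}=0$ for $i<r$), the main-term contribution to $x_{\hatw}$ becomes
\[
  \mu^{-1}\,\frac{Z_{\hatw}}{\card{V_r}}\cdot Q,
\]
which by \eqref{eq:ismalldeviation_w} lies in $(\sigma\pm\tfrac14)\delta\mu^{-1}Q$.

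For the tail $i>i_0$ I would re-run the Cauchy--Schwarz argument of Lemma~\ref{lem:ilarge} with cutoff $i_0$ in place of $2r\log\mu$: since $\norm{B}=\tfrac12$ and $\norm{b}_2^2=\card{V_r}$,
\[
  \sum_{i>i_0}\abs{b^\tran B^i e_{\hatw}} \;\le\; \sqrt{\card{V_r}}\sum_{i>i_0}2^{-i} \;\le\; 2\sqrt{\card{V_r}}\,\mu^{-5r}.
\]
Together with $Q\ge \mu^{-r}\card{V_r}$ (from the $i=r$ term, where $\card{W_r}=\card{V_r}$) and the scaling of $\delta$ from \eqref{eq:bias}, a direct comparison shows this tail is at most $\tfrac14\delta\mu^{-1}Q$; the verification reduces to $\mu^{-4r}\ll\card{V_r}^{1/6}$, which holds comfortably because $r\le\rtree/d^2$ and $\card{V_r}\approx (2d-1)^r$. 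Adding the two contributions gives $x_{\hatw}\in(\sigma\pm\tfrac12)\delta\mu^{-1}Q$, and since $Q,\delta>0$ this forces $\sgn(x_{\hatw})=\sigma$. The main obstacle is really the tree-walk sign identity $s(w)=s_v$: it is what allows $Z_{\hatw}$ to factor cleanly out of every length-$i$ summand, so that the leading contribution becomes proportional to $Q$ rather than to some incoherent alternating sum; the tail comparison with $Q$ is then essentially routine.
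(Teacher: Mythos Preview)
Your proposal is correct and follows essentially the same route as the paper: expand $x_{\hatw}$ via \eqref{eq:xu}, use the tree structure to show each summand $b^\tran B^i e_{\hatw}$ equals $\tfrac{Z_{\hatw}}{\card{V_r}}\mu^{-i}\card{W_i}$ (your $n_{i,r}$ is the paper's $\card{W_i}/\card{V_r}$), invoke \eqref{eq:ismalldeviation_w} to get the $(\sigma\pm\tfrac14)\delta Q$ main term, and bound the tail by $\tfrac14\delta Q$. The only cosmetic difference is that the paper controls the tail by citing Lemma~\ref{lem:ilarge} (cutoff $2r\log\mu$) and then using $\mu^{-2r}\card{V_r}\le Q$, whereas you redo the Cauchy--Schwarz argument directly at the sharper cutoff $5r\log\mu$; both work.
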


\begin{proof}[Proof of Lemma~\ref{lem:xw}]
We would like to employ~\eqref{eq:xu} 
and the interpretation of $b^\tran B^i e_{\hatw}$ via walks of length $i$. 
To this end, fix $0\le i\le 5r\log \mu$.
Observe that $i\le \rtree$, 
hence a walk of length $i$ from $\hatw$ is entirely contained in 
the $2d$-regular tree formed by the neighborhood of $\hatw$ of radius $\rtree$.
Each such walk contributes the value $b_v$ at the walk's end vertex $v$,
multiplied by all the signs seen along the walk.
We make two observations.
First, we can restrict attention to end vertices $v\in V_r$
(and in particular $i\ge r$), because otherwise $b_v=0$.
Moreover, the same number of walks end at each $v\in V_r$, by symmetry. 
Second, the signs along a walk in a tree cancel, 
except for the signs on the shortest path between $\hatw$ and $v$
(the start and end vertices), 
hence the product of these signs is exactly $s_v$.
By symmetry, the number of walks ending at each $v\in V_r$ is the same, 
namely, $\tfrac{\card{W_i}}{\card{V_r}}$, 
and thus 
\begin{equation} \label{eq:xu_hatw}
  b^\tran B^i e_{\hatw}
  = \sum_{v\in V_r} \tfrac{\card{W_i}}{\card{V_r}} \mu^{-i} s_v b_v 
  = \tfrac{Z_{\hatw}}{\card{V_r}} \cdot \mu^{-i} \card{W_i} . 
\end{equation}
Assuming the event in Lemma~\ref{lem:ismall} occurs, 
we have 
$
  Z_{\hatw} 
  \in (\delta\sigma|S_{\hatw} | \pm \tfrac14 \delta \card{V_r}) 
  =   (1\pm \tfrac14) \sigma\delta \card{V_r} 
$,
and therefore (recall terms for $i<r$ have zero contribution)
\[
  \sum_{i=0}^{5r\log\mu} b^\tran B^i e_{\hatw}
  \in \sum_{i=r}^{5r\log\mu} (1\pm \tfrac14) \sigma\delta\cdot \mu^{-i} \card{W_i} 
  = (1\pm \tfrac14) \sigma\delta\cdot Q.
\]

For the range of $i>5r\log\mu$, we can use Lemma~\ref{lem:ilarge} 
and the obvious $\card{W_r} = \card{V_r}$ to derive 
\[
  \Big| \sum_{i > 5r\log\mu} b^\tran B^i e_{\hatw} \Big|
  \le \sum_{i > 5r\log\mu} \Big| b^\tran B^i e_{\hatw} \Big|
  \le \tfrac14 \mu^{-2r}  \cdot\delta\card{V_{r}} 
  \le \tfrac14 \delta Q.
\]
Altogether, plugging into~\eqref{eq:xu} we obtain 
\[
  \mu\cdot x_{\hatw} 
  = \sum_{i\ge 0} b^\tran B^i e_{\hatw} 
  \in \sum_{i=0}^{5r\log\mu} (1\pm\tfrac14) \sigma\delta \cdot Q
      \pm \tfrac14 \delta\cdot Q
  = (1\pm\tfrac12) \sigma\delta\cdot Q,  
\]
which proves the lemma because $\sigma\in\set{\pm1}$.
\end{proof}

\begin{lemma} \label{lem:xinfty}
If the event in Lemma~\ref{lem:ismall} occurs, then 
\[
  \norm{x}_\infty \le 2\delta\cdot \mu^{-1}Q.
\]
\end{lemma}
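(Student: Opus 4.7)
The plan is to fix an arbitrary $u\in V$ and retrace the proof of Lemma~\ref{lem:xw}, replacing $\hatw$ with $u$ and tracking how the resulting bound degrades by at most a small constant factor. Starting from~\eqref{eq:xu}, split $\mu\abs{x_u}$ into a head $i\le 5r\log\mu$ and a tail $i>5r\log\mu$. The tail is handled directly by Lemma~\ref{lem:ilarge}, which gives a bound of $\tfrac14\mu^{-2r}\delta\card{V_r}$; this is at most $\tfrac14\delta Q$ since $Q\ge \mu^{-r}\card{V_r}$ (the $i=r$ term of $Q$, using $\card{W_r}=\card{V_r}$). Note the proof of Lemma~\ref{lem:ilarge} already works for an arbitrary $u$, not just $\hatw$, so no new tail analysis is required.

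For the head, the choice $r\le \rtree/d^2$ together with $\mu\le d^{2/3}$ ensures $5r\log\mu + r\le \rtree$, so every length-$i$ walk (with $i\le 5r\log\mu$) from a vertex $u$ in the radius-$r$ neighborhood of $\hatw$ that ends in $V_r$ lies inside the $2d$-regular tree of Proposition~\ref{prop:treelike}. In a tree, the signed count of walks from $u$ to $v$ telescopes to $s_{u,v}W_i(u,v)$, where $W_i(u,v)$ is the unsigned count, and a short lowest-common-ancestor argument gives the identity $s_{u,v}=s_u\cdot s_v$. Substituting yields
\[
  b^\tran B^i e_u = \mu^{-i}\, s_u \sum_{v\in V_r} W_i(u,v)\, Y_v,
\]
where $Y_v\eqdef s_vb_v$ are independent $\pm 1$-variables with bias $\EX[Y_v\mid \sigma]=\delta\sigma$. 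Splitting each summand into its conditional mean, equal to $\mu^{-i}s_u\,\delta\sigma\, W_i^u$ with $W_i^u\eqdef \sum_v W_i(u,v)$, and a zero-mean fluctuation, and invoking the same Hoeffding-plus-union-bound argument as in Lemma~\ref{lem:ismall}, the head contributes at most $(1+o(1))\delta\mu^{-1}Q_u$, where $Q_u\eqdef \sum_{i\le 5r\log\mu}\mu^{-i}W_i^u$.

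The main obstacle is to establish the uniform comparison $Q_u\le (1+o(1))Q$. The plan is a decomposition argument: every walk from $u$ to $V_r$ in the tree splits into an initial ``climb'' of some $k\ge 0$ steps towards $\hatw$, followed by a walk from the climb-endpoint into $V_r$ that is, by the $2d$-regular tree symmetry, in bijection with length-$(i-k)$ walks emanating from $\hatw$ (up to pruning branches containing $u$). This yields $W_i^u \le \sum_k (2d-1)^{k}\cdot \card{W_{i-k}}/(2d-1)^k$, and the geometric factor $\mu^{-i}$ combined with $\mu\ll 2d-1$ shows that $Q_u$ is dominated by $Q$. For $u$ outside the radius-$r$ ball (where the tree identity fails for small $i$), one instead uses that $W_i^u=0$ whenever $i<d(u,\hatw)-r$, and the $\mu^{-i}$ decay together with Lemma~\ref{lem:ilarge} handles the remaining range without needing tree cancellation. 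Combining the head ($\le (1+o(1))\delta\mu^{-1}Q$) and tail ($\tfrac14\delta\mu^{-1}Q$) bounds and taking the maximum over $u$ produces $\norm{x}_\infty\le 2\delta\mu^{-1}Q$, as claimed.
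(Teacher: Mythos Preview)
Your approach has a structural gap. The lemma is stated conditionally on the event of Lemma~\ref{lem:ismall}, and that event controls only the \emph{unweighted} subtree sums $Z_{u'}=\sum_{v\in S_{u'}}s_vb_v$ for vertices $u'$ on levels $0,\ldots,r$. Your head analysis, by contrast, needs concentration for the \emph{weighted} sums $\sum_{v\in V_r}W_i(u,v)\,s_vb_v$, whose weights $W_i(u,v)$ are highly nonuniform over $V_r$ when $u\ne\hatw$ (e.g.\ for $u\in V_r$ and $i=0$ they concentrate on a single vertex). Invoking a fresh ``Hoeffding-plus-union-bound'' for these sums is not permitted here: it introduces a new high-probability event that is not part of the hypothesis, so you would no longer be proving the lemma as stated. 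Your case split is also too tight: restricting the tree analysis to $u$ within radius $r$ of $\hatw$ leaves uncovered all $u$ at depth between $r$ and roughly $r+2r\log\mu$, which can reach $V_r$ in few steps and for which neither your ``$W_i^u=0$'' observation nor Lemma~\ref{lem:ilarge} applies.

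The paper fixes both issues by a different decomposition. For $u$ at depth $k\le r+2r\log\mu$ with ancestors $\hatw=u_0,u_1,\ldots,u_k=u$, it partitions the walks $U_i$ (from $u$ to $V_r$, length $i\le 2r\log\mu$) according to the highest ancestor visited: $U_{i,j}$ consists of walks that reach $u_j$ but not $u_{j-1}$. The point is that within each $U_{i,j}$, by tree symmetry, every vertex of $S_{u_j}$ is hit by the same number of walks, so one gets the exact identity
\[
  b^\tran B^i e_u \;=\; \sum_{j=0}^{k}\frac{Z_{u_j}}{|S_{u_j}|}\,\mu^{-i}\,|U_{i,j}|,
\]
which is a genuine linear combination of the $Z_{u_j}$'s and can therefore be bounded using only the event of Lemma~\ref{lem:ismall}. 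The comparison with $Q$ then comes from the injection $|W_{i+k}|\ge(2d-1)^k|U_i|$ (prepend the direct path from $\hatw$ to any $u'\in V_k$ and ``imitate'' the walk), together with the crucial bound $\mu\le\tfrac12 d^{2/3}$, which forces $(\mu^2/(2d-1))^{k/2}\le|V_r|^{1/6}$ and makes the deviation term absorbable into $\delta$. Your sketched inequality ``$W_i^u\le\sum_k(2d-1)^k|W_{i-k}|/(2d-1)^k$'' collapses to $\sum_k|W_{i-k}|$ and loses exactly this $(\mu/(2d-1))^k$ gain, so it cannot close the argument.
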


\begin{proof}
Fix $u\in V$, and let us bound $\abs{x_{u}}$. 
Similarly to the proof of Lemma~\ref{lem:xw}, 
we employ~\eqref{eq:xu} and interpret $b^\tran B^i e_{u}$ via walks of length $i$,
which now start at vertex $u$ rather than at $\hatw$. 

Let $k$ be the distance of $u$ from $\hatw$, i.e., $u\in V_k$.
The case $k > r + 2r\log\mu$ is easy, as follows.
A walk of length $i\le 2r\log\mu$ from $u\in V_k$ cannot end in $V_r$ 
(because the distance of the end vertex from $\hatw$ is at least $k-i > r$),
hence $b^\tran B^i e_{u}=0$.
Plugging this information and Lemma~\ref{lem:ilarge} into~\eqref{eq:xu}, 
we have
\[
  \abs{ \mu\cdot x_{u} }
  \le \sum_{i \ge 0} \Big| b^\tran B^i e_{u} \Big|
  =   \sum_{i > 2r\log\mu} \Big| b^\tran B^i e_{u} \Big|
  \le \tfrac14 \mu^{-2r}  \cdot\delta\card{V_{r}} 
  \le \tfrac14 \delta Q,
\]
which proves the lemma in this case. %

We thus assume henceforth $k \le r + 2r\log\mu$.
For each $i \le 2r\log\mu$, let $U_i$ be the set of all length-$i$ walks 
that start at $u$ and end in $V_r$.
(The difference from $W_i$ is that the walks start at $u$ instead of $\hatw$.) 
Observe that such a walk (in $U_i$) is entirely contained in 
the $2d$-regular tree formed by the radius-$\rtree$ neighborhood of $\hatw$,
because the maximum distance from $\hatw$ it can reach is 
$ 
  k+i 
  \le (r+2r\log\mu) + 2r\log\mu 
  \le 5r\log d 
  \le \rtree
$. 
In addition, we claim that 
\begin{equation} \label{eq:Ui}
  \card{W_{i+k}} 
  \ge \card{V_k}\cdot \card{U_i} 
  \ge (2d-1)^k \card{U_i} .
\end{equation}
Indeed, we can generate walks in $W_{i+k}$ 
by first walking from $\hatw$ to any vertex in $u'\in V_k$ directly, 
i.e., along the unique shortest path, 
and then ``imitating'' a walk from $U_i$,
in the sense of executing it from $u'\in V_k$ instead of from $u\in V_k$. 
(Formally, view a walk in $U_i$ as a sequence in $[2d]^i$ 
that determines which outgoing edge to traverse next, 
according to a fixed numbering of the $2d$ incident edge, 
that reserves $2d$ to the edge that gets us closer to $\hatw$, if one exists.) 
This yields $\card{V_k}\cdot \card{U_i}$ walks that are all distinct 
and end in $V_r$, hence these are distinct walks in $W_{i+k}$.

Denote the shortest path from $\hatw$ to $u$ by $u_0=\hatw,u_1,\ldots,u_k=u$;
notice that these vertices are exactly the ancestors of $u$ 
when we view the neighborhood of $\hatw$ as a tree rooted at $\hatw$.
Now partition $U_i = U_{i,0}\cup\cdots\cup U_{i,k}$ by letting each $U_{i,j}$ 
contain the walks in $U_i$ that visit $u_j$ but not $u_0,\ldots,u_{j-1}$,
which means that $u_j$ is the farthest (from $u$) ancestor visited by the walk.
For example, $U_{i,0}$ contains all walks in $U_i$ that visit $u_0=\hatw$,
and $W_{i,k}$ contains all walks in $U_i$ that never visit $u_{k-1}$
and thus never visit $V_0\cup\cdots\cup V_{k-1}$
(of course, for $k>r$ this cannot happen and thus $U_{i,k}=\emptyset$). 
The walks in $U_{i,j}$ all end in $S_{u_j}$ 
(recall this is the set of vertices in $V_r$ that are descendants of $u_j$),
and by symmetry, the number of walks ending at each $v\in S_{u_j}$ is the same, 
namely, $\frac{\card{U_{i,j}}}{\card{S_{u_j}}}$, 
and thus similarly to~\eqref{eq:xu_hatw}, 
\[
  b^\tran B^i e_{u}
  = \sum_{j=0}^k \sum_{v\in S_{u_j}} \frac{\card{U_{i,j}}}{\card{S_{u_j}}} \mu^{-i} s_v b_v 
  = \sum_{j=0}^k \frac{Z_{u_j}}{\card{S_{u_j}}} \mu^{-i} \card{U_{i,j}} .
\]
Assuming the event in Lemma~\ref{lem:ismall} occurs, 
\[
  \forall j=0,\ldots,k, 
  \qquad
  \frac{Z_{u_j}}{\card{S_{u_j}}}
  \in \delta\sigma 
      \pm O\Big(\frac{ \ln(3\card{V_k}) }{ {\card{S_{u_j}}^{1/2}} } \Big) ,
\]
and thus 
\[
  \Big| b^\tran B^i e_{u} \big| 
  \le \sum_{j=0}^k \Big| \frac{Z_{u_j}}{\card{S_{u_j}}} \mu^{-i} \card{U_{i,j}} \Big| 
  \le \sum_{j=0}^k \Big(\delta + \frac{\ln(3\card{V_k})} {{\card{S_{u_j}}^{1/2}}} \Big) 
                  \mu^{-i} \card{U_{i,j}} .
\]
Recall that $u_j$ is an ancestor of $u$, hence 
$
  \card{S_{u_j}} 
  \geq \card{S_u} 
$,
and use~\eqref{eq:Ui} to obtain 
$\sum_{j=0}^k \card{U_{i,j}} = \card{U_i} \leq \frac{1}{(2d-1)^k} \card{W_{i+k}}$,
and altogether we have
\[
  \Big| b^\tran B^i e_{u} \big| 
  \le \Big(\delta + \frac{\ln(3\card{V_k})} {{\card{S_u}^{1/2}}} \Big) 
      \mu^{-i} \sum_{j=0}^k \card{U_{i,j}} 
  \le \Big(\delta + \frac{\ln(3\card{V_k})} {{\card{S_u}^{1/2}}} \Big) 
      \Big( \frac{\mu}{2d-1} \Big)^k \mu^{-(i+k)} \card{W_{i+k}} .
\]
To simplify notation, define the quantity (notice it does not depend on $i$) 
\begin{equation} \label{eq:alpha}
  \alpha_k
  \eqdef \Big(\delta + \frac{\ln(3\card{V_k})} {{\card{S_u}^{1/2}}} \Big) 
         \Big( \frac{\mu}{2d-1} \Big)^k.
\end{equation}
We claim that $\alpha_k \le \tfrac32\delta$. 
To prove this claim, we first easily bound one part
$\delta (\frac{\mu}{2d-1})^k \leq \delta$. 
For the other part, observe that 
$
 \card{V_r}
 = \card{S_{\hatw}} 
 \le 4(2d-1)^k \card{S_u} 
$,
and thus 
\[
  \frac{\ln(3\card{V_k})} {{\card{S_u}^{1/2}}} \Big( \frac{\mu}{2d-1} \Big)^k
  \le \frac{2\ln(3\card{V_k})} {{\card{V_r}^{1/2}}} 
      \Big( \frac{\mu^2}{2d-1} \Big)^{k/2}
  \le \frac{2\ln(3\card{V_r})} {{\card{V_r}^{1/2}}} 
      \Big( \frac{\mu^2}{2d-1} \Big)^{r/2}
  \le \frac12 \delta ,
\]
where the last inequality is by our assumption 
$\mu \leq \tfrac12 d^{2/3}$, 
which implies
$
  (\frac{\mu^2}{2d-1})^{r/2}
  \le (2d-1)^{1/3\cdot r/2}
  \le \card{V_r}^{1/6}
$,
and by our choice of the bias $\delta$ in \eqref{eq:bias}. 
Putting these bounds together proves the claim.

With this bound $\alpha_k \le \tfrac32\delta$ in hand, 
we are finally ready to conclude the lemma. 
Using this claim and that $k \le r+2r\log\mu \le 3r\log\mu$, 
\[
  \sum_{i \le 2r\log\mu} \Big| b^\tran B^i e_{u} \Big|
  \le \sum_{i \le 2r\log\mu} \alpha_k \mu^{-(i+k)} \card{W_{i+k}} 
  \le \alpha_k Q 
  \le \tfrac32 \delta Q. 
\]
For the range of $i>2r\log\mu$, we can use Lemma~\ref{lem:ilarge} 
and the obvious $\card{W_r} = \card{V_r}$ to derive 
\[
  \sum_{i > 2r\log\mu} \Big| b^\tran B^i e_u \Big|
  \le \tfrac14 \mu^{-2r}  \cdot\delta\card{V_{r}} 
  \le \tfrac14 \delta Q.
\]
Plugging the above information into~\eqref{eq:xu},
we have
\[
  \abs{ \mu\cdot x_{u} }
  \le \sum_{i \ge 0} \Big| b^\tran B^i e_{u} \Big|
  \le 2 \delta Q,
\]
which concludes the case $k \le r + 2r\log\mu$,
and completes the proof of Lemma~\ref{lem:xinfty}. 
\end{proof}

We can now complete the proof of Proposition~\ref{prop:queryb}.
By Lemma~\ref{lem:ismall}, with probability at least $6/7$
the event described therein occurs.
Assume this is the case and consider an estimate $\hat x_{\hatw}$ for $x_{\hatw}$
that has additive error at most $\eps \norm{x}_\infty$ for $\eps\le \tfrac15$. 
By Lemma~\ref{lem:xw} we have 
$
  x_{\hatw} \in (\sigma \pm\tfrac12) \delta\cdot \mu^{-1}Q
$, 
and by Lemma~\ref{lem:xinfty} we have 
$
  \norm{x}_\infty \le 2\delta\cdot \mu^{-1}Q
$.
Altogether 
\[
  \hat x_{\hatw}   
  \in x_{\hatw} \pm \tfrac15 \norm{x}_\infty
  \subseteq 
  (\sigma \pm \tfrac12 \pm \tfrac25) \delta\cdot \mu^{-1}Q ,
\]
which implies that $\sgn(x_{\hatw})=\sigma$. 

Now consider a randomized algorithm for estimating $x_{\hatw}$, 
and whose output $\hat x_{\hatw}$ satisfies the above additive bound 
with probability at least $6/7$. 
We can use this estimation algorithm to recover the signal $\sigma$, 
by simply reporting the sign of its estimate, namely $\sgn(x_{\hatw})$.
This recovery does not require additional probes to $b$, 
and by a union bound, 
it succeeds (in recovering $\sigma)$ with probability at least $5/7$. 
But by Lemma~\ref{lem:BinomProbes}, such a recovery algorithm,
and in particular the algorithm for estimating $x_{\hatw}$, 
must probe $b$ in at least
\[
  \Omega(\delta^{-2}) 
  \ge \Omega\left( \card{V_{r}}^{2/3}/(r^4\log^2 d) \right)
  \ge \Omega\left( (2d-1)^{2r/3} / (r^4\log^2 d) \right)
  \ge d^{\Omega(r)} 
\]
entries, which proves Proposition~\ref{prop:queryb}.

\subsection{Proof of Proposition~\ref{prop:treelike}}
\label{sec:treelike}

We prove Proposition~\ref{prop:treelike} by the probabilistic method,
namely, we let $G_2$ be a random isomorphic copy of $G_1$,
and argue that the desired property of $G_1\cup G_2$ 
holds with positive, in fact high, probability. 
The edge-union graph $G_1\cup G_2$ can be equivalently constructed as follows. 
Start with the fixed graph $G_1$ as above and a copy of it $G'$
on a disjoint set of vertices, i.e., $V(G_1)\cap V(G')=\emptyset$;
now draw a random perfect matching $M$ between $V(G_1)$ and $V(G')$,
and then contract every edge of $M$.
The rest of the proof considers the graph \emph{prior} to the contraction. 
The idea is to expose the edges of $M$ gradually,
and then by the principle of deferred decision, 
the remaining edges of $M$ form a random perfect matching 
between the yet unmatched vertices in $G$ and $G'$. 

A walk in the edge-union graph $G_1\cup G_2$
can be viewed as a walk in the prior-to-contraction graph, 
except that moves along the matching $M$ are not counted as steps. 
Fixing a vertex $\hatw\in V$ and an integer $l\ge 1$, 
every length-$l$ walk starting at $\hatw$ 
can be associated with a distinct sequence $a\in [1..2d]^l$ as follows. 
For each step $i=1\,\ldots,l$, move from the current vertex 
(starting at $\hatw\in V$) along an edge represented by $a_i$,
where $a_i\in [1..d]$ corresponds to an edge in $G$,
and $a_i\in[d+1..2d]$ corresponds to an edge in $G'$. 
To make it more precise, recall that $G$ is fixed,  
hence the $d$ edges of $G$ incident to each vertex in $v\in V(G)$
have a fixed ordering (say lexicographic), 
and can be associated with a distinct index from $[1..d]$. 
(An edge $(u,v)$ may have different indices at $u$ and at $v$.)
The same applies also to $G'$, 
except that now the indices are from $[d+1..2d]$. 
Thus, each index $a_i\in [1..2d]$ represents a transition 
along an edge in $E(G)\cup E(G')$, 
but if the current vertex is not in the ``correct'' graph, 
then the walk first moves along the matching $M$ 
(to cross between $V(G)$ and $V(G')$), 
and only then moves along the edge of $E(G)\cup E(G')$. 
Observe that step $i$ in the walk does not exposes any edge of $M$
if both $a_{i-1},a_i\in[1..d]$ or both $a_{i-1},a_i\in[d+1..2d]$, 
where by convention $a_{0}\in[1..d]$. 

For the neighborhood of $\hatw$ of radius $\rtree$ in $G_1\cup G_2$ 
to \emph{not} be a $2d$-regular tree,
obviously there must exist $a\in [1..2d]^{2\rtree}$, 
whose corresponding walk is non-backtracking
(i.e., no step $i\ge 2$ moves along the same edge as step $i-1$)
yet it is self-intersecting
(i.e., at least one vertex is visited more than once).
Our analysis employs another necessary condition, 
whose existence follows by identifying a cycle near $\hatw$ in $G_1\cup G_2$, 
and two successive edges in it that originate from different graphs 
(they must exists because each of $G_1$ and $G_2$ has high girth). 
Specifically, there must exist two sequences (walks)
$a\in[1..2d]^{l'}$ and $b\in[1..2d]^{l''}$ of lengths $1\le l',l''\le 2\rtree$, 
such that 
(i) $a_{l'}\in[1..d]$, i.e., the last step of $a$ is in $G$;
(ii) $b_{l''}\in[d+1..2d]$, i.e., the last step of $b$ is in $G'$;
(iii) the last vertex in the walk $a$ and that in $b$ 
are matched to each other by $M$; and 
(iv) these two last vertices were not visited by or matched to
any earlier vertex in the two walks.
Let $E_{a,b}$ denote the event that requirements (i)-(iv) are satisfied.
Observe that $a$ and $b$ may have a common prefix, 
during which they will obviously visit the same vertices.
For instance, a common prefix of length $l'-1=l''-1$ corresponds 
to having two parallel edges (originating from $G_1$ and $G_2$).

We now claim that $\Pr[E_{a,b}] \le \frac{2}{n}$ for every fixed $a,b$
of length $l',l''\le 2\rtree$.
To see this, follow the walks corresponding to $a$ and to $b$, 
and expose the edges of $M$ incident to all visited  vertices 
except for the last vertex in each walk. 
If requirement (iv) is already violated, 
then the probability of $E_{a,b}$ is $0$.
We may thus assume henceforth it is satisfied, 
which implies that after exposing at most $l'+l''-1 \le 4\rtree$ edges of $M$,
the last vertex in each walk is still not matched by $M$.
If we now expose the edges of $M$ incident to the last vertex in each walk,
the probability that these two vertices are matched to each other 
is at most $\frac{1}{n-4\rtree} \le \frac{2}{n}$,
and the claimed bound follows. 

Finally, by a union bound over all possible sequence pairs $(a,b)$,
the probability that the neighborhood of $\hatw$ %
is \emph{not} a $2d$-regular tree, is at most
\[
  \Pr[\vee_{a,b} E_{a,b}]
  \le (2\rtree)^2 (2d)^{2\rtree} \cdot \tfrac{2}{n} 
  \le (4d)^{2\rtree} \cdot \tfrac{8}{n} 
  \le \tfrac{1}{\sqrt n} .
\]
This completes the proof of Proposition~\ref{prop:treelike}. 

\section{Square of Condition Number is Necessary}
\label{sec:lbK2}

In this section, we prove Theorem~\ref{thm:lbK2}. In particular, we
show that there exist graphs for which one needs to query $b$ at least
$t$ times where $t$ is nearly-quadratic in the condition number of the
Laplacian $L=L_G$.

We first describe the construction of the graph $G$.  Let $X$ be a
3-regular expander on $n/2$ nodes, indexed by a set $V_X$, with girth
$\Omega(\log n)$. We build a graph $G$ as follows. Take two copies of
$X$, termed $X_1$ on vertices $1,\ldots n/2$ and $X_2$ on vertices
$n/2+1,\ldots n$. Then we pick $n/k$ nodes in $X_1$, termed $C_1$, and
the equivalent $n/k$ nodes in $C_2$ (i.e., originating from the same
nodes in $V_X$), and connect $C_1$ to $C_2$ via a matching $M$. Let
$L$ be the Laplacian of the resulting graph $G$.

\begin{lemma}
The condition number of $L$ is $O(k)$.
\end{lemma}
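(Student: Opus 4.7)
The plan is to bound $\lambda_{\max}(L)$ and $\lambda_2(L)$ separately (where $\lambda_2(L)$ denotes the smallest nonzero eigenvalue, positive since $G$ is connected), so that $\kappa(L)=\lambda_{\max}(L)/\lambda_2(L)$. Every vertex of $G$ has degree at most $4$, so $\lambda_{\max}(L)\le 2\Delta(G)=O(1)$ by the standard Laplacian bound; the work is in showing $\lambda_2(L)=\Omega(1/k)$, which then gives $\kappa(L)=O(k)$.

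For this I use the Rayleigh quotient. Fix $y\in\R^n$ with $y\perp\allones$ and $\|y\|_2=1$, and decompose $y=ap+w$, where $p$ is the $\pm1$-vector equal to $+1$ on $X_1$ and $-1$ on $X_2$, $a$ is the mean of $y$ on $X_1$ (so the mean on $X_2$ is $-a$, using $y\perp\allones$), and $w$ has mean zero on each of $X_1,X_2$; this gives $a^2 n+\|w\|_2^2=1$. Partitioning $E(G)=E(X_1)\sqcup E(X_2)\sqcup M$ and noting that $ap$ is constant on each $X_i$ but jumps by $2a$ across every matching edge,
\[
  y^\tran L y = w_1^\tran L_{X_1}w_1+w_2^\tran L_{X_2}w_2+\sum_{(u,v)\in M}\bigl(2a+(w_u-w_v)\bigr)^2,
\]
where $w_i$ denotes the restriction of $w$ to $X_i$. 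Since each $w_i$ has mean zero and $X_i$ is an expander, $w_i^\tran L_{X_i}w_i\ge c_0\|w_i\|_2^2$ for some absolute constant $c_0>0$. Setting $A\eqdef\sum_{(u,v)\in M}(w_u-w_v)^2$, expanding the matching sum, and bounding the cross term via Cauchy-Schwarz $\bigl|\sum_{(u,v)\in M}(w_u-w_v)\bigr|\le\sqrt{n/k}\sqrt{A}$, the matching contribution completes to a square and gives
\[
  y^\tran L y \ge c_0\|w\|_2^2 + \bigl(2|a|\sqrt{n/k}-\sqrt{A}\bigr)^2.
\]

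A short case analysis then finishes. If $\|w\|_2^2\ge 1/2$, the first term alone gives $y^\tran L y\ge c_0/2=\Omega(1)$. Otherwise $a^2 n\ge 1/2$, and using $A\le 2\|w\|_2^2$ (immediate from $(w_u-w_v)^2\le 2w_u^2+2w_v^2$) I split on whether $a^2 n/k\ge 2\|w\|_2^2$: if yes, then $\sqrt{A}\le|a|\sqrt{n/k}$ and the squared term is at least $a^2 n/k\ge 1/(2k)$; if no, then $\|w\|_2^2>a^2 n/(2k)\ge 1/(4k)$ and $c_0\|w\|_2^2=\Omega(1/k)$. Thus $\lambda_2(L)=\Omega(1/k)$, completing the proof that $\kappa(L)=O(k)$.

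The main obstacle is the cross term $4a\sum_{(u,v)\in M}(w_u-w_v)$, which can \emph{a priori} cancel the bottleneck contribution $4a^2(n/k)$ coming from the sparse matching. Completing the square packages the matching part into a single nonnegative expression, and the case split on whether the between-class mass $a^2 n/k$ dominates or is dominated by the within-class mass $\|w\|_2^2$ ensures that at least one of the two nonnegative pieces in the Rayleigh lower bound is $\Omega(1/k)$.
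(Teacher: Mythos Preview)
Your proof is correct and follows essentially the same approach as the paper: decompose $y$ into a $\pm$-constant part and a part with mean zero on each side, use the expander property of $X$ on the latter, and control the matching contribution via a triangle/Cauchy--Schwarz inequality. The only cosmetic difference is that the paper argues by contradiction (assume all three pieces of $y^\tran L y$ are $<c/k$ and derive a contradiction) whereas you give a direct case analysis, but the underlying estimates are the same.
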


\begin{proof}
We need to prove that, for any unit-norm $x$ orthogonal to all 1s, we
have that $x^\tran Lx\ge \Omega(1/k)$ as the largest eigenvalue is
$\Theta(1)$. We can decompose the Laplacian $L$ into 3 components,
corresponding to $X_1, X_2$ and $M$: $L=L_1+L_2+L_M$. Similarly,
decompose $x=x'+x''$ corresponding to vertices $V_1$ and $V_2$. Let
$m=\tfrac{n}{2}\sum_{i\in[n/2]} x'_i=-\tfrac{n}{2}\sum_{i>n/2} x''_i$,
and $\bar x'$ be $x_1$ minus $m$ on the $V_1$ coordinates. Similarly
$\bar x''$ is $x_2$ plus $m$ on the $V_2$ coordinates. Then, we have
that:
$$ x^\tran Lx=x'^\tran L_1x'+(x'')^\tran L_2x''+x^\tran L_Mx.
$$

For the sake of contradiction, suppose that all three terms are
$<c/k$ for small $c>0$. Then, using the fact $X_1$ is expander,
$\tfrac{1}{2}\|\bar x'\|^2\le (x')^\tran L_1x'\le c/k$, i.e., $\|\bar
x'\|\le \sqrt{2c/k}$. Similarly $\|\bar x''\|\le \sqrt{2c/k}$. Also we have
that:
$$
x^\tran Mx=\sum_{(i,j)\in M} (x_i-x_j)^2=\sum_{(i,j)\in M} (2m+\bar
x'_i-\bar x''_j)^2\ge (2m\cdot \sqrt{n/k}-\|\bar x'\|-\|\bar x''\|)^2,
$$
where the last inequality uses triangle inequality.

Since $\|x\|=1$, we also have that $1=\|x\|\le m\cdot
\sqrt{n}+\|\bar x'\|+\|\bar x''\|$, and thus $m\ge
0.5/\sqrt{n}$. Plugging $m$ into the above, we obtain that $x^\tran Mx\ge
(1/\sqrt{k}-2\cdot \sqrt{2c/k})^2\ge 0.25/k>c/k$ --- a
contradiction.
\end{proof}

To prove the lower bound on the number of probes into $b$, we consider
two distributions on $b$ that can be distinguished using an estimate
to $|x_u-x_v|$. Distinguishing these two distributions will require a
large number of probes into $b$, giving us a query lower bound for
estimating $|x_u-x_v|$.

We now describe these two distributions. Partition the graph $G$ into
4 parts, termed $P_1,\ldots, P_4$ as follows: the vertices $V_1$ are
split arbitrarily into 2 equal-size sets $V_1=P_1\cup P_2$, and similarly with
$V_2=P_3\cup P_4$.  Consider distinguishing the following two cases,
where $p=\Theta(\sqrt{\log n}/k)$. 
\begin{description}
\item[Balanced:]
For each coordinate $u\in P_1, P_3$ pick $b_u\in\set{\pm 1}$ randomly; for
$u\in P_2$ and $u\in P_4$ pick $b_u\in\set{\pm 1}$ randomly conditioned on the
fact that $|\{u\in P_2 \mid b_u=+1\}|=|\{u\in P_1 \mid b_u=-1\}|$
(i.e., $b$ is fully balanced on $V_1$), and similarly for $V_2$. 
\item[Unbalanced:]
For each $u\in P_1$, set $b_u$ to $+1$ with probability $1/2+p/2$ and to $-1$
with probability $1/2-p/2$; for $u\in P_2$, we similarly set $b_u\in\set{\pm
1}$ randomly conditioned on $|\{u\in P_2 \mid b_u=+1\}|=|\{u\in P_1
\mid b_u=+1\}|$ (i.e., the bias is exactly the same in $P_1$ and
$P_2$). For $V_2$, we do the same but with bias $\Pr[b_u=+1]=1/2-p/2$.
\end{description}

Note that distinguishing the two cases with probability $\ge 1/2+\delta$
requires probing $\Omega(\delta^2\cdot 1/p^2)=\Omega(k^2/\log^3 n)$
coordinates of $b$ (see Lemma~\ref{lem:BinomProbes}). Now let us show how 
to distinguish the two cases by computing $x_u-x_v$ for some fixed edge
$(u,v)\in G$, where $x$ is the solution to $Lx=b$.

\begin{lemma}
\label{lem:maxDiff}
In the Balanced case, for any two vertices $u,v$ of $G$, we have that
$x_u-x_v=O(\sqrt{\log n})$ with probability  at least $1-1/n$.

In the Unbalanced case, for any two vertices $u,v$ of $G$, we have that
$x_u-x_v=O(pk\log n)+O(\sqrt{\log n})$ with probability at least $1-1/n$.
\end{lemma}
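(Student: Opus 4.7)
The plan is to express $x_u - x_v$ as a truncated power series in the random-walk operator and analyze the resulting sum over the random choice of $b$, using a coupling with walks on pure expanders.

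\textbf{Step 1 (power series and walk interpretation).} I would generalize Fact~\ref{fact:inverse_power_series} to the irregular graph $G$ via the normalized Laplacian $\tilde L = D^{-1/2} L D^{-1/2}$, where $D$ is the diagonal of degrees (entries in $\{3,4\}$). Since $(I - \tilde L)^t = D^{1/2} P^t D^{-1/2}$ for the standard random-walk matrix $P = D^{-1} A$, a short calculation on the subspace orthogonal to $\ker(L)$ yields
\[
  x_u - x_v
  = \sum_{t=0}^{\infty} \Big( \mathbb{E}_{z \sim \mathrm{walk}_u^{(t)}} [b_z/d_z]
  \;-\; \mathbb{E}_{z \sim \mathrm{walk}_v^{(t)}} [b_z/d_z] \Big).
\]
Because $\kappa(L) = O(k)$ by the previous lemma, truncating at $t_0 = \Theta(k\log n)$ costs at most $1/\mathrm{poly}(n)$ in tail error (analogous to Claim~\ref{claim:Laplacian_tail_bound}, using $\|b\|_2 \le \sqrt n$ and spectral gap $\Omega(1/k)$).

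\textbf{Step 2 (coupling with pure expanders and the two cases).} Following the sketch in Section~\ref{subsec:techniques}, I would couple a walk on $G$ with a walk on the disjoint graph $X_1 \sqcup X_2$ (i.e., $G$ minus the matching $M$). The two walks agree except at matching vertices, of which there are $n/k$ per side, so the per-step crossing probability is $O(1/k)$. Within each $3$-regular expander $X_i$, the walk mixes to uniform on $V_i$ in $\tau = O(\log n)$ steps. Consequently, for $t \ge \tau$, the $G$-walk endpoint distribution is (in total variation) a mixture of near-uniform on $V_1$ and on $V_2$, with mixture weights determined by the number of crossings, and $\mathbb{E}_z[b_z/d_z]$ equals a convex combination of the two empirical means $|V_i|^{-1}\sum_{w \in V_i} b_w/d_w$ up to $O(1/\sqrt n)$ error. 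In the \textbf{Balanced} case these empirical means are each $O(1/n)$ (exactly $0$ if all degrees were $3$; the $n/k$ degree-$4$ vertices contribute $O(1/k)$), so mixed terms contribute $O((k\log n)/k)\cdot O(1/n) = o(1)$, while the pre-mixing part ($t \le \tau$) is a linear combination of the independent $\pm 1$ bits of $b$ whose coefficient vector has $\ell_2$-norm $\sum_{t \le \tau} \|p_t\|_2 = O(1)$ by expander mixing; a Hoeffding tail gives $|x_u - x_v| = O(\sqrt{\log n})$ with probability $1 - 1/n^{3}$ per pair. In the \textbf{Unbalanced} case, Hoeffding shows $|V_i|^{-1}\sum_{w\in V_i} b_w/d_w = \pm p/\bar d + O(\sqrt{\log n / n})$ w.h.p., so each mixed-walk term contributes $\Theta(p)$ and summing the $O(k\log n)$ such terms yields $O(pk\log n)$, while the pre-mixing contribution is the same $O(\sqrt{\log n})$ as before. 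A union bound over $O(n^2)$ pairs preserves the $1-1/n$ success probability.

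\textbf{Main obstacle.} The subtlest parts will be (a) making the coupling rigorous against the non-regularity of $G$: the stationary distribution weights matching vertices by $4/3$ relative to others, giving an $O(1/k)$-scale perturbation that must be tracked through the power series (e.g.\ by decomposing the post-mixing measure as uniform plus a perturbation supported near the matching); and (b) handling the dependence between $b|_{P_1}$ and $b|_{P_2}$ introduced by the balance/matched-bias conditioning, which I plan to absorb by replacing the conditional distribution with an i.i.d.\ $\pm 1$ distribution at $O(1/\sqrt n)$ total-variation cost before applying Hoeffding. I also anticipate that the pre-mixing $\sum_{t \le \tau}\|p_t\|_2 = O(1)$ bound, which is the source of the $O(\sqrt{\log n})$ term via Hoeffding, will require a standard expander spectral argument, giving a coefficient hidden in the constant that depends on the second-largest eigenvalue of the $3$-regular expander $X$.
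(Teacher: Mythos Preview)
Your overall plan (power-series expansion, truncation at $t_0=\Theta(k\log n)$, concentration over the random $b$) matches the paper's, but the paper's coupling is different and sidesteps exactly the obstacle you flag. Rather than coupling the $G$-walk with a walk on $X_1\sqcup X_2$ and tracking crossings, the paper first \emph{regularizes} $G$ by adding a self-loop at every non-matching vertex, so every vertex has degree $d=4$. With this modification, index each vertex as $(v,q)$ with $v\in V_X$ and $q\in\{1,2\}$; one step of the lazy walk on $G$ either (with probability $3/4$) moves $v$ along a uniformly random $X$-edge, or (with probability $1/4$) keeps $v$ fixed and possibly flips $q$. Hence the $v$-coordinate performs a random walk on the expander $X'$ (namely $X$ with a self-loop at every vertex) \emph{independently of $q$}. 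This immediately yields $\max_{(l,q)}\pi^{(i)}_{(l,q)}\le \Pr[v_i=l]\le c^{-i}+1/n$ for some $c>1$, uniformly over start vertices and all $i$, with no separate pre-/post-mixing regimes and no irregular-degree perturbation to track. The paper then applies Bernstein's inequality \emph{term-by-term in $i$} (separately on each of the four parts $P_1,\ldots,P_4$, where the entries of $b$ are either i.i.d.\ or exchangeable without replacement, so the same tail holds) to get $|e_u^\top(\tfrac{1}{d}A)^i b|\le O(\sqrt{\log n})(c^{-i/2}+n^{-1/2})$ (plus $p$ in the unbalanced case), and sums over $i\le t_0$.

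Two concrete issues in your write-up: (i) the ``per-step crossing probability $O(1/k)$'' is incorrect---from a matching vertex the crossing probability is $\Theta(1)$; what is $O(1/k)$ is the \emph{density} of matching vertices, which is a different statement and does not directly control the coupling discrepancy you need; (ii) your post-mixing bookkeeping ``$O((k\log n)/k)\cdot O(1/n)$'' is unclear, and in any case the stationary measure on the irregular $G$ is degree-biased, so ``mixture of near-uniform on $V_1,V_2$'' needs justification. Both issues evaporate once you adopt the self-loop regularization: the graph becomes $4$-regular, the $v$-marginal is a clean expander walk, and you never need to reason about crossings or a pre/post-mixing split.
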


The lemma is the core of the argument and its proof is deferred to Section~\ref{sec:maxDiff}.
\begin{lemma}
\label{lem:randomDiff}
In the Unbalanced case, if we pick an edge $(u,v)\in M$ at random,
then $|x_u-x_v|=\Omega(\sqrt{\log n})$ with probability at least $\Omega(1/\log
n)$.
\end{lemma}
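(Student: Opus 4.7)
The plan is to control the \emph{total} potential drop across the matching via Kirchhoff's current law, show that in the Unbalanced case this total is large, and then combine with the per-edge upper bound from Lemma~\ref{lem:maxDiff} to conclude by a simple averaging argument.

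First I would sum the Laplacian equation $(Lx)_u=b_u$ over $u\in V_1$. Every edge whose endpoints both lie in $V_1$ contributes $(x_u-x_w)+(x_w-x_u)=0$, so only the matching edges survive, yielding the key identity
\[
  \sum_{(u,v)\in M,\ u\in V_1,\ v\in V_2}(x_u-x_v)\;=\;\sum_{u\in V_1}b_u.
\]
Next I would estimate the right-hand side in the Unbalanced distribution. Because $|P_2^+|=|P_1^+|$ by construction, the $V_1$-sum equals $4|P_1^+|-n/2$, where $|P_1^+|\sim \mathrm{Binomial}(n/4,\,1/2+p/2)$; hence its mean is $pn/2=\Theta(\sqrt{\log n}\cdot n/k)$ and its standard deviation is $O(\sqrt{n})$. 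Under the hypothesis $k\le O(n^{1/2}/\log n)$, the mean dominates the standard deviation by a factor of $\Omega(\log n)$, so a standard Chernoff/Hoeffding bound gives $\sum_{u\in V_1}b_u\ge \Omega(\sqrt{\log n}\cdot n/k)$ with probability at least $1-1/n$.

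Finally I would combine with Lemma~\ref{lem:maxDiff}, which guarantees (for our choice $p=\Theta(\sqrt{\log n}/k)$) that with probability at least $1-1/n$ every edge satisfies $|x_u-x_v|=O(pk\log n+\sqrt{\log n})=O(\log^{3/2}n)$. Writing $N\eqdef |M|=n/k$, I have $N$ signed values $x_u-x_v$ with total sum $S=\Omega(\sqrt{\log n})\cdot N$ and each of absolute value at most $M_{\max}=O(\log^{3/2}n)$. A routine averaging argument then forces at least $\Omega(S/M_{\max})=\Omega(N/\log n)$ of them to exceed some fixed constant times $\sqrt{\log n}$, so a uniformly random matching edge lies in this set with probability $\Omega(1/\log n)$. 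Conditioning on the two high-probability events ($\sum_{V_1}b_u$ being large and Lemma~\ref{lem:maxDiff}'s max bound) completes the proof. The main (and essentially only) obstacle is that the averaging loses a factor of $\log n$ against the worst-case upper bound from Lemma~\ref{lem:maxDiff}; this loss is precisely what is reflected in the $\Omega(1/\log n)$ probability guarantee of the lemma, and is tight enough to give the $\tilde\Omega(k^2)$ query lower bound when plugged into the binomial-distinguishing argument via Lemma~\ref{lem:BinomProbes}.
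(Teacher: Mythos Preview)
Your proposal is correct and follows essentially the same approach as the paper: sum the Laplacian equation over $V_1$ to equate the total potential drop across $M$ with $\sum_{u\in V_1} b_u=\Omega(np)=\Omega(\sqrt{\log n}\cdot n/k)$, then combine with the per-edge upper bound $O(pk\log n)$ from Lemma~\ref{lem:maxDiff} and finish by averaging. You are simply more explicit than the paper about the Kirchhoff identity and the Chernoff step.
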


\begin{proof}
Since $\sum_{u\in V_1} b_u=-\sum_{u\in V_2} b_u=\Omega(np)$ with high
probability, we have that $\tfrac{1}{|M|}\sum_{(u,v)\in M}
|x_u-x_v|\ge \tfrac{1}{|M|}\Omega(np)=\Omega(kp)$. Since
$\max_{u,v}|x_u-x_v|\le O(pk\log n+\sqrt{\log n})=O(pk\log n)$. Hence,
if we pick a random $(u,v)\in M$, we have a probabity of at least
$\Omega(1/\log n)$ that $|x_u-x_v|\ge \Omega(kp)$.
\end{proof}

\begin{proof}[Proof of Theorem~\ref{thm:lbK2}, assuming
    Lemma~\ref{lem:maxDiff}]
The proof of the theorem follows immediately from the above two
lemmas. In particular, for a random edge $(u,v)\in M$, in the balanced
case, we have $|x_u-x_v|\le O(\sqrt{\log n})$ with $1-1/n$
probability. On the other hand, in the unbalanced case, we have
$|x_u-x_v|\ge \Omega(\sqrt{\log n})=\Omega(1/\log n)\cdot \max_{u',v'}
|x_{u'}-x_{v'}|\ge \tfrac{\eps}{2}\cdot \max_{u',v'}
|x_{u'}-x_{v'}|$. Hence we can distinguish the two distributions with
probability at least $\Omega(1/\log n)$ as follows. Suppose the
implicit constants from Lemma~\ref{lem:randomDiff} are respectively
$q,w>0$. Then we estimate $|x_u-x_v|$ and if it's $< q\sqrt{\log n}$,
then output ``balanced'' with probability $1/2+\tfrac{w}{2}/\log
n$. If $|x_u-x_v|\ge q\sqrt{\log n}$, then output ``unbalanced''. This
algorithm has probability of correctly distinguishing balanced vs
unbalanced with probability at least $1/2+\tfrac{w}{2}/\log n$ (as
long as $|x_u-x_v|$ is estimated correctly, which happens with
probability at least $1-O(1/\log n)$ after a standard amflication by
taking the median of $O(\log\log n)$ independent runs of the
algorithm). On the other hand, any such algorithm must make at least
$\Omega((1/\log n)^2\cdot 1/p^2)=\Omega(k^2/\log^3 n)$ coordinates of
$b$ (see Lemma~\ref{lem:BinomProbes}).
\end{proof}

\subsection{Proof of Lemma~\ref{lem:maxDiff}}
\label{sec:maxDiff}

\begin{proof}[Proof of Lemma~\ref{lem:maxDiff}]
Consider the matrix $A$ to be the adjacency matrix of the graph $G$,
where each node $i$ not in the matching $M$ has a self-loop. Thus all
nodes have degree $d=4$. Then we have that $L=dI-A$, and hence
$b=Lx=dIx-Ax$, or
$x=\tfrac{1}{d}b+\tfrac{1}{d}Ax$. Furthermore,
using this identity iteratively, we have:
$$
x=\tfrac{1}{d}b+\tfrac{1}{d}Ax=\tfrac{1}{d}b+\tfrac{1}{d}A(\tfrac{1}{d}b+\tfrac{1}{d}Ax)=\ldots=\sum_{i=0}^{t}
(\tfrac{1}{d}A)^i\cdot (\tfrac{1}{d}b)+(\tfrac{1}{d}A)^\tran x.
$$

Let $x'=\sum_{i=0}^{t} (\tfrac{1}{d}A)^i\cdot (\tfrac{1}{d}b)$.  We
take the solution $x$ such that $x\perp {\allones}$ and hence
$\|x-x'\|\le (1-O(1/k))^t\cdot O(k)\cdot \|b\|$. Choosing
$t=\Theta(k\log n)$ ensures that $\|x-x'\|\le 1/n$. It is thus enough
to compute $x'_u-x'_v$.

Consider $x'_i=\sum_{i=0}^{t} e_u^\tran (\tfrac{1}{d}A)^i\cdot
(\tfrac{1}{d}b)$. Note that each term corresponds to a random walk
of length $t$ (using matrix $A$).

\begin{claim}
There is some $c>1$ and $i_0$, such that for $i\ge i_0$, and any node $u$
in $G$, the following holds. In the balanced case:
$$
\Pr_b\left[\left|e_u^\tran (\tfrac{1}{d}A)^i\cdot b\right|\ge O(\sqrt{\log
    n})\cdot(c^{-i/2}+n^{-1/2})\right]\le 1/n^2.
$$
In the unbalanced case:
$$
\Pr_b\left[\left|e_u^\tran (\tfrac{1}{d}A)^i\cdot b\right|\ge p+O(\sqrt{\log
    n})\cdot(c^{-i/2}+n^{-1/2})\right]\le 1/n^2.
$$
\end{claim}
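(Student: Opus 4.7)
The plan is to interpret $e_u^\tran (\tfrac{1}{d}A)^i b$ as an inner product $\langle q, b \rangle$, where $q \eqdef (\tfrac{1}{d}A)^i e_u \in \R^n$ is the distribution of the endpoint of an $i$-step random walk (with transition matrix $P \eqdef \tfrac{1}{d}A$) starting at $u$. I would then bound this bilinear form by its expectation plus a fluctuation, controlled by a spectral calculation and a Hoeffding-type concentration inequality, respectively.

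\emph{Step 1 (spectral decay).} Since $L = dI - A$ has condition number $O(k)$ with $\lambda_{\max}(L) < 2d$ bounded away from the bipartite extremal value by an absolute constant (using that the expander $X$ is non-bipartite and that self-loops are present at all non-matching vertices, so the walk is close to lazy), the spectral radius of $P$ restricted to $\allones^\perp$ satisfies $\rho \leq 1 - \Omega(1/k) < 1$. Decomposing $q = \tfrac{1}{n}\allones + r$ with $r\perp\allones$ and using $\allones^\tran r = \allones^\tran P^i e_u - 1 = 0$, we get $\|q\|_2^2 = \tfrac{1}{n}+\|r\|_2^2 \leq \tfrac{1}{n} + \rho^{2i}$. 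Setting $c \eqdef \rho^{-2} > 1$ gives $\|q\|_2 \leq c^{-i/2} + n^{-1/2}$.

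\emph{Step 2 (expectation).} In the balanced case every $\EX[b_w] = 0$, hence $\EX[\langle q,b\rangle]=0$. In the unbalanced case $\EX[b_w] = +p$ for $w\in V_1$ and $-p$ for $w\in V_2$, so $|\EX[\langle q,b\rangle]| = p\cdot \big|\sum_{w\in V_1}q_w - \sum_{w\in V_2}q_w\big| \leq p$, since $q$ is a probability distribution.

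\emph{Step 3 (concentration).} The randomness in $b$ splits over $V_1$ and $V_2$; within each half it consists of iid $\pm 1$ bits on $P_1$ (resp.\ $P_3$), and conditional on these, a uniform without-replacement configuration on $P_2$ (resp.\ $P_4$) constrained to match the $+1$-count. Applying Hoeffding's inequality for the independent signs together with the Serfling--Hoeffding bound for sampling without replacement, we obtain
\[
  \Pr\!\Big[\,\big|\langle q,b\rangle - \EX[\langle q,b\rangle]\big| \geq t\,\Big] \;\leq\; 2\exp\!\big(-\Omega(t^2/\|q\|_2^2)\big).
\]
Choosing $t = \Theta(\sqrt{\log n})\cdot\|q\|_2$ pushes the failure probability below $1/n^2$ and yields a deviation bound of $O(\sqrt{\log n})(c^{-i/2}+n^{-1/2})$. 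Combining with Step~2 delivers both the balanced and the unbalanced statements.

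The main obstacle is the spectral gap in Step~1: a priori the most negative eigenvalue of $A$ could approach $-d$, making the spectral radius of $P$ on $\allones^\perp$ close to $1$ rather than $1-\Omega(1/k)$. Ruling this out requires a careful use of the non-bipartiteness of the base expander $X$ together with the self-loops added at non-matching vertices, which force $\lambda_{\max}(L) \leq 2d-\Omega(1)$. Modulo this spectral fact, the rest is a routine $\ell_2$ bound plus a Hoeffding-type concentration for the mildly negatively-correlated vector $b$; the threshold $i_0$ is chosen so that $c^{-i_0/2}$ is at most $1$, ensuring the bound is informative.
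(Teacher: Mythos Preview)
Your proof has a genuine gap in Step~1. You bound the spectral radius of $P=\tfrac{1}{d}A$ on $\allones^\perp$ by $\rho\le 1-\Omega(1/k)$ and then set $c=\rho^{-2}$. This is correct as far as it goes, but it yields $c=1+\Theta(1/k)$, which depends on $k$. The claim, however, is meant to produce an \emph{absolute} constant $c>1$: its only purpose downstream (in the proof of Lemma~\ref{lem:maxDiff}) is to make $\sum_{i=0}^{t} c^{-i/2}=O(1)$, so that $|x'_u|\le O(\sqrt{\log n})$. With your $c$ one only gets $\sum_{i=0}^{t} c^{-i/2}=\Theta(k)$ (recall $t=\Theta(k\log n)$), and the final bound degrades to $O(k\sqrt{\log n})$, which is useless for distinguishing the two cases. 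The spectral gap of $G$ is genuinely $\Theta(1/k)$ --- that is the whole point of the construction --- so no amount of care about the bottom of the spectrum will fix this; the obstacle you flag (near $-d$) is not the real one.

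The paper's way around this is a coupling idea you are missing. Write each vertex of $G$ as $(v,q)$ with $v\in V_X$ and $q\in\{1,2\}$. A step of the walk in $G$ either stays put / crosses the matching (probability $1/4$) or takes an $X$-step (probability $3/4$); in either case the $V_X$-coordinate $v$ performs a step in $X'$ ($X$ with a self-loop at every vertex), \emph{regardless of what $q$ does}. Hence the $v$-marginal of the walk in $G$ is exactly the walk in the expander $X'$, whose second eigenvalue is bounded away from $1$ by an absolute constant. This yields an $\ell_\infty$ bound $\pi_{(l,q)}\le 2(c^{-i}+1/n)$ with $c>1$ absolute, and then Bernstein's inequality (using $\sum \pi=1$ and $\max\pi\le 2\alpha$) gives the stated deviation. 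Your Hoeffding-with-$\|q\|_2$ scheme would also work \emph{once you have this $\ell_\infty$ bound}, since $\|q\|_2^2\le \max_l \pi_l$; the essential missing ingredient is the projection to $X'$, not the choice of concentration inequality.
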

\begin{proof}
Note that $(\tfrac{1}{d}A)^i$ corresponds to the following random walk
on $G$. We index the vertices of $G$ as $(v,q)$ where $v\in V_X$ and
$q\in\{1,2\}$ depending whether it is in $X_1$ or $X_2$. Then the
random walk is equivalent to: with probability 1/4 we take a self-loop
or jump into the $X_{2-q}$ component, and with probability $3/4$ we
take a random step in the $X_q$ component. Thus, we can consider a
graph $X'$ to be the graph $X$ where each node has a self-loop, and
thus degree $d=4$. The variable $v$ does a random walk in $X'$,
independently of $q$, whereas $q$ does a more complex walk (depending
on $v$). For the graph $X$, for any starting vertex, the probability
that the random walk hits a particular node $l$ in $X'$ is at most
$\alpha=c^{-i}+ 1/n$. This is easy to note by observing that if
$\tfrac{1}{d}A_X=\sum_j \lambda_j u_ju_j^\tran $ is the spectral
decomposition of the random walk matrix corresponding to $X$ with
$\lambda_1=1$ and $c^{-1}\triangleq\max_{j\ge 2} |\lambda_j|<1-\Omega(1)$, then
$\|(\tfrac{1}{d}A_X)^ie_v-\tfrac{1}{n}{\allones}\|_\infty\le
\|(\tfrac{1}{d}A_X)^ie_v-\tfrac{1}{n}{\allones}\|_2= \|\sum_{j\ge2} u_j\cdot
\lambda_j^i\cdot \langle u_j,e_v\rangle\|_2\le \max_{j\ge2} |\lambda_j^i|=c^{-i}$.

Let $\pi_{(l,q)}$ be the probability that the random walk of length
$i$, starting at $u$, stops at vertex $(l,q)$. By the above analogy
between random walk in $A$ vs random walk in $X'$, we have that
$\pi_{(l,q)}\le 2\alpha$. 

We now use Bernstein inequality to argue about the concentration of
$\sum_{(l,q)} \pi_{(l,q)}b_{(l,q)}$. This is where the balanced and
unbalanced case will differ. In the balanced case $b_{(l,q)}$ are
essentially random $\pm 1$, although there's a minor dependence: each
side is precisely balanced to 0. Hence we apply the inequality for
each of the 4 parts $P_1\ldots P_4$ of the graph. In each of the
parts, the values $b_{(l,q)}$ are independent. Hence, over the 4
parts, we have that, for any $z>0$:
$$
\Pr_b\left[\left|\sum_{(l,q)} \pi_{(l,q)}b_{(l,q)}\right|\ge 4z\right]\le
\exp\left[\tfrac{-z^2/2}{\max_{(l,q)} \pi_{(l,q)}\cdot (1+z/3)}\right].
$$ 

We take $z^2=2\ln n\cdot 3\cdot 2\alpha$, and then the probability
becomes $\le 1/n^2$. In particular, we have that $z=O(\sqrt{\log
  n\cdot \alpha})=O(\sqrt{\log n}\cdot (c^{-i/2}+n^{-1/2}))$, and
the claim follows for the balanced case.

In the unbalanced case, we can also consider the 4 parts $P_1,\ldots
P_4$. For each part $j$, we define the random variable, depending on
the bias the values of $b$. Wlog, suppose the part has a bias $+p$
(i.e., $b_{(l,q)}=+1$ with probability $1/2+p/2$). Then define
$Y_{(l,q)}=\pi_{(l,q)}b_{(l,q)}-\pi_{(l,q)}\cdot p$. Note that
$\E[Y_{(l,q)}]=0$ and
$\E[[Y_{(l,q)}^2]=\pi_{(l,q)}^2-(p\cdot\pi_{(l,q)})^2=\pi_{(l,q)}^2\cdot
  (1-p^2)\le \pi_{(l,q)}^2$. Applying Bernstein's inequality similarly
  to before, we have that
$$
\Pr_b\left[\left|\sum_{(l,q)} \pi_{(l,q)}b_{(l,q)}\right|\ge 4p+4z\right]\le
\exp\left[\tfrac{-z^2/2}{\max_{(l,q)} \pi_{(l,q)}\cdot (1+z/3)}\right].
$$ 
There's $z=O(\sqrt{\log n\cdot \alpha})$ making the above probability
$\le 1/n^2$. This finishes the unbalanced case.
\end{proof}

We now complete the proof of the lemma, for the balanced and
unbalanced cases. In the balanced case, we use the claim to
conclude that, by union bound, with probability at least $1-1/n$,
$$
|x'_u|\le \sum_{i=0}^t O(\sqrt{\log
  n})\cdot(c^{-i/2}+n^{-1/2})\le O(\sqrt{\log n}\cdot (1+t/n^{1/2})).
$$

Since the same bound hold for $x'_v$, and since $t\le O(k\log n)$, we
have that $|x'_u-x'_v|\le O(\sqrt{\log n}\cdot (1+k\log
n/n^{1/2}))\le O(\sqrt{\log n})$ for $k<O(n^{1/4}/\log n)$.

For the unbalanced case, we have, by union bound, with probability at
least $1-1/n$:
$$
|x'_u|\le \sum_{i=0}^t O(p+\sqrt{\log
  n})\cdot(c^{-i/2}+n^{-1/2})\le O(tp+\sqrt{\log n}).
$$
Replacing $t=O(k\log n)$ completes the unbalanced case.
\end{proof}

\section{An SDD Solver}
\label{sec:SDD}

In this section we prove the following theorem
for solving linear systems in SDD matrices. 
To generalize from Laplacianss of regular graphs to SDD matrices, 
we face several issues as described in Section~\ref{subsec:techniques}.
We use the notation defined in \eqref{eq:tildeS}-\eqref{eqn:normalizedSsol}. 

\begin{thm}[SDD Solver]
\label{thm:SDD_solver_intro}
There exists a randomized algorithm,
that given input $\big\langle S,b,u,\epsilon,\tilde{\lambda}_{up}\big\rangle$,
where
\begin{itemize} \compactify
\item $S\in\mathbb{R}^{n\times n}$ is an SDD matrix,
\item $b\in\mathbb{R}^{n}$ is in the range of $S$ (equivalently, orthogonal
to the kernel of $S$),
\item $u\in[n]$, $\epsilon>0$, and
\item $\barkappa\ge1$ is an upper bound on the condition number $\kappa(\tilde S)$, 
\end{itemize}
this algorithm outputs $\hat{x}_{u}\in\mathbb{R}$ with the following guarantee.
Suppose $x^*$ is the solution for $Sx=b$ given in~\eqref{eqn:normalizedSsol}, 
then 
\[
  \forall u\in[n], \qquad
  \Pr\Big[|\hat{x}_{u}-x^*_u|\leq\epsilon||x^*||_{\infty}\Big]
  \geq 1-\frac{1}{s} 
\]
for suitable 
$s=O(\barkappa\log(\epsilon^{-1} \barkappa ||b||_{0} \cdot \frac{\max_{i\in[n]}D_{ii}}{\min_{i\in[n]}D_{ii}}))$.
The algorithm runs in time $O(f\epsilon^{-2}s^{3}\log s)$, 
where $f$ is the time to make a step 
in a random walk in the weighted graph formed by the non-zeros of $S$.
\end{thm}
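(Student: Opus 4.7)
The plan is to reduce to an approximate series expansion of $\tilde S^+$ and simulate it via appropriately defined random walks, then convert the resulting error bound on $e_u^\tran \tilde S^+ D^{-1/2}b$ into one on $x^*_u = D_{uu}^{-1/2}\, e_u^\tran \tilde S^+ D^{-1/2}b$. Since $\tilde S$ has all diagonal entries equal to $1$ and is SDD, its off-diagonal magnitudes in each row sum to at most $1$, and its nonzero eigenvalues lie in $(0,2]$. Following the idea used at the end of Section~\ref{sec:Laplacian} to handle negative eigenvalues, I work with $Q \eqdef I - \tilde S/2$, whose eigenvalues on $\range(\tilde S)$ lie in $[0,1)$; in particular $\|Q|_{\range(\tilde S)}\| \le 1 - 1/\barkappa$. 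Because $y \eqdef D^{-1/2}b$ is orthogonal to $\ker(\tilde S) = D^{1/2}\ker(S)$ (which follows from $b\perp\ker(S)$), the identity $\tilde S^+ y = \tfrac12 \sum_{t=0}^{\infty} Q^t y$ holds and motivates estimating the truncated sum $\tfrac12\sum_{t=0}^{s-1} e_u^\tran Q^t y$.

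The key step is to express each $e_u^\tran Q^t y$ as the expectation of a signed random walk that may terminate early. From a current vertex $i$, transition to each $j$ with probability $|Q_{ij}|$ (including a self-loop with probability $Q_{ii}=\tfrac12$), record the accumulated sign $\prod_k \sgn(Q_{w_kw_{k+1}})$, and otherwise terminate with the remaining probability $1-\sum_j |Q_{ij}| \ge 0$. If the walk survives $t$ steps and ends at $v_t$, output $(\text{sign product})\cdot y_{v_t}$, and otherwise $0$; this is an unbiased, magnitude-$\|y\|_\infty$-bounded estimator of $e_u^\tran Q^t y$. As in Claim~\ref{claim:Laplacian_power_approximation}, $\ell = O((s/\epsilon)^2 \log s)$ independent walks per value of $t$ combined with Hoeffding and a union bound over $t=0,\dots,s-1$ give an overall estimate $\hat Y$ satisfying $|\hat Y - \sum_{t<s} e_u^\tran Q^t y| \le \tfrac{\epsilon}{2} \|y\|_\infty$ with probability $\ge 1-1/s$. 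The tail is bounded, in analogy with Claim~\ref{claim:Laplacian_tail_bound}, by decomposing $y$ along the eigenbasis of $\tilde S$ and using $\|Q|_{\range(\tilde S)}\|^s \le (1-1/\barkappa)^s$, yielding $\bigl|\tfrac12\sum_{t\ge s} e_u^\tran Q^t y\bigr| \le \tfrac{\epsilon}{2}\|y\|_\infty$ once $s = \Omega(\barkappa \log(\epsilon^{-1}\barkappa\sqrt{\|b\|_0}))$.

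Finally, to turn this into an $\epsilon\|x^*\|_\infty$ bound on $|\hat x_u - x^*_u|$, set $\hat x_u \eqdef D_{uu}^{-1/2}\hat Y/2$ and observe that the above yields $|\hat x_u - x^*_u| \le \epsilon\, D_{uu}^{-1/2}\|y\|_\infty$. From $Sx^*=b$ and SDD, $|b_i| \le 2 D_{ii}\|x^*\|_\infty$, so $\|y\|_\infty = \max_i |b_i|/\sqrt{D_{ii}} \le 2\sqrt{\max_i D_{ii}}\,\|x^*\|_\infty$, giving the loss factor $\sqrt{\max_i D_{ii}/\min_i D_{ii}}$ that, after rescaling $\epsilon$ by this factor, is exactly what is absorbed into the $\log\bigl(\max_i D_{ii}/\min_i D_{ii}\bigr)$ term in $s$. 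The runtime is then $O(s\cdot\ell\cdot f) = O(f\epsilon^{-2}s^3\log s)$ by incrementally extending walks step-by-step, where $f$ is the cost of one sign-weighted random-walk step in the graph of $S$.

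The main obstacle is the signed-walk analysis combined with early termination: one has to confirm that the estimator remains unbiased even when walks may die before length $t$ (the terminated walks contributing $0$ is consistent with the sub-stochasticity of $|Q|$), and that the spectral tail bound still applies on $\range(\tilde S)$ despite the mixed signs and self-loops in $Q$. The remaining pieces, namely projecting onto $\range(\tilde S)$ and bounding $\|b\|_\infty$ via diagonal dominance, follow the Laplacian template but require keeping careful track of the $D^{\pm 1/2}$ scalings throughout.
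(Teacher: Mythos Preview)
Your overall plan is the paper's plan, but one step breaks and another step loses a polynomial factor.

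\textbf{The walk on $Q$ is not well-defined.} You assert that ``$\tilde S$ has all diagonal entries equal to $1$ and is SDD,'' and then use $1-\sum_j|Q_{ij}|\ge 0$ as a termination probability. The diagonal claim is true, but $\tilde S$ need \emph{not} be SDD. For instance, for
\[
  S=\begin{pmatrix}3&-1&-1\\-1&1&0\\-1&0&1\end{pmatrix},
  \qquad D=\diag(3,1,1),
\]
row $1$ of $\tilde S=D^{-1/2}SD^{-1/2}$ has off-diagonal magnitudes summing to $2/\sqrt{3}>1$, so $\sum_j|Q_{1j}|=1/2+1/\sqrt{3}>1$ and your ``probabilities'' do not form a sub-distribution. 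The paper sidesteps this by conjugating: it uses the identity $D^{-1/2}\tilde B^t D^{-1/2}=\bigl(\tfrac{D^{-1}A+I}{2}\bigr)^t D^{-1}$, so the walk is driven by $\tfrac{D^{-1}A+I}{2}$, which \emph{is} row-substochastic in absolute value directly from SDD of $S$. The walk then reads $b_v/d_v$ at the endpoint (not $y_v=b_v/\sqrt{d_v}$), and no extra $D_{uu}^{-1/2}$ is applied at the end.

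\textbf{The error conversion costs a polynomial, not a log.} Even granting the walk, your Hoeffding step yields error $\epsilon'\|y\|_\infty$ with $\ell=O((s/\epsilon')^2\log s)$, and then you multiply by $D_{uu}^{-1/2}$. To reach $\epsilon\|x^*\|_\infty$ you must take $\epsilon'\le \epsilon\cdot\Theta\bigl(\sqrt{d_{\min}/d_{\max}}\bigr)$, which inflates $\ell$ (and hence the runtime) by a multiplicative $d_{\max}/d_{\min}$, not just its logarithm; so ``absorbed into the $\log(\cdot)$ term in $s$'' is incorrect. The paper avoids this entirely: because the walk outputs values bounded by $\|D^{-1}b\|_\infty$, and $\|D^{-1}b\|_\infty\le 2\|x^*\|_\infty$ follows straight from SDD, the Hoeffding scale matches the target scale with no loss. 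The $d_{\max}/d_{\min}$ in the theorem's $s$ comes only from the spectral tail bound (Claim~\ref{claim:SDD_tail_bound}), where one passes from $\|D^{-1/2}b\|_2$ to $\|D^{-1}b\|_2$.
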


Given an SDD matrix $S\in\mathbb{R}^{n\times n}$,
we may assume that $S_{ii}>0$ for every $i$ (as otherwise
the entire $i$-th row and column are zero and can be safely ignored).
Recall $D = \diag(S_{11},\ldots,S_{nn})$, and define $A\eqdef D-S$. 
Let $\tilde{A}\eqdef D^{-1/2}AD^{-1/2}$
(for intuition, this is the normalized adjacency matrix when $S$
is a Laplacian) and recall $\tilde{S} = D^{-1/2}SD^{-1/2}=I-\tilde{A}$
(the normalized Laplacian, respectively). For an eigenvalue $\mu$
of $\tilde{A}$, let $E_{\mu}(\tilde{A})\eqdef\{x:\tilde{A}x=\mu x\}$.
Observe that $\tilde{A}\preceq I\iff A\preceq D\iff S\succeq0$; 
recalling that $S$ is SDD, we conclude that $\tilde{A}\preceq I$. Moreover,
\[
  \tilde{A}x=x
  \iff D^{-1/2}(D-S)D^{-1/2}x=x
  \iff D^{-1/2}SD^{-1/2}x=0
  \iff SD^{-1/2}x=0,
\]
so $E_{1}(\tilde{A})=D^{1/2}\cdot\mbox{ker}(S)$. Observe that $\tilde{A}\succeq-I\iff A\succeq-D\iff D+A\succeq0$;
Since $D+A$ is SDD, we conclude that $\tilde{A}\succeq-I$. 
Let $1\geq\tilde{\lambda}_{1}\geq\tilde{\lambda}_{2}\geq \cdots \geq\tilde{\lambda}_{n}\geq-1$
be $\tilde{A}$'s eigenvalues with associated orthonormal eigenvectors
$\tilde{u}_{1},\ldots,\tilde{u}_{n}$ (note that $\tilde{A}$ is symmetric).
We can write $\tilde{A}=\tilde{U}\tilde{\Lambda}\tilde{U}^\tran $ where
$\tilde{U}=[\tilde{u}_{1}\ \tilde{u}_{2}\ldots\tilde{u}_{n}]$ is unitary
and $\tilde{\Lambda}=\diag(\tilde{\lambda}_{1},\ldots,\tilde{\lambda}_{n})$.
Note that $\tilde{S}=\tilde{U}(I-\tilde{\Lambda})\tilde{U}^\tran $,
and that $\tilde{S}^{+}=\tilde{U}(\frac{1}{1-\tilde{\lambda}_{1}},\ldots,\frac{1}{1-\tilde{\lambda}_{n}})\tilde{U}^\tran $
where by convention $\frac{1}{0}$ stands for $0$.
Let $d_{i}\eqdef D_{ii}$,
$d_{max}\eqdef\max_{i\in[n]}d_{i}$, and $d_{min}\eqdef\min_{i\in[n]}d_{i}$.
Let $\tilde{B}\eqdef\frac{\tilde{A}+I}{2}$. Note that $\tilde{B}=\tilde{U}\frac{\tilde{\Lambda}+I}{2}\tilde{U}^\tran $, and that the eigenvalues of $\tilde{B}$ are in $[0,1]$.
Let $\tilde{\lambda}\eqdef\max\{\frac{\tilde{\lambda}_{i}+1}{2}:i\in[n],\tilde{\lambda}_{i}<1\}$ (the largest non-one eigenvalue of $\tilde{B}$).
We now describe an algorithm that on input $b\in\mathbb{R}^{n}$ that
is in the range of $S$ (equivalently, is orthogonal to the kernel
of $S$), and $u\in[n]$, returns an approximation $\hat{x}_{u}$
to $x^*_u$, where $x^*=D^{-1/2}\tilde{S}^+D^{-1/2}b$ is the solution for $Sx=b$ given in \eqref{eqn:normalizedSsol}. %

\begin{algorithm}
\caption{$\hat{x}_{u}=\mbox{SolveLinearSDD}(S,b,||b||_{0},u,\epsilon,\tilde{\lambda})$}
\label{alg:SDDsolver}

\begin{enumerate} \compactify
\item Set
$
s=\log_{1/\tilde{\lambda}}(2\epsilon^{-1}(1-\tilde{\lambda})^{-1}\sqrt{||b||_{0}}\cdot\sqrt{{d_{max}}/{d_{min}}}),
$
and $\ell=O((\frac{\epsilon}{2s})^{-2}\log s)$.
\item For $t=0,1,\ldots,s-1$ do

\begin{enumerate} \compactify
\item Perform $\ell$ independent (lazy) random walks of length $t$ starting at
$u$, where in one step from vertex $v$, 
the walk stays put with probability $\frac{1}{2}$,
moves to $v'$ with probability $\frac{|A_{v v'}|}{2d_{v}}$, 
and terminates with the remaining probability.

For each walk $i\in[\ell]$, 
let $u_{i}^{(t)}$ be the vertex where the walk ended, 
and let $\sigma_{i}^{(t)}$ be the product of the signs along the walk
where stay-put steps have sign $1$ and others have $\sgn(A_{v v'})$. 
Formally, if the walk consists of $u=u_{0},u_{1},..,u_{t}$
then $\sigma_{i}^{(t)}=\prod_{j\in[t]}\sgn((D^{-1}A+I)_{u_{j-1},u_{j}})$
and if it terminated earlier then $\sigma_{i}^{(t)}=0$.

\item Set $\hat{x}_{u}^{(t)}=\frac{1}{\ell}\sum_{i\in[\ell]}
\sigma_{i}^{(t)}\frac{b_{u_{i}^{(t)}}}{d_{u_{i}^{(t)}}}$.
\end{enumerate}
\item Return $\hat{x}_{u}=\frac{1}{2}\sum_{t=0}^{s-1}\hat{x}_{u}^{(t)}$.
\end{enumerate}
\end{algorithm}

We now prove that Algorithm~\ref{alg:SDDsolver} indeed provides a good approximation.
Note that $b$ is orthogonal to $\mbox{ker}(S)$ iff
$D^{-1/2}b$ is orthogonal to $D^{1/2}\cdot\mbox{ker}(S)
=E_{1}(\tilde{A})=E_{1}(\tilde{B})$.

\begin{claim}
\label{claim:SDD_tail_bound}For $b$ that is orthogonal to the kernel
of $S$, 
\begin{equation}
\Big|x^*_u-\frac{1}{2}e_{u}^\tran D^{-1/2}\sum_{t=0}^{s-1}\tilde{B}^{t}D^{-1/2}b\Big|\leq\frac{\epsilon}{4}||D^{-1}b||_{\infty}.\label{eq:1}
\end{equation}
\end{claim}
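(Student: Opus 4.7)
The plan is to mimic the Laplacian proof of Claim~\ref{claim:Laplacian_tail_bound}, but working with $\tilde{B}=\tfrac12(\tilde{A}+I)$ in place of $\tfrac1dA$ and with the normalized system. The starting point is the analog of Fact~\ref{fact:inverse_power_series}: since $\tilde{S}=I-\tilde{A}=2(I-\tilde{B})$ and $\tilde{B}=\tilde{U}\tfrac{\tilde{\Lambda}+I}{2}\tilde{U}^\tran$ has all eigenvalues in $[0,1]$, with $1$-eigenspace exactly $E_1(\tilde{A})=D^{1/2}\ker(S)$, I can invert on the complement of $E_1(\tilde{B})$ using the geometric series to get
\[
  \tilde{S}^+ \,y \;=\; \tfrac12 \sum_{t=0}^{\infty} \tilde{B}^{t}\, y
  \qquad \text{for every } y \perp E_1(\tilde B).
\]
Since $b\perp\ker(S)$, the vector $y\eqdef D^{-1/2}b$ satisfies $y\perp D^{1/2}\ker(S)=E_1(\tilde B)$, so this identity applies and gives $x^*=D^{-1/2}\tilde S^+ D^{-1/2}b=\tfrac12 D^{-1/2}\sum_{t=0}^\infty \tilde B^t D^{-1/2}b$. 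Consequently the quantity in~(\ref{eq:1}) equals $\tfrac12\,|e_u^\tran D^{-1/2} \sum_{t=s}^\infty \tilde B^t D^{-1/2}b|$, and it only remains to bound this truncation tail.

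Next I would apply Cauchy--Schwarz with $\|D^{-1/2}e_u\|_2 = d_u^{-1/2} \le d_{\min}^{-1/2}$, reducing the task to bounding $\bigl\|\sum_{t\ge s}\tilde B^t D^{-1/2}b\bigr\|_2$. Expanding $D^{-1/2}b=\sum_i c_i \tilde u_i$ in $\tilde B$'s eigenbasis, where $c_i=0$ whenever the corresponding eigenvalue of $\tilde B$ equals $1$, orthogonality of the $\tilde u_i$'s yields
\[
  \Big\|\sum_{t=s}^\infty \tilde B^t D^{-1/2}b\Big\|_2^{2}
  \;=\; \sum_{i:\,\tilde\lambda^B_i<1} c_i^2 \Big(\tfrac{(\tilde\lambda^B_i)^s}{1-\tilde\lambda^B_i}\Big)^{\!2}
  \;\le\; \Big(\tfrac{\tilde\lambda^s}{1-\tilde\lambda}\Big)^{\!2} \|D^{-1/2}b\|_2^2,
\]
using the definition of $\tilde\lambda$ as the largest non-one eigenvalue of $\tilde B$ (this is where the spectral gap of $\tilde B$ enters).

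The final ingredient is to convert $\|D^{-1/2}b\|_2$ into the target norm $\|D^{-1}b\|_\infty$. Writing $\|D^{-1/2}b\|_2^2=\sum_i (b_i/d_i)^2 d_i$ and bounding each $(b_i/d_i)^2\le \|D^{-1}b\|_\infty^2$ while restricting the sum to $\supp(b)$, I get
\[
  \|D^{-1/2}b\|_2 \;\le\; \|D^{-1}b\|_\infty \sqrt{\|b\|_0\,d_{\max}} .
\]
Combining everything, the tail error is at most $\tfrac12\cdot d_{\min}^{-1/2}\cdot\tfrac{\tilde\lambda^s}{1-\tilde\lambda}\cdot\|D^{-1}b\|_\infty\sqrt{\|b\|_0\,d_{\max}}$, and plugging in the choice of $s$ specified by Algorithm~\ref{alg:SDDsolver} makes the prefactor exactly $\epsilon/4$, yielding~(\ref{eq:1}).

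The only subtle step is the very first one: verifying the power-series identity on the correct subspace and confirming that the hypothesis $b\perp\ker(S)$ really translates, via the normalization $D^{-1/2}$, into $D^{-1/2}b\perp E_1(\tilde B)$. Once this equivalence $E_1(\tilde B)=E_1(\tilde A)=D^{1/2}\ker(S)$ is in place (a short calculation using $\tilde A=D^{-1/2}(D-S)D^{-1/2}$ and $\tilde B=\tfrac12(\tilde A+I)$), the rest is a direct $L_2$-tail estimate mirroring the Laplacian proof, with $d_{\max}/d_{\min}$ appearing to compensate for the irregularity of $S$.
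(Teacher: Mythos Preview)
Your proof is correct and follows essentially the same approach as the paper's own proof: identify $\tilde S^{+}D^{-1/2}b=\tfrac12\sum_{t\ge 0}\tilde B^t D^{-1/2}b$ on the subspace orthogonal to $E_1(\tilde B)$, reduce to a tail bound, apply Cauchy--Schwarz with $\|D^{-1/2}e_u\|_2\le d_{\min}^{-1/2}$, control the tail by $\tfrac{\tilde\lambda^{s}}{1-\tilde\lambda}\|D^{-1/2}b\|_2$, and then convert $\|D^{-1/2}b\|_2$ into $\sqrt{d_{\max}\,\|b\|_0}\,\|D^{-1}b\|_\infty$. The only cosmetic difference is that you bound $\|\sum_{t\ge s}\tilde B^t D^{-1/2}b\|_2$ via the eigenbasis (as in Claim~\ref{claim:Laplacian_tail_bound}) whereas the paper first applies the triangle inequality $\sum_{t\ge s}\tilde\lambda^t$; both yield the same expression.
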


\begin{proof}
Observe that
\[
  2\tilde{S}^+
  =\Big(\frac{I-\tilde{A}}{2}\Big)^+
  =\Big(I-\frac{\tilde{A}+I}{2}\Big)^+
  =(I-\tilde{B})^+.
\]
Thus, as $D^{-1/2}b$
is in the span of eigenvectors of $\tilde{B}$ with associated eigenvalues
in $[0,1)$, using the same idea as in Fact \ref{fact:inverse_power_series}
we get that 
\[
\tilde{S}^{+}D^{-1/2}b=\frac{1}{2}\sum_{t=0}^{\infty}\tilde{B}^{t}D^{-1/2}b ,
\]
and hence (recall $x^*_u=e_u^\tran D^{-1/2}\tilde{S}^{+}D^{-1/2}b$)
\[
x^*_u-\frac{1}{2}e_{u}^\tran D^{-1/2}\sum_{t=0}^{s-1}\tilde{B}^{t}D^{-1/2}b
=\frac{1}{2}e_{u}^\tran D^{-1/2}\sum_{t=s}^{\infty}\tilde{B}^{t}D^{-1/2}b.
\]
Similarly to the proof of Claim \ref{claim:Laplacian_tail_bound}, 
we now get
that 
\begin{eqnarray*}
|\frac{1}{2}e_{u}^\tran D^{-1/2}\sum_{t=s}^{\infty}\tilde{B}^{t}D^{-1/2}b| & \leq & \frac{1}{2\sqrt{d_{u}}}||\sum_{t=s}^{\infty}\tilde{B}^{t}D^{-1/2}b||_{2}\\
 &  & \leq\frac{1}{2\sqrt{d_{min}}}\sum_{t=s}^{\infty}\tilde{\lambda}^{t}||D^{-1/2}b||_{2}\\
 &  & \leq\frac{1}{2\sqrt{d_{min}}}\cdot\frac{\tilde{\lambda}^{s}}{1-\tilde{\lambda}}||D^{-1/2}b||_{2}\\
 &  & \leq\frac{1}{2}\sqrt{\frac{d_{max}}{d_{min}}}\cdot\frac{\tilde{\lambda}^{s}}{1-\tilde{\lambda}}||D^{-1}b||_{2}\\
 &  & \leq\frac{1}{2}\sqrt{\frac{d_{max}}{d_{min}}}\cdot\frac{\tilde{\lambda}^{s}}{1-\tilde{\lambda}}\sqrt{||b||_{0}}\cdot||D^{-1}b||_{\infty}=\frac{\epsilon}{4}||D^{-1}b||_{\infty}.
\end{eqnarray*}

\end{proof}

\begin{claim}
\label{claim:SDD_power_approximation}With probability at least $1-\frac{1}{s}$,
\[
\Big|\hat{x}_{u}-\frac{1}{2}e_{u}^\tran D^{-1/2}\sum_{t=0}^{s-1}\tilde{B}^{t}D^{-1/2}b\Big|\leq\frac{\epsilon}{4}||D^{-1}b||_{\infty}.
\]
\end{claim}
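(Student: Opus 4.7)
The plan is to mirror the proof of Claim~\ref{claim:Laplacian_power_approximation}, establishing first that each sample is an unbiased estimator of the desired matrix-vector product, and then invoking Hoeffding together with a union bound over $t=0,\ldots,s-1$.

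First I would set $P \eqdef \tfrac{1}{2}(I + D^{-1}A)$ and verify the similarity $D^{-1/2}\tilde{B}D^{1/2} = P$, so that $\tilde{B}^t = D^{1/2}P^tD^{-1/2}$ and
\[
  e_u^\tran D^{-1/2}\tilde{B}^t D^{-1/2}b \;=\; e_u^\tran P^t D^{-1}b.
\]
The key point is that each entry of $P$ factors as $P_{vv'} = \sgn((D^{-1}A+I)_{vv'})\cdot |P_{vv'}|$, where $|P_{vv'}|$ equals the transition probability prescribed by the algorithm ($|P_{vv}|=\tfrac12$ is the stay-put mass and $|P_{vv'}|=|A_{vv'}|/(2d_v)$ otherwise), and the missing mass $1-\sum_{v'}|P_{vv'}|$ is the per-step termination probability, which is nonnegative precisely because $S$ is SDD. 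Expanding $(P^t)_{uw} = \sum_{\text{walks}} \prod_j P_{u_{j-1}u_j}$ over length-$t$ sequences $u=u_0,\ldots,u_t=w$ and grouping signs and magnitudes, the per-sample variable $X_i^{(t)} \eqdef \sigma_i^{(t)} b_{u_i^{(t)}}/d_{u_i^{(t)}}$ (taken as $0$ when the walk terminates early, since $\sigma_i^{(t)}=0$ in that case) therefore satisfies $\E[X_i^{(t)}] = e_u^\tran P^t D^{-1}b$.

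Second, $|X_i^{(t)}| \leq \|D^{-1}b\|_\infty$ since $|\sigma_i^{(t)}|\leq 1$. Hoeffding's inequality together with the choice $\ell = O((\epsilon/(2s))^{-2}\log s)$ yields, for each fixed $t$,
\[
  \Pr\Big[\big|\hat{x}_u^{(t)} - e_u^\tran D^{-1/2}\tilde{B}^t D^{-1/2}b\big| > \tfrac{\epsilon}{2s}\|D^{-1}b\|_\infty\Big] \;\leq\; 1/s^2.
\]
A union bound over the $s$ values of $t$, followed by the triangle inequality applied to $\hat{x}_u = \tfrac12\sum_t \hat{x}_u^{(t)}$, yields the claim with total deviation at most $\tfrac{1}{2}\cdot s\cdot \tfrac{\epsilon}{2s}\|D^{-1}b\|_\infty = \tfrac{\epsilon}{4}\|D^{-1}b\|_\infty$ and total failure probability at most $1/s$.

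The main obstacle is getting the unbiasedness right in the presence of both signed entries of $A$ and early termination of walks. The crucial accounting trick is that early-terminated walks contribute exactly $0$ to each sample (via $\sigma_i^{(t)}=0$), which is equivalent to weighting each surviving length-$t$ walk by exactly $\prod_j |P_{u_{j-1}u_j}|$; no separate correction for termination probabilities is needed. Once the identity $\E[X_i^{(t)}] = e_u^\tran P^t D^{-1}b$ is in hand, the concentration/union-bound step is entirely routine and parallels Claim~\ref{claim:Laplacian_power_approximation}.
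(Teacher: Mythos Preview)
Your proposal is correct and follows essentially the same approach as the paper: establish the similarity $D^{-1/2}\tilde{B}^{t}D^{-1/2}=P^{t}D^{-1}$ with $P=\tfrac{1}{2}(I+D^{-1}A)$, verify unbiasedness of each sample, then apply Hoeffding plus a union bound over $t$. If anything, your treatment of the unbiasedness step (the sign/magnitude factoring and the observation that $\sigma_i^{(t)}=0$ on early termination automatically accounts for the missing mass) is more explicit than the paper's, which simply asserts the expectation identity ``by induction''.
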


\begin{proof}
Recalling that $\tilde{A}=D^{-1/2}AD^{-1/2}$, we can write
\[
  D^{-1/2}\tilde{B}^{t} D^{-1/2}
  = D^{-1/2} \Big(\frac{\tilde A+I}{2}\Big)^t D^{-1/2}
  = D^{-1/2}\Big(D^{1/2} \frac{D^{-1}A+I}{2} D^{-1/2}\Big)^t D^{-1/2}
  =\Big(\frac{D^{-1}A+I}{2}\Big)^tD^{-1}.
\]
Hence (by induction), for every $t\in\{0,1,\ldots,s-1\}$ and $i\in[\ell]$,
we have 
\[
  e_{u}^\tran D^{-1/2}\tilde{B}^{t}D^{-1/2}b
  =e_u^\tran \Big(\frac{D^{-1}A+I}{2}\Big)^tD^{-1}b
  =\mathbb{E}[\sigma_{i}^{(t)}\frac{b_{u_{i}^{(t)}}}{d_{u_{i}^{(t)}}}].
\]
By a union bound over Hoeffding bounds (as $|\sigma_{i}^{(t)}\frac{b_{u_{i}^{(t)}}}{d_{u_{i}^{(t)}}}|\leq||D^{-1}b||_{\infty}$),
with probability at least $1-\frac{1}{s}$, for every $t\in\{0,1,\ldots,s-1\}$,
\[
  \Big|\frac{1}{\ell}\sum_{i\in[\ell]}
    \sigma_{i}^{(t)}\frac{b_{u_{i}^{(t)}}}{d_{u_{i}^{(t)}}}
      -e_{u}^\tran D^{-1/2}\tilde{B}^{t}D^{-1/2}b
    \Big|
  \leq
  \frac{\epsilon}{2s}||D^{-1}b||_{\infty},
\]
which implies that 
\[
  \Big|\frac{1}{2}\sum_{t=0}^{s-1}\hat{x}_{u}^{(t)}-
  \frac{1}{2}e_{u}^\tran D^{-1/2}\sum_{t=0}^{s-1}\tilde{B}^{t}D^{-1/2}b
  \Big|
  \leq
  \frac{\epsilon}{4}||D^{-1}b||_{\infty}.
\]

\end{proof}

Combining Claim \ref{claim:SDD_tail_bound} and Claim \ref{claim:SDD_power_approximation}
we get that (with probability $1-\frac{1}{s}$)
$|\hat{x}_{u}-x^*_u|\leq\frac{\epsilon}{2}||D^{-1}b||_{\infty}$.
Now, letting $x$ denote any solution to the system $Sx=b$,
for every $i\in[n]$ we have
\[
\frac{|b_{i}|}{d_{i}}=\frac{|\sum_{j\in[n]}S_{ij}x_{j}|}{d_{i}}\leq\frac{\sum_{j\in[n]}|S_{ij}|\cdot||x||_{\infty}}{d_{i}}\leq\frac{2d_{i}||x||_{\infty}}{d_{i}}=2||x||_{\infty}
\]
where the last inequality is because $S$ is SDD. Therefore, $||D^{-1}b||_{\infty}\leq2||x||_{\infty}$,
and we conclude that (with probability $1-\frac{1}{s}$)
$|\hat{x}_{u}-x_u^*|\leq\epsilon||x||_{\infty}$
for every solution $x$ to the system $Sx=b$ (and in particular for $x^*$).
We now turn to the runtime
of Algorithm~\ref{alg:SDDsolver}, which is dominated by the time it takes to perform
the random walks. There are $s\cdot\ell$ random walks in total. Let
$f$ be the time it takes to make a single step in the random walks
of Algorithm~\ref{alg:SDDsolver} (it depends on the access method/representation of
$S$ and/or its sparsity). The random walks do not need
to be independent for different values of $t$ (as we applied a union
bound over the different $t$), we can extend, at each iteration
$t$, the $\ell$ respective random walks constructed at iteration
$t-1$ by an extra step in time $f$, obtaining a total runtime
$O(s\cdot\ell\cdot f)=O(f\epsilon^{-2}s^{3}\log s)$.
We conclude the following.

\begin{thm} \label{thm:SDD_solver}
Given access to an SDD matrix $S\in\mathbb{R}^{n\times n}$, $b\in\mathbb{R}^{n}$
that is orthogonal to the kernel of $S$, $||b||_{0}$, $u\in[n]$,
$\epsilon>0$, and $\tilde{\lambda}=\max\{\frac{\tilde{\lambda}_{i}+1}{2}:i\in[n],\tilde{\lambda}_{i}<1\}$,
with probability at least $1-\frac{1}{s}$, Algorithm~\ref{alg:SDDsolver} outputs a
value $\hat{x}_{u}\in\mathbb{R}$ such that
$|\hat{x}_{u}-x_u^*|\leq\epsilon||x||_{\infty}$
for every solution $x$ to the system $Sx=b$ (and in particular for $x^*$).
Algorithm~\ref{alg:SDDsolver} runs in time $O(f\epsilon^{-2}s^{3}\log s)$
where $f$ is the worst-case time to make a step in a random
walk in the weighted graph formed by the non-zeros of $S$, and $s=O(\log_{1/\tilde{\lambda}}(\epsilon^{-1}(1-\tilde{\lambda})^{-1}||b||_{0}\cdot\frac{d_{max}}{d_{min}}))$.
\end{thm}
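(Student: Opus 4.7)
The plan is to mirror the Laplacian solver argument (Theorem~\ref{thm:Laplacian_solver}), adapting it to SDD matrices via the normalization in~\eqref{eq:tildeS}. Writing $A \eqdef D - S$ and $\tilde{A} \eqdef D^{-1/2}AD^{-1/2}$, one has $\tilde{S} = I - \tilde{A}$; I would then set $\tilde{B} \eqdef (\tilde{A}+I)/2$ so that all eigenvalues of $\tilde{B}$ lie in $[0,1]$ (the upper bound uses $S \succeq 0$, the lower bound uses that $D+A$ is SDD). Since $I - \tilde{B} = \tilde{S}/2$, we have $\tilde{S}^{+} = \tfrac{1}{2}(I-\tilde{B})^{+}$, and the analogue of Fact~\ref{fact:inverse_power_series} applied to $\tilde{B}$ on the orthogonal complement of its $1$-eigenspace yields
\[
  \tilde{S}^{+}D^{-1/2}b = \tfrac{1}{2}\sum_{t=0}^{\infty}\tilde{B}^{t}D^{-1/2}b,
\]
where the hypothesis $b \perp \ker(S)$ places $D^{-1/2}b$ in the relevant subspace (using $\ker(\tilde{S}) = D^{1/2}\ker(S)$).

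Next I would split the error into a deterministic tail and a sampling error. The tail $\tfrac{1}{2} e_u^\tran D^{-1/2}\sum_{t \ge s}\tilde{B}^{t}D^{-1/2}b$ is bounded, exactly as in Claim~\ref{claim:Laplacian_tail_bound}, by $\tfrac{\tilde{\lambda}^{s}}{2(1-\tilde{\lambda})}\cdot\tfrac{1}{\sqrt{d_{\min}}}\|D^{-1/2}b\|_{2}$; after the passage $\|D^{-1/2}b\|_{2} \le \sqrt{d_{\max}/d_{\min}}\cdot\sqrt{\|b\|_{0}}\cdot\|D^{-1}b\|_{\infty}$, the choice of $s$ in Algorithm~\ref{alg:SDDsolver} renders it at most $\tfrac{\epsilon}{4}\|D^{-1}b\|_{\infty}$.

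For the sampling error, the key new ingredient is the matrix identity
\[
  D^{-1/2}\tilde{B}^{t}D^{-1/2} = \Big(\tfrac{D^{-1}A+I}{2}\Big)^{t}D^{-1},
\]
which follows by peeling one factor at a time from $\tilde{B} = D^{-1/2}\tfrac{A+D}{2}D^{-1/2}$. Because $S$ is SDD, each row of $(D^{-1}A+I)/2$ has absolute-value sum at most $1$, so this matrix can be realized as a signed sub-stochastic kernel: the diagonal $1/2$ gives the ``stay put'' probability, the off-diagonal $A_{vv'}/(2d_v)$ gives transition probability $|A_{vv'}|/(2d_v)$ with recorded sign $\sgn(A_{vv'})$, and the slack $\tfrac12 - \sum_{v'\ne v}|A_{vv'}|/(2d_v) \ge 0$ terminates the walk (setting $\sigma_i^{(t)} = 0$). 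A direct computation then gives $\mathbb{E}\bigl[\sigma_i^{(t)}\,b_{u_i^{(t)}}/d_{u_i^{(t)}}\bigr] = e_u^\tran D^{-1/2}\tilde{B}^{t}D^{-1/2}b$, and since each summand is bounded by $\|D^{-1}b\|_{\infty}$, Hoeffding plus a union bound over $t \in \{0,\ldots,s-1\}$ controls the sampling error by $\tfrac{\epsilon}{4}\|D^{-1}b\|_{\infty}$ with failure probability at most $1/s$.

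Combining the two bounds yields $|\hat{x}_u - x^*_u| \le \tfrac{\epsilon}{2}\|D^{-1}b\|_{\infty}$ with probability $1 - 1/s$. To convert this into $\epsilon\|x^*\|_{\infty}$ I would use the SDD inequality $|b_i|/d_i = |\sum_j S_{ij}x^*_j|/d_i \le (\sum_j|S_{ij}|/d_i)\|x^*\|_{\infty} \le 2\|x^*\|_{\infty}$, so $\|D^{-1}b\|_{\infty} \le 2\|x^*\|_{\infty}$. The runtime $O(f\epsilon^{-2}s^{3}\log s)$ follows by extending the $\ell$ walks one step at a time across iterations, exactly as in the Laplacian case. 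The main obstacle I anticipate is the bookkeeping-heavy verification of the signed random-walk interpretation: signs must be tracked correctly when $S$ has positive off-diagonals (so that $A_{vv'}$ may be negative, breaking the plain-Laplacian intuition), the early-termination arising from strict diagonal dominance must be recorded via $\sigma_i^{(t)}=0$ so that terminated walks contribute exactly $0$ in expectation, and the translation between spectral statements about $\tilde{B}$ and random-walk statements about $(D^{-1}A+I)/2$ must keep the $D^{-1/2}$ factors consistent on both sides.
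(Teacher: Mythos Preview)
Your proposal is correct and follows essentially the same approach as the paper: the same normalization $\tilde B=(\tilde A+I)/2$, the same power-series/tail split (Claim~\ref{claim:SDD_tail_bound}), the same matrix identity $D^{-1/2}\tilde B^{t}D^{-1/2}=\bigl(\tfrac{D^{-1}A+I}{2}\bigr)^{t}D^{-1}$ and signed sub-stochastic random-walk interpretation for the sampling error (Claim~\ref{claim:SDD_power_approximation}), and the same SDD inequality $\|D^{-1}b\|_\infty\le 2\|x\|_\infty$ to finish. One small slip: your intermediate bound $\|D^{-1/2}b\|_{2}\le\sqrt{d_{\max}/d_{\min}}\sqrt{\|b\|_0}\,\|D^{-1}b\|_\infty$ is not quite right (and can fail when $d_{\min}>1$); the correct chain is $\|D^{-1/2}b\|_2\le\sqrt{d_{\max}}\,\|D^{-1}b\|_2\le\sqrt{d_{\max}}\sqrt{\|b\|_0}\,\|D^{-1}b\|_\infty$, with the remaining $1/\sqrt{d_{\min}}$ coming from $\|e_u^\tran D^{-1/2}\|$, which then matches the choice of $s$ in Algorithm~\ref{alg:SDDsolver}.
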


\begin{proof}[Proof of Theorem~\ref{thm:SDD_solver_intro}]
Recall $1\geq\tilde{\lambda}_{1}\geq\tilde{\lambda}_{2}\geq \cdots \geq\tilde{\lambda}_{n}\geq-1$
are the eigenvalues of $\tilde{A} = I-\tilde{S}$,
and that 
$\tilde{\lambda}=\max\{\frac{\tilde{\lambda}_{i}+1}{2}:i\in[n],\tilde{\lambda}_{i}<1\}$. 
Then the smallest non-zero eigenvalue of $\tilde S$ is
$1-\tilde{\lambda}_{i} = 1-(2\tilde\lambda-1) = 2(1-\tilde\lambda)>0$.
The largest eigenvalue of $\tilde S$ is $1-\tilde{\lambda}_{n}\in [1,2]$
where the lower bound is by the following argument 
(since all diagonal entries in $\tilde S$ are $1$, 
and the off-diagonal entries contribute 0 in expectation)
\[
  \max_{x\neq 0} \frac{x^\tran  \tilde S x}{x^\tran  x}
  \ge \EX_{x\in\set{\pm1}^n} \Big[ \frac{x^\tran  \tilde S x}{n} \Big]
  = 1 .
\]
Thus, $\kappa(\tilde S) = \Theta(\frac{1}{1-\tilde\lambda})$.
Using Fact~\ref{fact:inequality_e} with $\delta=1-\tilde{\lambda}$,
the expression above for $s$ becomes 
$s=O(\kappa(\tilde S) \log(\epsilon^{-1} \kappa(\tilde S) ||b||_{0} \cdot \frac{d_{max}}{d_{min}}))$.

The theorem follows by noting that the analysis
in Claims~\ref{claim:SDD_tail_bound} and~\ref{claim:SDD_power_approximation}
holds also when replacing $||b||_{0}$ by an upper bound $B_{up}\geq||b||_{0}$
and $\kappa(\tilde S)$ by an upper bound $\barkappa$
(or equivalently $\tilde{\lambda}$ by an upper bound $\tilde{\lambda}_{up}$). 
\end{proof}

\subsubsection*{Acknowledgments}
The authors thank anonymous reviewers for suggesting additional
relevant references. 

{\small 
\bibliographystyle{alphaurlinit}
\bibliography{robi}
}

\end{document}